\pgfplotsset{compat=1.8}
\def\tagform@#1{\maketag@@@{\ignorespaces#1\unskip\@@italiccorr}}
\let\orgtheequation\theequation
\def\theequation{(\orgtheequation)}
\def\equationautorefname~{}
\newtheorem{theorem}{Theorem}
\newaliascnt{lemma}{theorem}
\newtheorem{lemma}[lemma]{Lemma}
\newaliascnt{proposition}{theorem}
\newtheorem{proposition}[proposition]{Proposition}
\newaliascnt{corollary}{theorem}
\newaliascnt{conjecture}{theorem}
\newaliascnt{example}{theorem}
\theoremstyle{remark}
\newtheorem*{remark}{Remark}
\newaliascnt{definition}{theorem}
\newtheorem{definition}[definition]{Definition}
\newcommand{\RR}{{\mathbb{R}}}
\newcommand{\NN}{{\mathbb{N}}}
\newcommand{\fodd}{{f_\mathrm{odd}}}
\newcommand{\godd}{{g_\mathrm{odd}}}
\newcommand{\bodd}{{b_\mathrm{odd}}}
\newcommand{\feven}{{f_\mathrm{even}}}
\newcommand{\geven}{{g_\mathrm{even}}}
\newcommand{\beven}{{b_\mathrm{even}}}
\newcommand{\foddinv}{{f_{\mathrm{odd}}^{-1}}}
\newcommand{\feveninv}{{f_{\mathrm{even}}^{-1}}}
\newcommand{\goddinv}{{g_{\mathrm{odd}}^{-1}}}
\newcommand{\geveninv}{{g_{\mathrm{even}}^{-1}}}
\begin{document}

\title[Spectrum of Free rod]{Spectrum of the free rod under tension and compression}
\author{L. Mercredi Chasman}
\address{Division of Science \& Mathematics, University of Minnesota -- Morris, 
MN 56267, U.S.A.}
\email{chasmanm\@@morris.umn.edu}
\author{Jooyeon Chung}
\address{Department of Mathematics, University of Illinois, Urbana,
IL 61801, U.S.A.}
\email{jchung50\@@illinois.edu}
\date{\today}

\keywords{bi-Laplacian, cascading, avoided crossings, fourth order}
\subjclass[2010]{\text{Primary 34L15. Secondary 34L10, 74K10}}

\begin{abstract}
In this paper, we study the spectrum of the one-dimensional vibrating free rod equation $u^{(4)}-\tau u''=\mu u$ under tension $(\tau>0)$ or compression $(\tau<0)$. The eigenvalues $\mu$ as functions of the tension/compression parameter $\tau$ exhibit three distinct types of behavior. In particular, eigenvalue branches in the lower half-plane exhibit a cascading pattern of barely-avoided crossings. 

We provide a complete description of the eigenfunctions and eigenvalues by implicitly parameterizing the eigenvalue curves. We also establish properties of the eigenvalue curves such as monotonicity, crossings, asymptotic growth, cascading and phantom spectral lines. 
\end{abstract}

\maketitle

\section{\bf Introduction}
This paper investigates the spectrum of a one-dimensional vibrating free rod under tension or compression, which exhibits unexpected behaviors in the compressive regime.  Since the rod is of fixed length, we may take our domain to be $\Omega = (-1, 1)$; the spectrum of rods of other lengths can be recovered by rescaling.

The eigenvalues $\mu$ of the free rod depend on a tension parameter $\tau$ (discussed below) and are governed by the differential equation
\begin{equation}
u^{(4)} - \tau u'' = \mu u \label{eqn:1dimDE}
\end{equation}
together with the boundary conditions
\begin{align} 
u'' &= 0 \qquad \text{at $x = \pm 1$,}  \label{eqn:1dbcm}\\
u''' -\tau u'  &= 0 \qquad\text{at $x = \pm 1$.} \label{eqn:1dbcv}
\end{align} 
These boundary conditions arise naturally from the minimizers of the rod Rayleigh quotient, which takes the form
\begin{equation}
 Q[u] = \frac{\int_{-1}^{1}|u^{\prime\prime}|^2 + \tau |u^\prime|^2\,dx}{\int_{-1}^{1}|u|^2\,dx}. \label{1dimRQ}
\end{equation}
It is straightforward to show (see \cite{Chasman11}) the Rayleigh quotient is coercive, and so the free rod eigenvalue problem has a complete discrete spectrum with an orthonormal eigenbasis.

Interpreted physically, the parameter $\tau$ represents the tension applied to the ends of the rod. The sign of the tension parameter $\tau$ determines whether the rod is under tension ($\tau>0$) or compression ($\tau<0$). Positive eigenvalues correspond to a vibrating rod, while $\mu<0$ indicates the rod is buckling. The case of eigenvalue $\mu=0$ corresponds to a translational mode.

\begin{figure}[t]
    \begin{center}
\includegraphics[scale=0.5]{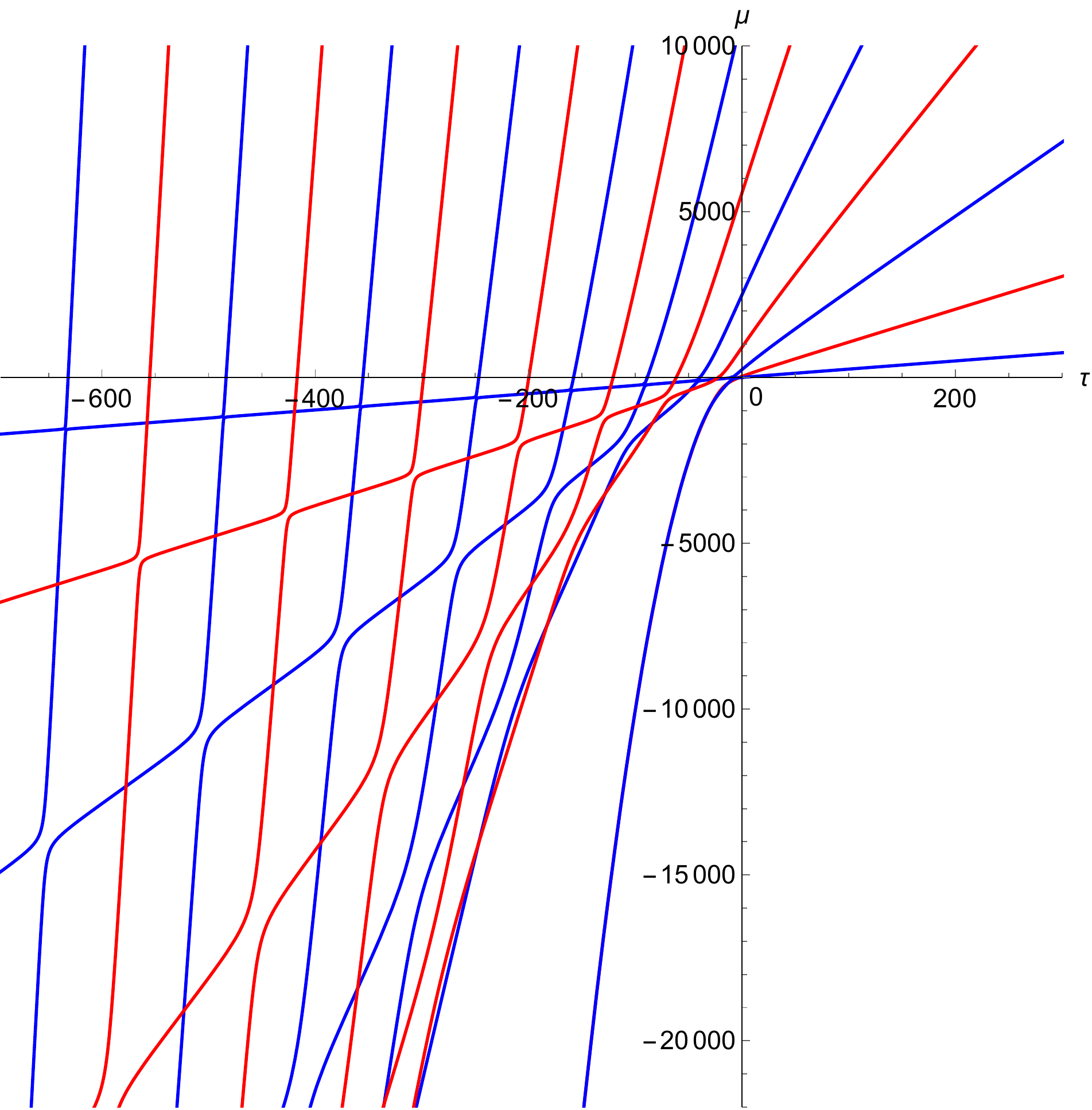}
\caption{\label{fig:fullspec} Spectrum of the free rod showing behavior under tension ($\tau>0$) and compression ($\tau<0$) with vibrational modes ($\mu>0$) and buckling modes ($\mu<0$). The spectrum shows repeated cascading behavior in the lower half-plane. Blue curves are eigenvalues associated with odd eigenfunctions; red are associated with even eigenfunctions.}
\end{center}
\end{figure}

The spectrum exhibits three types of behavior, shown in Figure~\ref{fig:fullspec}. In the upper half-plane, we observe nearly-linear, non-intersecting eigenvalue branches alternating based on symmetry of the associated eigenfunctions. Behavior in the lower half-plane depends on whether the eigenvalue branches lie above or below the parabola $\mu=-\tau^2/4$, shown more clearly in Figure~\ref{fig:zoomspec}. We will refer to this parabola as the \emph{critical parabola}, dividing the lower half-plane into \emph{sub-} and \emph{super-parabolic} regions.  Only two spectral curves penetrate the sub-parabolic region, below the critical parabola; we refer to these as the first two buckling branches. Above the critical parabola, we see a pattern of barely-avoided crossings along eigenvalue branches of the same symmetry (called \emph{cascading}), which we discuss further at the end of this section. We can also see that the spectrum has predictable asymptotic behavior in this region. We are particularly interested in how the behavior of the eigenvalue curves changes as we move from a vibrational mode ($\mu>0$) to a buckling mode ($\mu<0$).

\begin{figure}[t]
    \begin{center}
\includegraphics[scale=0.5]{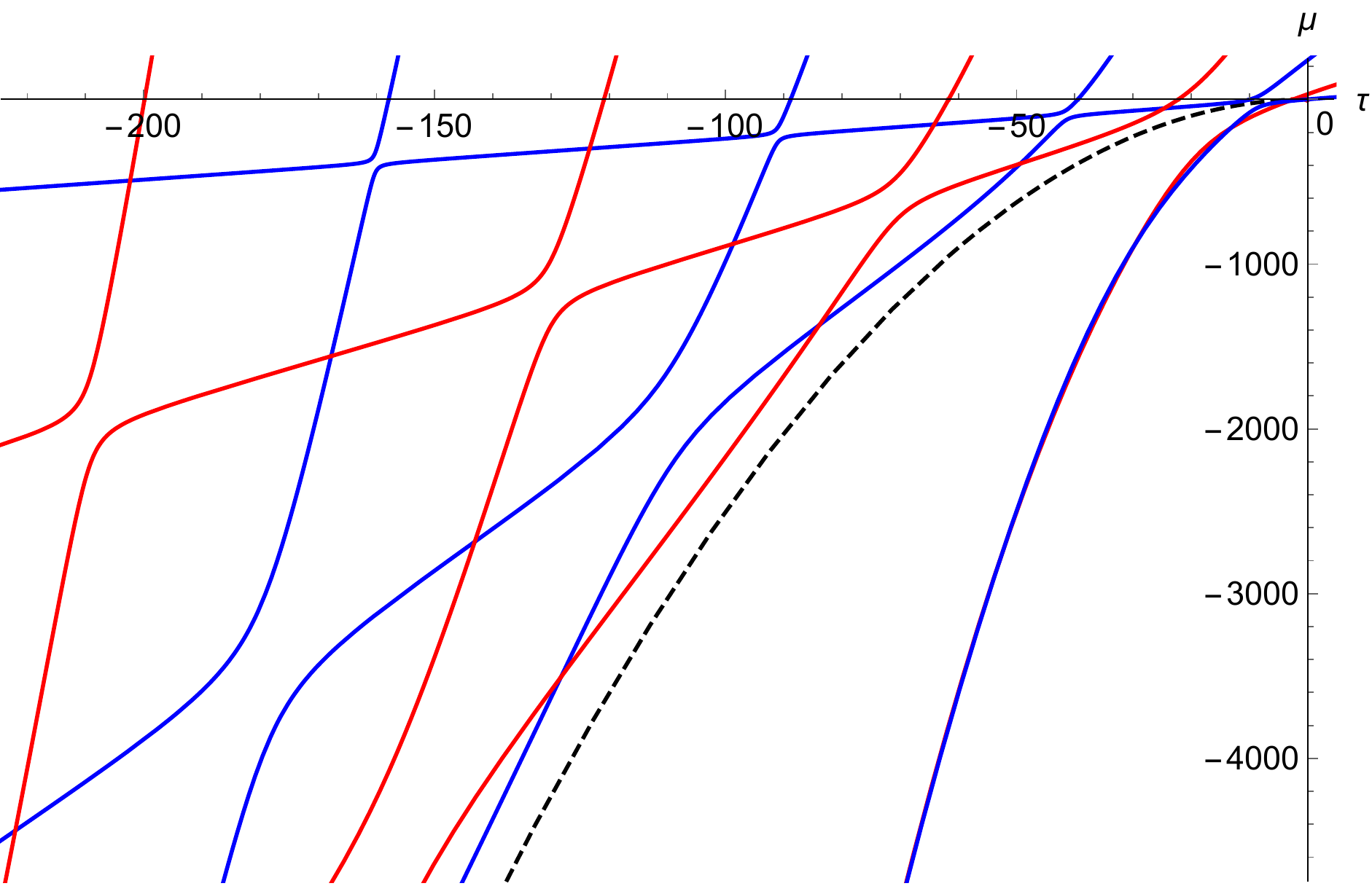}
\caption{\label{fig:zoomspec} Spectral behavior in the lower half-plane. The dashed curve is the critical parabola $\mu=-\tau^2/4$ (not part of the spectrum), which divides two different types of solutions. Cascading occurs above the critical parabola. Below the critical parabola, there are only two eigenvalue curves, which intersect infinitely often (see Figure~\ref{fig:crit} later).}
    \end{center}
    \end{figure}
    
We study the eigenvalues $\mu_k$ as functions of the tension parameter $\tau$ and consider rods under both tension and compression. In order to completely identify the spectrum, we parameterize the eigenvalue curves for each $\mu_k(\tau)$ for all real values of $\tau$. Different parameterizations are used in three different regions in the $(\tau, \mu)$-plane: the upper half-plane, the super-parabolic region (third quadrant above the parabola $\mu=-\tau^2/4$), and the sub-parabolic region (third quadrant below the parabola $\mu=-\tau^2/4$). Note that the fourth quadrant contains no eigenvalues. This is easiest to see from the Rayleigh quotient \eqref{1dimRQ}, whose numerator is nonnegative when $\tau > 0$.

This paper focuses on the interval $(-1,1)$ of length $2$. The general interval case can then be obtained from translation and the follow scaling relation:
\[
\mu_j\big((-R, R), \tau\big) = \frac{1}{R^4}\mu_j\left((-1, 1),  R^2\tau\right), \qquad j=1, 2, 3, \dots
\]

The rod eigenvalue equation can be considered with other boundary conditions. To the best of our knowledge, the spectrum of clamped rod (with $u=u'=0$ at the endpoints) has not been analyzed in the manner of this paper.

The structure of the paper is as follows. In Section~\ref{sec:Prelim}, we establish properties of symmetry of the eigenfunctions, and introduce bijections of regions of the $(\tau, \mu)$-plane that we will use for our parameterizations of eigenvalue branches. In Sections~\ref{sec:UHP} and \ref{sec:LHP}, we analyze the eigenvalues in the upper half-plane and super-parabolic region, respectively, by finding eigenvalue conditions for each region and then parameterizing the eigenvalue branches. We also discuss monotonicity, crossing properties, asymptotic growth of the eigenvalue branches, cascading and phantom lines in the spectrum. In Section~\ref{sec:crit}, we analyze the eigenvalues in the sub-parabolic region. Our approach to this region differs from the other regions, since our usual parameterization approach will not work. We find the eigenvalue conditions and describe the behavior of the two eigenvalue branches that lie in this region. In addition, we establish a result involving intersections of a family of parabolas with the eigenvalue branches. We also identify the crossings of the odd and even eigenvalue branches. 

\subsection*{\textbf{Related literature}}
The free rod is the one-dimensional case of the free plate. Plate problems are fourth-order analogs of membrane problems, with the bi-Laplacian operator taking the place of the Laplacian. The bi-Laplacian is more difficult to work with, as it is fourth-order and lacks some standard properties of the Laplacian. For instance, the maximum principle does not hold for the bi-Laplacian. However, fourth-order problems with appropriate boundary conditions have modeled a number of plates with physically relevant conditions; for instance, Sweers \cite{Sweers(16)} recently gave a survey of sign- and positivity-preserving properties of rod and plate problems with certain boundary conditions.

The literature includes a number of papers on fourth-order eigenvalue problems involving a parameter playing a similar role to our $\tau$. Notable work includes that of Kawohl, Levine, and Velte \cite{KawohlLevineVelte93}, who considered eigenvalues of a clamped vibrating plate under tension $\tau$:
\begin{align*}
\Delta^2 u-\tau \Delta u&=\gamma u
\end{align*} 
with clamped boundary conditions $u=|\nabla u|=0$ on the boundary. Hence $\tau>0$ corresponds to tension, and $\tau<0$ to compression. In fact, one can instead regard $\gamma$ as the parameter and $\tau$ as the eigenvalue, in which case one obtains the buckling problem. They proved concavity with respect to the parameter of sums of low eigenvalues. Their paper treats the higher-dimensional case, where much less is known. More recently, Ashbaugh, Benguria, and Mahadevan \cite{ABM17} proved an isoperimetric inequality for the first eigenvalue of the clamped plate under tension for a small range of $\tau<0$. We expect that in the one-dimensional clamped rod problem, one could obtain an explicit parameterization of the spectrum and study properties such as cascading and eigenvalue crossings, similar to our work for the free rod problem in this paper. 

Spectral problems with a tension-type parameter result naturally in a family of eigenvalue curves. For example, Grunau \cite{Grunau02} considered the related one-dimensional buckling eigenvalue problem $u^{(4)}+ au'''=-\lambda u''$ with clamped boundary conditions. He described the spectral curves as functions of a parameter $a$, and found that in appropriate parameter domains, these curves look different from those for the same equation under Navier boundary conditions.

We do not expect to get an explicit parameterization of eigenvalue curves in the higher-dimensional case, since the spectrum depends on the shape of the plate. However, one can establish relationships between plate and membrane eigenvalues, and between the plate eigenvalues and buckling energies ($\tau$ values for $\mu=0$), in the forms of inequalities. Payne \cite{Payne56} derived such inequalities for both the eigenvalues of the buckling problem and the vibrational eigenvalues for the clamped plate. These include linear (in $\tau$) upper bounds on single vibrational eigenvalues, and linear lower bounds on their sums, with coefficients given by the buckling energies. For the free rod, we observe nearly-linear behavior for all positive (vibrational) eigenvalues (see Section~\ref{sec:UHP}), with approximate slope given by the free membrane eigenvalues. Linear relationships between free rod eigenvalues and free rod buckling energies also appear in the phantom spectral lines.

 
The phenomena of avoided crossings and cascading arise in a variety of spectral problems, and have been studied in a number of second-order problems with a parameter. These terms are not well-defined mathematically (although \cite{Hineman07} provides a precise definition of avoided crossing in the context of their work), and instead are generally identified visually, as qualitative properties of the spectrum. We use the term \emph{avoided crossing} if two spectral curves come close together, nearly intersecting, but then veer away sharply. By \emph{cascading}, we mean that as $\tau$ increases, a spectral curve exhibits drastic changes in slope at nearly-periodic intervals, with a relatively steady rate of increase between these transition periods. These regions of alternating steady-then-sharp-increase create a pattern of \emph{phantom spectral lines}, as discussed in Section ~\ref{sec:propertyLHP}(\ref{prop:LHPphantom}).

Avoided crossings (also called quasi-crossings) of eigenvalues for a Coulomb centers problem were first studied by Komarov and Slavyanov \cite{Komarov68}. To the best of the authors' knowledge, the term cascading was first used by Gesztesy \emph{et al.} in \cite{Gesztesy88}, which investigated cascading of eigenvalues of a family of Schr\"odinger eigenvalue problems. Also, avoided crossings and cascading occur in a family of Heun's DE problems. For instance, Slavyanov and Veshev \cite{Slavyanov97} showed in 1997 that avoided crossings of a particular family occur periodically with respect to the parameter, and Bay \emph{et al.} \cite{Bay97} calculated avoided crossings of eigenvalue curves of the quartic oscillator of Heun's DE and showed dependence of the gap of avoided crossings on asymmetry of the parameter. More recently, in 2007, Hineman and Neuberger \cite{Hineman07} considered avoided crossings of eigenvalues of nonlinear second-order PDE's on certain regions and suggested some numerical techniques to solve these problems. 

Avoided crossings and cascading are in principle distinct phenomena, but they seem to be connected, since cascading occurs when there is a nearly-periodic pattern of avoided crossings, such as in this paper.

\section{\bf Preliminaries}\label{sec:Prelim} 
In this section, we establish two results that will aid our treatment of the spectrum. We prove a result about the symmetry of the eigenfunctions which allows us to assume all eigenfunctions are either odd or even on the open interval $(-1, 1)$, simplifying the solving of the boundary value problem.  We then define three bijections of portions of the $(\tau,\mu)$-plane, which will allow us to parameterize the spectral curves.

\subsection{\bf Reduction to odd and even eigenfunctions}

Before embarking on the classification of the eigenvalues, we show that we need only consider odd and even solutions to the eigenvalue equation \eqref{eqn:1dimDE}. 

Note that if $u(x)$ is an eigenfunction satisfying the partial differential equation \eqref{eqn:1dimDE} and the boundary conditions \eqref{eqn:1dbcm} and \eqref{eqn:1dbcv}, then by symmetry so is $u(-x)$, with the same eigenvalue $\mu$. The odd and even parts of $u$ can be expressed as
\[
u_o(x)=\frac{u(x)-u(-x)}{2} \quad \text{and} \quad u_e(x)=\frac{u(x)+u(-x)}{2}.
\]
Thus $u_o$ and $u_e$ are either both solutions of \eqref{eqn:1dimDE} with the same eigenvalue, or (in the case that $u$ is purely odd or purely even), one of them is zero everywhere. Because $u(x)$ and $u(-x)$ both satisfy the boundary conditions, $u_o$ and $u_e$ will also satisfy them. Since every eigenfunction is a linear combination of its odd and even parts, it suffices to look only for even and odd eigenfunctions. We will refer to eigenvalues associated with odd and even eigenfunctions as odd and even eigenvalues, respectively.

\subsection{\bf Bijections of the plane}
In this section, we state and prove the bijections of portions of the $(\tau,\mu)$-plane, which will be used in our parameterizations of the eigenvalue curves.

In later sections, we find the general form of the eigenfunctions by factoring the eigenvalue equation \eqref{eqn:1dimDE}. The factorization depends on the sign of the eigenvalue $\mu$, and in some cases on the value of $\tau$ relative to $\mu$. These differences in factorization correspond to the three different regions (upper half-plane, sub-parabolic, and super-parabolic) of the $(\tau,\mu)$ plane. In each case, we will use the boundary conditions to precisely identify the form of the eigenfunctions.

\begin{lemma}[Bijection Lemma] \label{lemma:Bij}
The following functions are bijective transformations on the indicated sets.
\begin{enumerate}
\item \label{lemma:UHP} (Upper half-plane) The function 
\[
F_{1}(a, b) = (-a^2 + b^2, a^2b^2)
\]
 maps $\{(a, b) : a, b \geq 0\}$ onto $\{(\tau, \mu) : \tau \in\RR, \mu\geq0\}$.
\item \label{lemma:LHP} (Super-parabolic region)
 The function 
\[
F_{2}(a, b) = (-a^2 - b^2, -a^2b^2)
\]
 maps  $\{(a, b) : a \geq b > 0\}$ onto $\{(\tau, \mu) : \tau<0, -\tau^2/4 \leq \mu<0\}$.
\item \label{lemma:crit} (Sub-parabolic region) The function 
\[
F_{3}(a, b) = (-2a^2 + 2b^2, -(a^2 + b^2)^2)
\]
maps $\{(a, b) : a> b>0\}$ onto $\{(\tau, \mu) : \tau< 0, \mu < -\tau^2/4\}$.
\end{enumerate}
\end{lemma}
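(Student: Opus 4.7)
My plan for each part is the same: substitute $x=a^2$ and $y=b^2$, reducing $F_i$ to an elementary system in $(x,y)$ that I can invert explicitly, then verify that positive square roots of the unique solution land inside the stated $(a,b)$-domain. Since each $F_i$ factors through the squaring map $a\mapsto a^2$, which is bijective on $[0,\infty)$, establishing bijectivity of the induced map on $(x,y)$ suffices.

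For part~\ref{lemma:UHP}, the system reads $y-x=\tau$, $xy=\mu$, so $x$ satisfies $x^2+\tau x-\mu=0$ with roots $\bigl(-\tau\pm\sqrt{\tau^2+4\mu}\bigr)/2$. Because $\mu\ge 0$ we have $\sqrt{\tau^2+4\mu}\ge|\tau|$, so exactly one root is nonnegative (the $+$ root), and then $y=x+\tau=\bigl(\tau+\sqrt{\tau^2+4\mu}\bigr)/2\ge 0$ as well. This gives a unique $(a,b)=(\sqrt{x},\sqrt{y})$ in the stated domain, and surjectivity into $\{\mu\ge 0\}$ is immediate since $a^2b^2\ge 0$.

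For part~\ref{lemma:LHP}, the system reads $x+y=-\tau$, $xy=-\mu$, so $x,y$ are the two roots of $t^2+\tau t-\mu=0$. Real roots require $\tau^2+4\mu\ge 0$, i.e.\ $\mu\ge-\tau^2/4$, while positivity of both roots requires sum $-\tau>0$ and product $-\mu>0$, i.e.\ $\tau<0$ and $\mu<0$. These are precisely the conditions cutting out the super-parabolic region. Assigning $x$ to the larger root forces $a\ge b>0$, giving a unique preimage.

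Part~\ref{lemma:crit} is in fact the simplest: the system $2(y-x)=\tau$, $(x+y)^2=-\mu$ is linear in $(x,y)$ once we pick the positive square root $x+y=\sqrt{-\mu}$ (forced because $x,y>0$). Solving yields $x=\tfrac{1}{2}\sqrt{-\mu}-\tfrac{\tau}{4}$ and $y=\tfrac{1}{2}\sqrt{-\mu}+\tfrac{\tau}{4}$; the constraint $y>0$ rearranges to $\mu<-\tau^2/4$ and $x>y$ rearranges to $\tau<0$, matching the sub-parabolic region exactly. Again, positive square roots deliver the unique $(a,b)$.

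There is no real obstacle; the only thing to be careful about is choosing the correct branches of the two square roots (one from the quadratic formula, one from $\sqrt{-\mu}$) so that the resulting $x,y$ sit in the correct orthant and ordering. The sign analysis above, done separately in each case because of the different factorizations, is what pins down the precise image regions — in particular, the discriminant condition $\tau^2+4\mu\ge 0$ in part~\ref{lemma:LHP} versus the square-root positivity condition $\sqrt{-\mu}>-\tau/2$ in part~\ref{lemma:crit} is exactly what makes the critical parabola $\mu=-\tau^2/4$ the natural dividing locus between the two lower-half-plane regions.
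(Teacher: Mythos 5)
Your proof is correct and is essentially the paper's argument: the paper simply writes down the explicit inverses $F_i^{-1}(\tau,\mu)$ and asserts they are well-defined on the stated regions, and your substitution $x=a^2$, $y=b^2$ followed by the quadratic-formula/sign analysis reproduces exactly those formulas while actually supplying the verification the paper leaves implicit. The only cosmetic imprecision is in part (1), where for $\mu=0$ and $\tau<0$ both roots of $x^2+\tau x-\mu=0$ are nonnegative; uniqueness there comes from additionally requiring $y=x+\tau\geq 0$, which your choice of the $+$ root already guarantees.
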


\begin{proof} We construct explicit inverses of our functions as follows:
\begin{align*}
F_1^{-1}(\tau, \mu) &= \Big(\sqrt{\frac{\sqrt{\tau^2 + 4\mu} - \tau}{2}}, \sqrt{\frac{\sqrt{\tau^2 + 4\mu} + \tau}{2}}\Big),\\
F_2^{-1}(\tau, \mu) &= \Big(\frac{\sqrt{|\tau| + 2\sqrt{|\mu|}} + \sqrt{|\tau| - 2\sqrt{|\mu|}}}{2}, \frac{\sqrt{|\tau| + 2\sqrt{|\mu|}} - \sqrt{|\tau| - 2\sqrt{|\mu|}}}{2}\Big),\\
F_3^{-1}(\tau, \mu) &= \Big(\frac{\sqrt{-\tau + \sqrt{\tau^2 + |\tau^2 + 4\mu|}}}{2}, \frac{\sqrt{\tau + \sqrt{\tau^2 + |\tau^2 + 4\mu|}}}{2}\Big).
\end{align*} 
Note that $F_1^{-1}$ is well-defined on the region $\{(\tau, \mu) : \tau \in\RR, \mu\geq0\}$, the function $F_2^{-1}$ is well-defined on the region $\{(\tau, \mu) : \tau< 0, -\tau^2/4 \leq \mu<0\}$, and $F_3^{-1}$ is well-defined on the region $\{(\tau, \mu) : \tau< 0, \mu < -\tau^2/4\}$. Hence all three functions are indeed bijections of the appropriate sets.
\end{proof}

\begin{remark} It is obvious from the Rayleigh quotient \eqref{1dimRQ} for the free rod that there is no negative eigenvalue when $\tau\geq0$. That is, there are no eigenvalues in the fourth quadrant. For the sake of completeness, however, one could treat the fourth quadrant $\{(\tau, \mu) : \tau\geq0, \mu\leq0\}$ in the same way as the other regions. In this case, we would use the bijective transformation 
\[
F_{4}(a, b) = (a^2 + b^2, -a^2b^2), 
\]
which maps $\{(a, b) : a\geq b \geq 0\}$ onto $\{(\tau, \mu) : \tau \geq0, \mu\leq0\}$. We could then show there is no eigenvalue pair in the fourth quadrant using methods similar to those in the next section.
\end{remark}

\section{\bf The upper half-plane}\label{sec:UHP}
In this section, we treat the case of eigenvalue branches lying in the upper half-plane $\{(\tau,\mu) : \tau\in\RR,\mu\geq 0\}$. Recall that the eigenvalue equation has the form $u^{(4)}-\tau u''=\mu u$; then the characteristic equation is $r^4-\tau r^2 -\mu=0$. As we will see, the upper half-plane corresponds to the case that the characteristic equation has real and purely-imaginary complex roots. We will identify the eigenfunctions, provide a complete description for the eigenvalues as functions of $\tau$ via parameterization, and identify some key properties of the eigenvalue curves.

\subsection{\bf Eigenfunctions and eigenvalue conditions}
The starting point for solving the eigenvalue equation is factoring the eigenvalue equation \eqref{eqn:1dimDE}. When $\mu$ is non-negative, regardless of the value of $\tau$, we may factor the eigenvalue equation as
\begin{equation}\label{eqn:factorUHP}
\left(\frac{d^2}{dx^2}+a^2\right)\left(\frac{d^2}{dx^2}-b^2\right)u=0,
\end{equation}
where $\mu = a^2b^2$ and $\tau = b^2-a^2$. We may take $a$ and $b$ to be nonnegative since $\mu\geq0$.

\begin{lemma}[Eigenfunctions and eigenvalue conditions] \label{lemma:eigenfunctionsUHP}
For all eigenvalues $\mu>0$ and all $\tau\in\RR$, one of the following must hold:
\begin{enumerate}
\item The eigenvalue $\mu$ is associated with an odd eigenfunction $u_o$ of the form $u_o(x) = A \sin(ax) + B \sinh(bx)$, where $A$ and $B$ are nonzero constants, and $a$ and $b$ are nonnegative numbers such that $\mu=a^2b^2$, $\tau=b^2-a^2$, and
\begin{equation}\label{eqn:UHPevalCondOdd}
a^3\tan a = b^3\tanh b.
\end{equation}
\item The eigenvalue $\mu$ is associated with an even eigenfunction $u_e$ of the form
$u_e(x) = C \cos(ax) + D \cosh(bx)$, where $C$ and $D$ are nonzero constants, and $a$ and $b$ are nonnegative numbers such that $\mu=a^2b^2$, $\tau=b^2-a^2$, and 
\begin{equation}\label{eqn:UHPevalCondEven}
- a^3\cot a = b^3\coth b.
\end{equation}
\end{enumerate}
\end{lemma}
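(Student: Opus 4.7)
The plan is to use the factored form \eqref{eqn:factorUHP} to write down the general solution of \eqref{eqn:1dimDE} and then enforce the boundary conditions. From \eqref{eqn:factorUHP} the characteristic roots are $\pm ia$ and $\pm b$, so the general real solution is a linear combination of $\sin(ax)$, $\cos(ax)$, $\sinh(bx)$, and $\cosh(bx)$. By the odd/even reduction of Section~\ref{sec:Prelim}, it suffices to treat odd eigenfunctions $u_o(x)=A\sin(ax)+B\sinh(bx)$ and even eigenfunctions $u_e(x)=C\cos(ax)+D\cosh(bx)$ separately. Because each of $u_o$ and $u_e$ has definite parity, the boundary conditions \eqref{eqn:1dbcm}--\eqref{eqn:1dbcv} at $x=-1$ are automatic once they hold at $x=1$, so in each case we need only enforce a pair of conditions at the single endpoint $x=1$, yielding a $2\times 2$ homogeneous linear system in the unknown coefficients.

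In the odd case, direct computation of $u_o''(1)$ and $u_o'''(1)-\tau u_o'(1)$ produces the system
\[
-a^2 A\sin a + b^2 B\sinh b = 0, \qquad -aA\cos a\,(a^2+\tau) + bB\cosh b\,(b^2-\tau) = 0.
\]
The algebraic engine of the proof is the substitution $\tau = b^2-a^2$: the identities $a^2+\tau = b^2$ and $b^2-\tau = a^2$ collapse the shear condition to $-ab^2 A\cos a + a^2 b B\cosh b = 0$. The coefficient matrix then has determinant $-ab(a^3\sin a\cosh b - b^3\cos a\sinh b)$, and since $\mu=a^2b^2>0$ forces $a,b>0$, nontrivial solutions exist iff $a^3\sin a\cosh b = b^3\cos a\sinh b$, which is \eqref{eqn:UHPevalCondOdd} after dividing by $\cos a\cosh b$. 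To justify this division, note $\cosh b>0$ always, and if $\cos a=0$ then the determinant condition forces $\sin a=0$, contradicting $\sin^2 a+\cos^2 a=1$. Finally, the first equation combined with $\sinh b>0$ shows $A$ and $B$ must both be nonzero whenever the eigenvalue condition is satisfied.

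The even case is structurally identical. The analogous $\tau$-substitution turns the shear condition at $x=1$ into $ab^2 C\sin a + a^2 b D\sinh b = 0$, which combined with $-a^2 C\cos a + b^2 D\cosh b = 0$ yields the determinantal condition $a^3\cos a\sinh b + b^3\sin a\cosh b = 0$, equivalent to \eqref{eqn:UHPevalCondEven}. There is no serious analytical obstacle anywhere in the argument; the only step that requires care is the $\tau = b^2-a^2$ simplification of the third-derivative boundary condition, which is precisely what converts an ungainly four-term shear condition into the clean two-term form that produces the tidy transcendental equations \eqref{eqn:UHPevalCondOdd} and \eqref{eqn:UHPevalCondEven}.
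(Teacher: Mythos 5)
Your proposal is correct and follows essentially the same route as the paper: factor via $\tau=b^2-a^2$, $\mu=a^2b^2$, reduce to odd/even solutions $A\sin(ax)+B\sinh(bx)$ and $C\cos(ax)+D\cosh(bx)$, impose the two boundary conditions at $x=1$, and set the $2\times 2$ determinant to zero to obtain \eqref{eqn:UHPevalCondOdd} and \eqref{eqn:UHPevalCondEven}. Your added justification for dividing by $\cos a\cosh b$ (resp.\ $\sin a\sinh b$) and for the nonvanishing of the coefficients is a small extra care the paper leaves implicit, but the argument is the same.
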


\begin{lemma}[Zero eigenvalues] \label{lemma:eigenfunctionsUHP2}
For all $\tau\in\RR$, the constant function $u_e(x)\equiv 1$ is an even eigenfunction with eigenvalue $\mu=0$. 

The eigenvalue $\mu=0$ has multiplicity two under the following conditions on $\tau$:
\begin{enumerate}
\item $\tau=0$. In this case, the second eigenfunction is $u_o(x)=x$ and is odd.
\item $\tau=-(l\pi)^2$, any $l\in\NN$. In this case, the second eigenfunction is $u_o(x)=\sin(ax)$ and is odd.
\item $\tau=-(2l-1)^2\pi^2/4$, any $l\in\NN$. In this case, the second eigenfunction is $u_e(x)=\cos(ax)$ and is even.
\end{enumerate}
For all other values of $\tau$, the eigenvalue $\mu=0$ is simple.
\end{lemma}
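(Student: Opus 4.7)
The plan is to solve the ODE $u^{(4)} - \tau u'' = 0$ directly and then impose the four boundary conditions to determine the dimension of the eigenspace for each value of $\tau$. The constant function $u_e \equiv 1$ plainly satisfies the ODE and has $u'' \equiv 0$ and $u''' - \tau u' \equiv 0$, so it is always in the kernel; everything else reduces to identifying when a linearly independent companion eigenfunction exists. I would split by the sign of $\tau$.

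For $\tau = 0$ the equation becomes $u^{(4)} = 0$, whose general solution is the cubic $a_0 + a_1 x + a_2 x^2 + a_3 x^3$. The conditions $u''(\pm 1) = 0$ force $a_2 = a_3 = 0$, while $u'''(\pm 1) = 0$ is automatic, giving a two-dimensional kernel spanned by the even mode $1$ and the odd mode $x$. For $\tau \neq 0$, setting $v = u''$ reduces the ODE to $v'' = \tau v$, and integrating twice gives
\[
u(x) = C_2 + C_1 x + \alpha\,\phi\bigl(\sqrt{|\tau|}\,x\bigr) + \gamma\,\psi\bigl(\sqrt{|\tau|}\,x\bigr),
\]
where $(\phi,\psi) = (\cosh,\sinh)$ when $\tau > 0$ and $(\phi,\psi) = (\cos,\sin)$ when $\tau < 0$.

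A short computation shows that the oscillatory/hyperbolic terms cancel identically inside $u''' - \tau u'$, leaving $u''' - \tau u' = -\tau C_1$; hence the stress boundary condition at $\pm 1$ acts purely as the constraint $C_1 = 0$. The conditions $u''(\pm 1) = 0$ then decouple into an even and an odd equation, yielding $\alpha\,\phi(\sqrt{|\tau|}) = 0$ and $\gamma\,\psi(\sqrt{|\tau|}) = 0$. For $\tau > 0$ both $\cosh(\sqrt{\tau}) > 0$ and $\sinh(\sqrt{\tau}) \neq 0$, forcing $\alpha = \gamma = 0$ and leaving only the constant mode. For $\tau < 0$, writing $a = \sqrt{-\tau}$, the even companion $\cos(ax)$ survives exactly when $\cos a = 0$, i.e.\ $a = (2l-1)\pi/2$, and the odd companion $\sin(ax)$ survives exactly when $\sin a = 0$, i.e.\ $a = l\pi$.

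To finish, I would observe that the two families $\tau = -(l\pi)^2$ and $\tau = -(2l-1)^2\pi^2/4$ are disjoint, since $4l^2 = (2m-1)^2$ has no integer solutions, so at each qualifying $\tau < 0$ exactly one additional independent eigenfunction appears and the multiplicity is precisely two; at all other $\tau$ the eigenvalue $\mu = 0$ is simple. The one subtle point to emphasize is the cancellation in the $u''' - \tau u'$ boundary expression: one might initially expect this condition to impose a trigonometric constraint on $a$, but the cancellation reduces it to $C_1 = 0$, which is why the enumeration of degenerate $\tau$ values comes solely from the $u''(\pm 1) = 0$ conditions.
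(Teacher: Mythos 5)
Your proof is correct and follows essentially the same route as the paper: the paper's case split on which of $a$, $b$ vanishes in the factorization $\mu=a^2b^2$, $\tau=b^2-a^2$ is exactly your split on the sign of $\tau$, with the same solution bases and the same boundary-condition computations (your cancellation $u'''-\tau u'=-\tau C_1$ appears in the paper as the conditions $Ab^2=0$ and $a^2B=0$ killing the coefficient of $x$). The only extra touch is your explicit disjointness check of the two families of degenerate $\tau$, which the paper leaves implicit.
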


Although we state the $\mu=0$ case as a separate lemma, we treat both cases $\mu>0$ and $\mu=0$ in a single proof below.

\begin{proof} It is easy to see that the constant function $u_e\equiv 1$ satisfies our boundary value problem for all $\tau$ and has an associated eigenvalue $\mu=0$. We look now for non-constant solutions.

Since we are considering $\mu\geq0$ and $\tau\in\RR$, by Lemma~\ref{lemma:Bij}(\ref{lemma:UHP}), we may factor the eigenvalue equation as \eqref{eqn:factorUHP}. The characteristic equation then becomes  
\[
r^4 + (a^2-b^2)r^2 - a^2b^2 = 0. 
\] 
Since $a, b\geq 0$, the above quartic equation has solutions $r = \pm ia, \pm b$. We must consider four cases, depending on the positivity of $a$ and $b$.

\emph{Case 1: $a>0$ and $b>0$.} In this case, $\mu=a^2b^2$ is positive and the differential equation has four linearly independent solutions: $e^{\pm iax}$ and $e^{\pm bx}$. Because we need consider only odd and even solutions, we express these solutions instead as the trigonometric functions $\sin(ax)$ and $\cos(ax)$ and hyperbolic trigonometric functions $\sinh(bx)$ and $\cosh(bx)$. Our possible solutions are then linear combinations of these, chosen according to symmetry.

Writing $u_o$ for the odd eigenfunction and $u_e$ for the even, we have
\begin{align*}
u_o(x) &= A \sin(ax) + B \sinh(bx),\\
u_e(x) &= C \cos(ax) + D \cosh(bx).
\end{align*}
Then by Lemma~\ref{lemma:Bij}(\ref{lemma:UHP}), the boundary conditions can be expressed in terms of $a$ and $b$ as follows:
\[
\begin{cases}
u'' = 0 &\text{when $x=\pm 1$,}\\
u''' + (a^2-b^2)u'  = 0 &\text{when $x = \pm 1$.}
\end{cases}
\] 

We consider the odd eigenfunction first. We wish to determine which choices of $a$ and $b$ (and hence $\tau$ and $\mu$) yield a solution to the boundary value problem. Applying the two boundary conditions yields
\[
\begin{cases}
Aa^2\sin a - Bb^2\sinh b = 0,\\
Aab^2\cos a - Ba^2b\cosh b = 0.
\end{cases}
\]
We require that our linear combination coefficients $(A,B)$ be nontrivial, so the system's determinant must vanish. That is $-a^4b\sin a\cosh b + ab^4\cos a\sinh b = 0$. Since $a$ and $b$ are nonzero, this is equivalent to
\[
a^3\tan a = b^3 \tanh b.
\]
This gives us a condition on $(a,b)$ that assures us of an odd solution to the eigenvalue problem.

For the even eigenfunction $u_e$, applying the boundary conditions gives us
\[
 \begin{cases}
Ca^2\cos a - Db^2\cosh b = 0,\\
-Cab^2\sin a - Da^2b\sinh b = 0
\end{cases}
\]
Once again, to have nontrivial linear combination coefficients $(C,D)$, we require that the system's determinant be zero. That is, $-a^4b\cos a\sinh b - ab^4\sin a\cosh b = 0$, or equivalently, since $a$ and $b$ are nonzero,
\[
-a^3\cot a = b^3\coth b.
\]

\emph{Case 2: $a=0$ and $b>0$.} In this case, our eigenvalue $\mu=0$, and our tension parameter $\tau=b^2$ is positive. 

We also note that here $r=0$ is a double root of the characteristic equation, and so in place of $e^{\pm iax}$, the solutions we consider are  $e^{0x}$ and $xe^{0x}$. The solutions $e^{\pm bx}$  can still be expressed as hyperbolic trigonometric functions, and so when we consider the possible odd and even solutions,  we may write
\begin{align*}
u_o(x) &= A x + B\sinh(bx),\\
u_e(x) &= C + D \cosh(bx).
\end{align*}
For the odd eigenfunction $u_o$, applying the boundary conditions gives us
\[
 \begin{cases}
Bb^2\sinh b = 0,\\
Ab^2  = 0.
\end{cases}
\]
Since $b$ is positive, there is no nontrivial $(A, B)$ pair in this case, and there are no odd solutions of this form.

For the even eigenfunction $u_e$, applying the boundary conditions yields
\[
 \begin{cases}
Db^2\cosh b = 0,\\
0 = 0.  
 \end{cases}
\]
Thus we must take the coefficient $D=0$, and the only even eigenfunction is the constant solution $u_e(x)=C$.

\emph{Case 3: $a>0$ and $b=0$.} As in the previous case, our eigenvalue $\mu=0$, but this time $\tau=-a^2$ is negative.

The roots of the characteristic equation are now $r=\pm ia$ and a double root $r=0$, so for the odd and even eigenfunctions, we obtain
\begin{align*}
u_o(x) &= A \sin(ax) + Bx,\\
u_e(x) &= C \cos(ax) + D.
\end{align*}
Applying the boundary conditions to the odd eigenfunction $u_0$ yields
\[
 \begin{cases}
-Aa^2\sin a = 0,\\
a^2B = 0.  
 \end{cases}
\]
From this, we see that we must take $B=0$. Therefore, $u_o(x)$ must have the form $A\sin(ax)$ for some $A\neq0$, and the first boundary condition holds if and only if $\sin a=0$. Thus $a=l\pi$ for some natural number $l$, and $\tau=-a^2=-l^2\pi^2$.

For the even eigenfunction $u_e(x)$, applying the boundary conditions gives us
\[
\begin{cases}
-Ca^2\cos a = 0,\\
0 = 0.
\end{cases}
\]
Taking $C=0$ yields the constant eigenfunction, which we have already discussed, so we assume $C\neq0$ and instead impose the requirement $\cos a=0$. This yields $a=(2l-1)\pi/2$ for $l\in\NN$, or equivalently $\tau=-(2l-1)^2\pi^2/4$. For these values of $\tau$, we therefore have an even function $u_e(x)=C\cos(ax)$ with associated eigenvalue $\mu=0$, as desired.

\emph{Case 4: $a=0$ and $b=0$.} In this case, both our eigenvalue $\mu$ and tension parameter $\tau$ are zero. Since $r=0$ is a quadruple root of the characteristic equation, the general solution $u$ is a linear combination of $1$, $x$, $x^2$, and $x^3$. As before we only consider odd and even solutions, and write
\begin{align*}
u_o(x) &= A x + B x^3,\\
u_e(x) &= C + D x^2.
\end{align*}
For the odd eigenfunction $u_o$, applying the boundary conditions gives us the same condition for both, namely $B=0$. Therefore, $u_o(x)=Ax$ is an odd eigenfunction in this case.

Applying our boundary conditions to the even eigenfunction $u_e$ yields only $2D=0$ as a meaningful condition. Therefore, the constant function is the only even eigenfunction in this case.
\end{proof}

\subsection{\bf Parameterization of eigenvalue curves in the upper half-plane}
Lemma ~\ref{lemma:eigenfunctionsUHP} allows us to smoothly parameterize the eigenvalue branches lying in the upper half-plane.  

\begin{definition}[Parameterization for the upper half-plane] \label{defn:paramUHP}\ 
  \begin{figure}[t]
    \begin{center}
\includegraphics[scale=0.55]{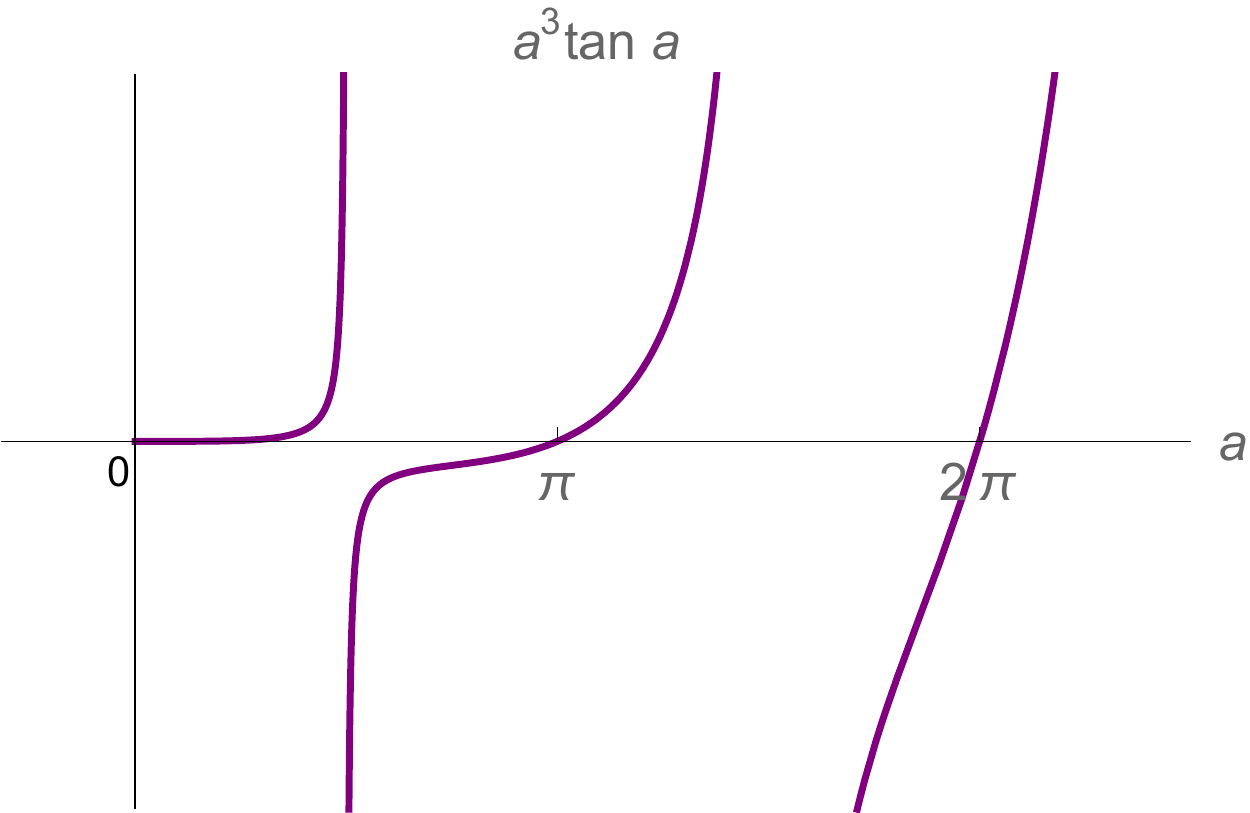}
\qquad
\includegraphics[scale=0.55]{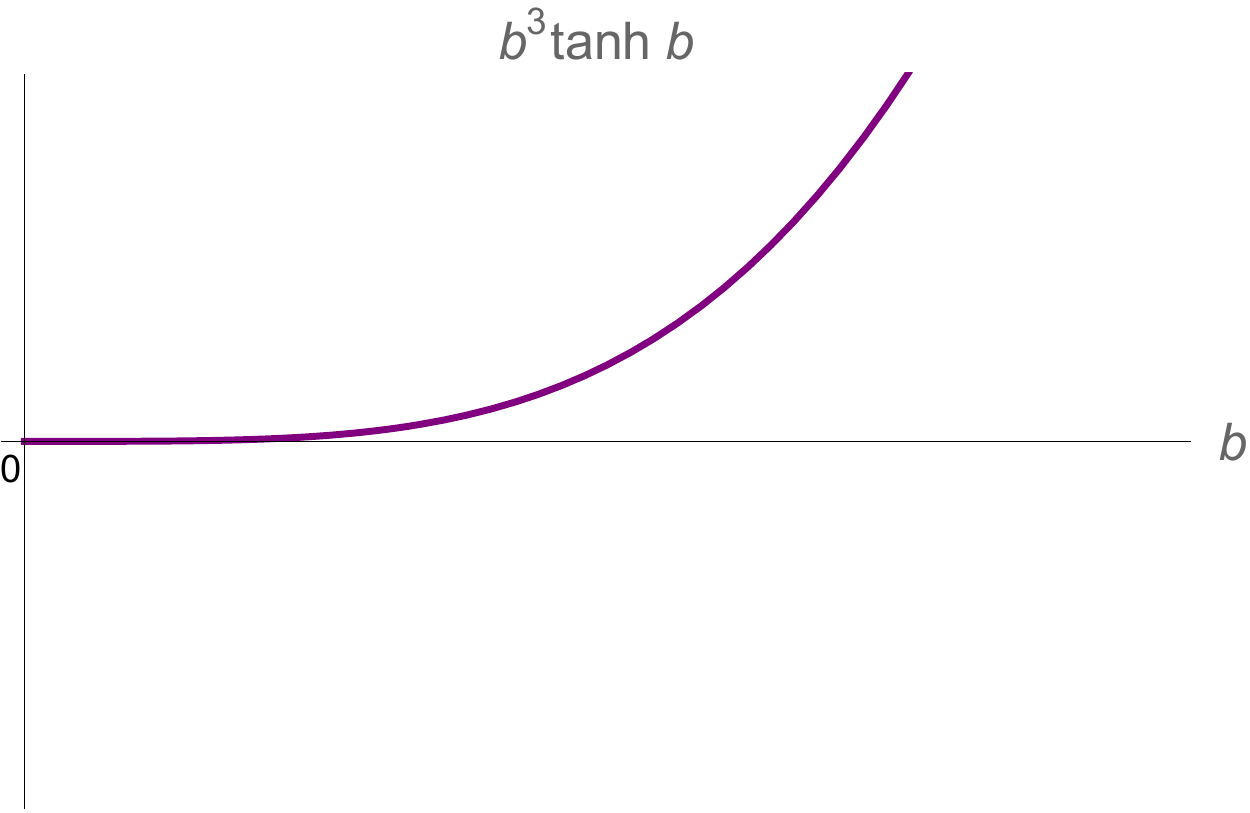}
\caption{\label{fig:UHPoddcondition} Upper half-plane eigenvalue condition $\fodd(a)=\godd(b)$ (Definition ~\ref{defn:paramUHP}) allows us to obtain $b$ in terms of $a$.}
    \end{center}
    \end{figure}
\begin{enumerate}
\item (Odd case) Define 
\[
\fodd(a) = a^3\tan a, \quad \godd(b) = b^3\tanh b, \qquad \text{for $a, b \geq 0$.} 
\]
Note that $\godd$ is increasing and hence invertible on its domain. Observe also that $\fodd$ is one-to-one when restricted to the intervals $[l\pi,(l+1/2)\pi)$ for integers $l\geq0$; this restriction is called $l$th branch of $\fodd$.

For any such $l$, we define
\[
\bodd(a) = \goddinv(a^3\tan a), \qquad  l\pi \leq a < \left(l+\frac{1}{2}\right)\pi.
\]

\item (Even case) Define 
\[
\feven(a) = -a^3\cot a,\quad \geven(b) = b^3\coth b, \qquad \text{for $a, b \geq 0$.}
\]
The function $\geven$ is increasing and hence invertible(and nonnegative). As before, we identify branches of $\feven$ by restricting its domain; since $\geven$ is positive, we consider only the positive branches of $\feven$, which are $[(l+1/2)\pi,(l+1)\pi)$ for integers $l\geq 0$. We then define
\begin{align*}
\beven(a) = \geveninv(-a^3\cot a), \qquad \left(l+\frac{1}{2}\right)\pi \leq a< (l+1)\pi.
\end{align*} 
  \begin{figure}[t]
    \begin{center}
\includegraphics[scale=0.55]{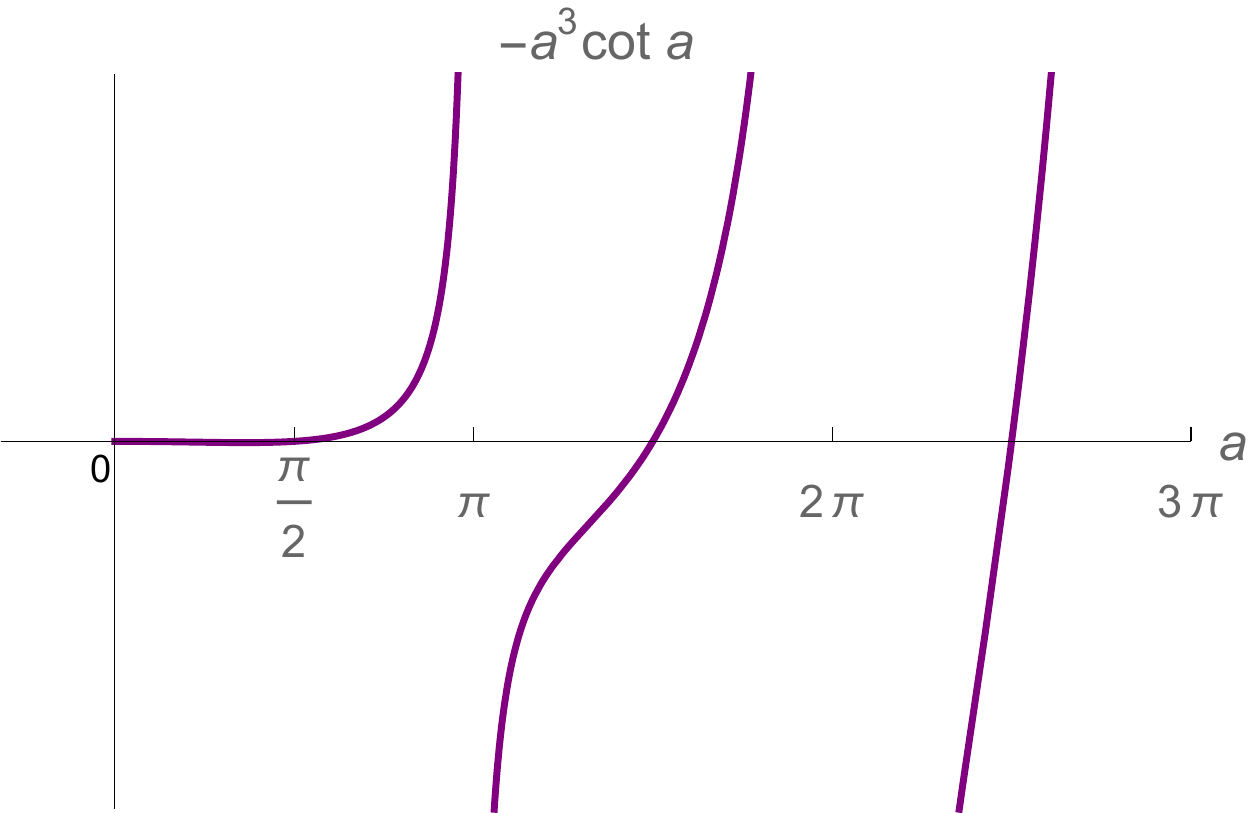}
\qquad
\includegraphics[scale=0.55]{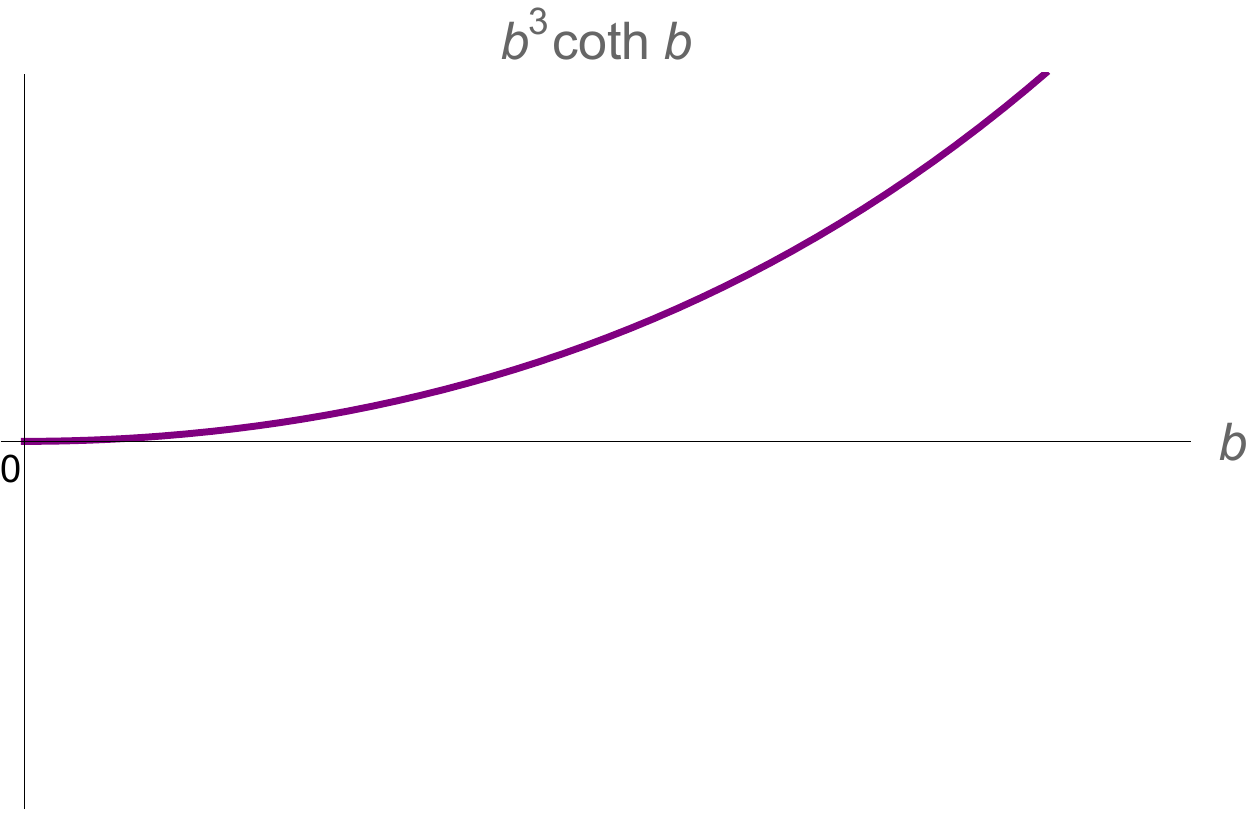}
\caption{\label{fig:UHPevencondition} Upper half-plane eigenvalue condition $\feven(a)=\geven(b)$ (Definition ~\ref{defn:paramUHP}) allows us to obtain $b$ in terms of $a$.}
    \end{center}
    \end{figure}

\end{enumerate}

  \begin{figure}[t]
    \begin{center}
\includegraphics[scale=0.6]{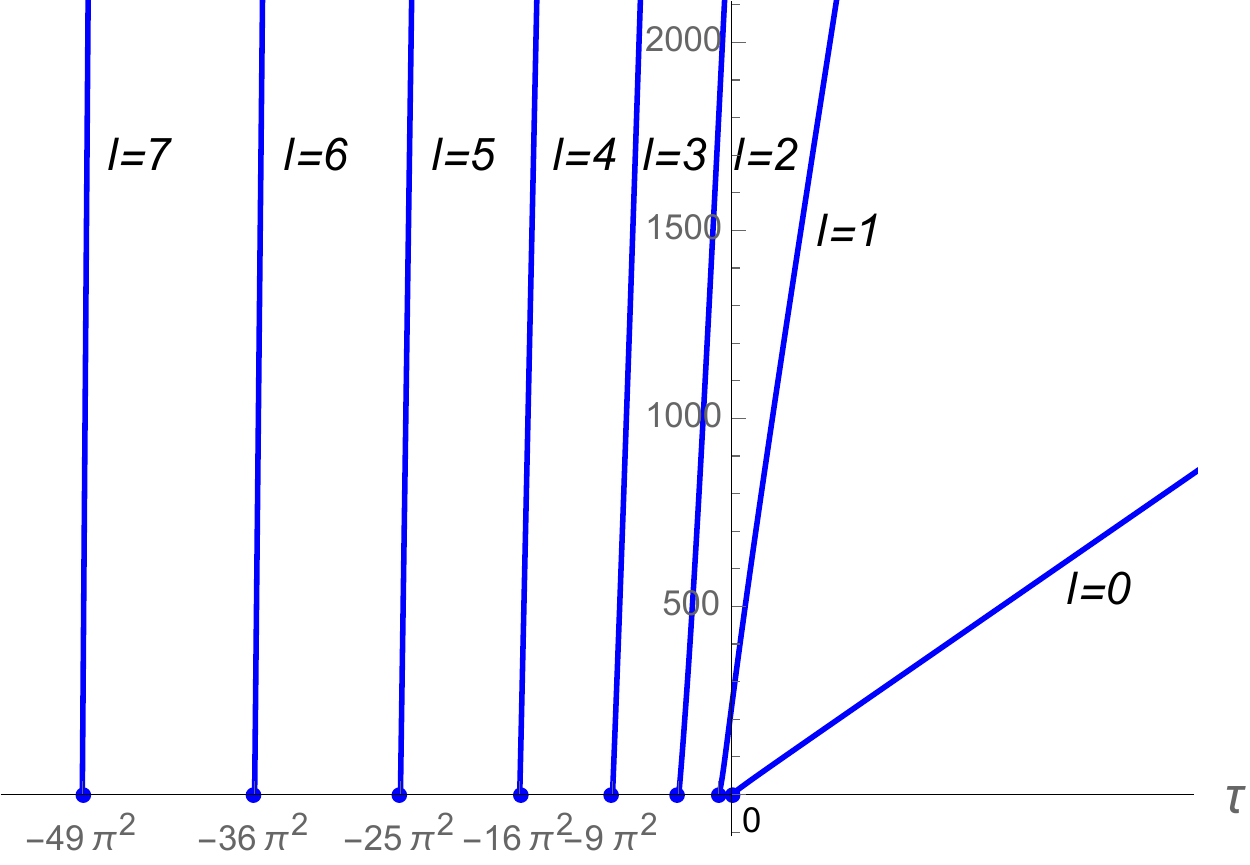}
\caption{The odd eigenvalue branches $\mu_l^{\mathrm{odd}}(\tau)$ in the upper half-plane, for $l=0, \dots,7$ (Theorem ~\ref{thm:paramUHP}).}
 \label{fig:UHPodd}
    \end{center}
    \end{figure}
\end{definition}

\begin{theorem}[Eigenvalues of the upper half-plane] \label{thm:paramUHP}
The eigenvalue curves in the upper half-plane are indexed by $l\geq 0$. For each $l$ there are two curves, according to whether the eigenfunction is odd or even: 
\begin{enumerate}
\item Odd (Figure~\ref{fig:UHPodd}): The eigenvalue curve $(\tau, \mu_l^\mathrm{odd}) = F_1(a, \bodd(a))$ is parameterized by $a\in [l\pi, (l+1/2) \pi)$.
\item Even (Figure~\ref{fig:UHPeven}): The eigenvalue curve $(\tau, \mu_l^{\mathrm{even}}) =  F_1(a, \beven(a))$ is parameterized by $a\in [(l+1/2)\pi, (l+1)\pi)$.

  \begin{figure}[t]
    \begin{center}
\includegraphics[scale=0.6]{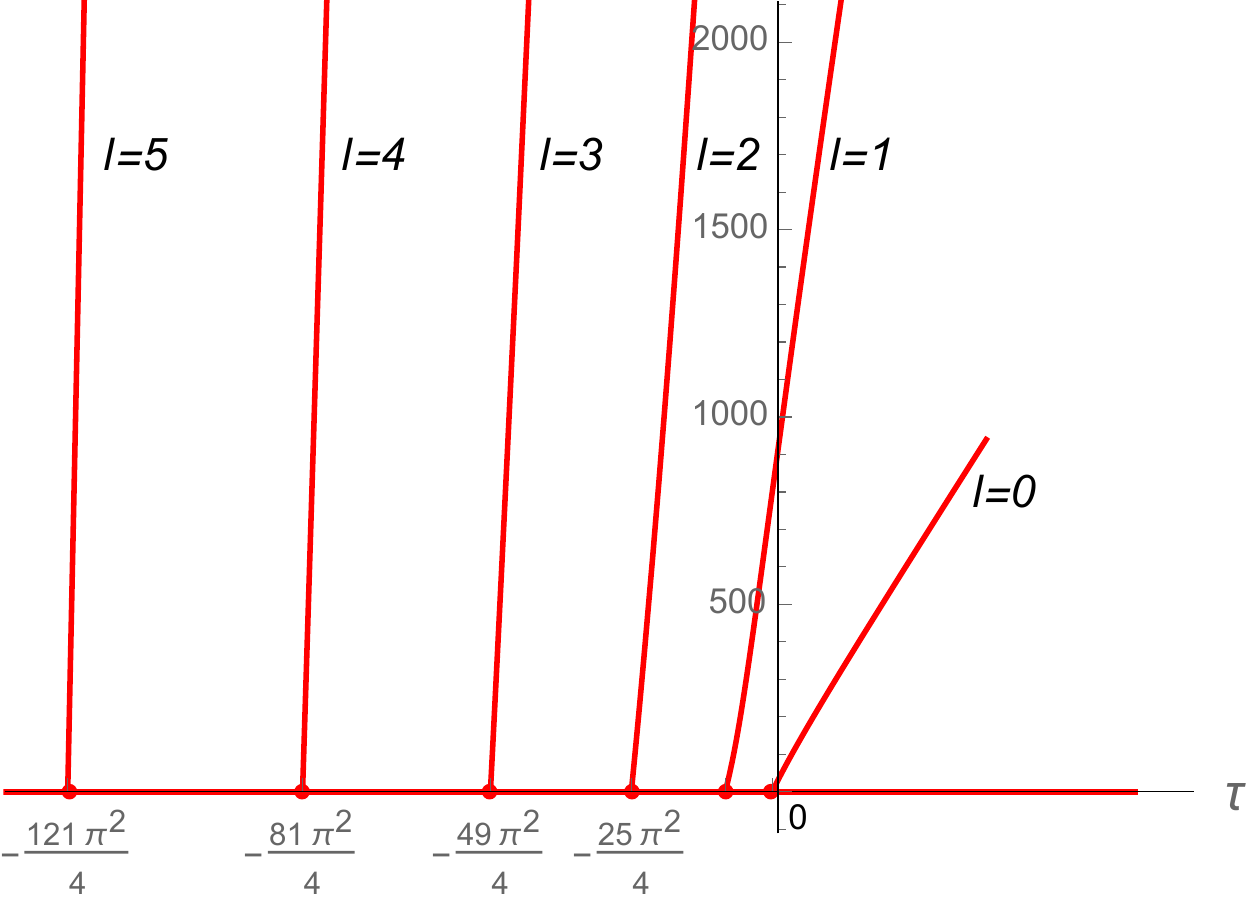}
\caption{The even eigenvalue branches $\mu_l^\mathrm{even}(\tau)$ in the upper half-plane, for $l=0, \dots,5$  and the additional zero branch $\mu \equiv0$ (Theorem ~\ref{thm:paramUHP}).}
 \label{fig:UHPeven}
    \end{center}
    \end{figure}

\item Zero eigenvalue: For all $\tau\in\RR$, the eigenvalue $\mu=0$ has constant eigenfunction (which can be regarded as a translational mode of the rod).
\end{enumerate}
\end{theorem}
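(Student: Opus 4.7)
The plan is to convert the eigenvalue characterization of Lemma~\ref{lemma:eigenfunctionsUHP} into an explicit parameterization via the bijection $F_1$ of Lemma~\ref{lemma:Bij}(\ref{lemma:UHP}). Since every $(\tau,\mu)$ with $\mu \geq 0$ is the $F_1$-image of a unique pair $(a,b)$ with $a,b \geq 0$, describing the upper-half-plane spectrum reduces to identifying which such $(a,b)$ produce eigenvalue pairs. For $\mu > 0$ this occurs precisely when either $\fodd(a) = \godd(b)$ (odd case) or $\feven(a) = \geven(b)$ (even case), while the $\mu = 0$ branch is supplied directly by Lemma~\ref{lemma:eigenfunctionsUHP2}.

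For the odd branches, I would first verify that $\godd$ is strictly increasing on $[0,\infty)$ with range $[0,\infty)$, so that $\goddinv$ is well-defined there. I would then restrict $\fodd(a) = a^3 \tan a$ to the interval $[l\pi,(l+1/2)\pi)$ for each $l \geq 0$, where a derivative check gives strict monotonicity and where $\fodd$ increases from $0$ up to $+\infty$; on the complementary intervals $\fodd$ is either negative (so cannot equal $\godd \geq 0$) or undefined at the tangent asymptotes. Consequently $\bodd(a) = \goddinv(\fodd(a))$ is a well-defined continuous function on each odd branch, and the composition $(\tau, \mu_l^{\mathrm{odd}}) = F_1(a, \bodd(a))$ traces out the $l$-th odd eigenvalue curve.

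The even case runs parallel. I would verify that $\geven$ is strictly increasing on $[0,\infty)$ with range $[0,\infty)$ (this needs a slightly more careful derivative estimate, using that $\sinh(2b) > 2b$ for $b>0$), and that $\feven(a) = -a^3 \cot a$ is nonnegative and strictly increasing from $0$ to $+\infty$ on each interval $[(l+1/2)\pi, (l+1)\pi)$, so that $\beven = \geveninv \circ \feven$ is well-defined there. The union of the odd parameter intervals $[l\pi,(l+1/2)\pi)$ and the even parameter intervals $[(l+1/2)\pi,(l+1)\pi)$, taken over all $l \geq 0$, exactly covers $[0,\infty)$; together with the bijectivity of $F_1$, this shows that the listed branches, combined with the zero-eigenvalue branch from Lemma~\ref{lemma:eigenfunctionsUHP2}, exhaust the upper-half-plane spectrum.

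The main obstacle is the monotonicity verification for the four functions $\fodd, \feven, \godd, \geven$ on the relevant intervals; without strict monotonicity the inverses would not exist as single-valued functions and the parameterizations would not be well-defined. A secondary point is confirming that no eigenvalues are hidden on the negative branches of $\fodd$ or $\feven$, but this reduces immediately to the sign constraint $\godd, \geven \geq 0$, so no additional work is needed there.
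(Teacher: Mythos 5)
Your proposal is correct and follows essentially the same route as the paper: the paper's proof of Theorem~\ref{thm:paramUHP} simply cites Lemmas~\ref{lemma:eigenfunctionsUHP} and \ref{lemma:eigenfunctionsUHP2} together with Definition~\ref{defn:paramUHP}, where the monotonicity of $\godd$, $\geven$ and the branch restrictions of $\fodd$, $\feven$ that you verify are exactly the ingredients being invoked. You have merely written out explicitly the checks the paper leaves implicit, so there is nothing to correct.
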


\begin{proof} This theorem follows immediately from our work for Lemma~\ref{lemma:eigenfunctionsUHP},  ~\ref{lemma:eigenfunctionsUHP2} and our definitions in Definition~\ref{defn:paramUHP}.
\end{proof}

\subsection{\bf Properties of the eigenvalue curves in the upper half-plane}
In this section, we state and prove several properties of the eigenvalue branches lying in the upper half-plane. We also state some properties which are clear from numerical investigation but for which we do not have rigorous proof. Our first results, about intersections of eigenvalue branches, follow from our parameterizations.

\begin{proposition}[Nonintersection of Eigenvalue Curves]\label{prop:UHPintersection} The only intersection of eigenvalue branches in the upper half-plane occurs at the origin. More precisely, for all indices $l_1,l_2$, we have
\begin{align*}
\mu_{l_2}^\mathrm{odd}(\tau) &>\mu_{l_1}^{\mathrm{odd}}(\tau) \qquad\text{when $l_2>l_1\geq 0$ and $\tau\geq-(l_1 \pi)^2$,}\\
\mu_{l_2}^{\mathrm{even}}(\tau) &>\mu_{l_1}^{\mathrm{even}}(\tau) \qquad\text{when $l_2>l_1\geq 0$ and $\tau\geq-\left((l_1+1/2) \pi\right)^2$,}\\
\mu_{l_2}^\mathrm{even}(\tau) &>\mu_{l_1}^{\mathrm{odd}}(\tau) \qquad\text{when $l_2\geq l_1\geq 0$ and $\tau\geq-(l_1 \pi)^2$,}\\
\mu_{l_2}^{\mathrm{odd}}(\tau) &>\mu_{l_1}^{\mathrm{even}}(\tau) \qquad\text{when $l_2>l_1\geq 0$ and $\tau\geq-\left((l_1+1/2) \pi\right)^2$.}
\end{align*}
\end{proposition}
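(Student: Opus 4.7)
I plan to deduce all four stated inequalities from a single monotonicity principle obtained by eliminating $b$ from the parameterization of Theorem~\ref{thm:paramUHP}. Since $F_1(a,b)=(b^2-a^2,\,a^2 b^2)$, every branch satisfies $b^2 = a^2+\tau$, which gives the key identity
\[
\mu = a^2(a^2+\tau).
\]
At each fixed $\tau$, the right-hand side is strictly increasing in $a$ on its natural domain $\{a>0 : a^2+\tau\geq 0\}$, and the parameter intervals from Definition~\ref{defn:paramUHP} are pairwise disjoint and arranged in a predictable order along the $a$-axis. These two facts reduce the proposition to case-by-case bookkeeping.

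\textbf{Steps.} First I would verify the monotonicity: the derivative with respect to $a$ is $2a(2a^2+\tau)$, and $2a^2+\tau = a^2 + (a^2+\tau) \geq a^2 > 0$ whenever $a>0$, using $a^2+\tau \geq 0$. Next I would list the parameter intervals in their natural order
\[
[0,\tfrac{\pi}{2}),\ [\tfrac{\pi}{2},\pi),\ [\pi,\tfrac{3\pi}{2}),\ [\tfrac{3\pi}{2},2\pi),\ \dots,
\]
corresponding respectively to odd $l=0$, even $l=0$, odd $l=1$, even $l=1$, and so on. Finally, for each of the four inequalities I would check two things: that the index hypothesis places the higher-indexed branch's $a$-interval strictly to the right of the lower-indexed branch's (the starting endpoints are $l\pi$ for odd $l$ and $(l+1/2)\pi$ for even $l$, and one verifies in each case that the minimum of the higher interval exceeds the supremum of the lower one), and that the stated $\tau$-threshold is precisely the starting $\tau$-value of the lower-indexed branch---namely $-(l\pi)^2$ or $-((l+1/2)\pi)^2$ depending on parity---while a quick comparison of thresholds shows the higher-indexed branch also exists at that $\tau$. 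Combining the interval ordering with the monotonicity step then yields $\mu_{\text{higher}}(\tau) > \mu_{\text{lower}}(\tau)$.

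\textbf{Main obstacle.} I don't anticipate a serious analytic obstacle; the proof is essentially a one-line derivative check plus bookkeeping on interval endpoints. The one subtlety is notational: writing $\mu_l^{\mathrm{odd}}(\tau)$ as a single-valued function of $\tau$ implicitly requires $\tau(a)$ to be injective on each parameter interval, a fact visually evident from Figures~\ref{fig:UHPodd} and \ref{fig:UHPeven} but not yet established rigorously. Fortunately the argument above bypasses this concern: it compares \emph{any} $a$ in a lower-indexed interval against \emph{any} $a$ in a higher-indexed interval realizing the same $\tau$, and monotonicity of $a \mapsto a^2(a^2+\tau)$ orders the corresponding $\mu$-values regardless of whether the branch is a graph over $\tau$. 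Hence the proposition follows without needing to resolve the injectivity question.
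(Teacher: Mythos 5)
Your argument is correct, but it takes a genuinely different route from the paper's. The paper proves non-intersection through the bijection $F_1$: since $F_1$ is injective, two branches can only meet at a $(\tau,\mu)$ coming from a single $(a,b)$, so same-symmetry branches (whose $a$-intervals are disjoint) never meet, and an odd--even meeting would force $a^3\tan a=b^3\tanh b$ and $-a^3\cot a=b^3\coth b$ simultaneously; multiplying gives $-a^6=b^6$, impossible unless $(a,b)=(0,0)$, i.e.\ the origin. You instead eliminate $b$ to get $\mu=a^2(a^2+\tau)$, check that $\frac{\partial}{\partial a}\left[a^2(a^2+\tau)\right]=2a(2a^2+\tau)=2a(a^2+b^2)>0$, and invoke the ordering of the parameter intervals. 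What your approach buys: it directly delivers the four strict inequalities as stated, whereas the paper's argument only rules out intersections and leaves the ordering of the branches to be inferred from the $\tau$-intercepts by continuity; in that sense yours is the more complete proof of the displayed claims. What the paper's approach buys: it isolates the unique intersection at the origin cleanly, and the same ``multiply the two eigenvalue conditions'' device transfers to the super-parabolic region, where the branches genuinely do cross. Two pieces of bookkeeping you should make explicit (neither is a gap): (i) each branch attains every $\tau$ at or above its intercept, because $\tau(a)=\bodd(a)^2-a^2$ is continuous on $[l\pi,(l+1/2)\pi)$, equals $-(l\pi)^2$ at the left endpoint, and tends to $+\infty$ as $a\to(l+1/2)\pi^{-}$ since $\goddinv(a^3\tan a)\to\infty$ (similarly for the even branches), so both branches being compared really are defined at the stated $\tau$; and (ii) the monotonicity must be applied on all of $[a_1,a_2]$, which holds because $a^2+\tau\geq a_1^2+\tau=b_1^2\geq 0$ throughout that interval.
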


\begin{proof} By Lemma~\ref{lemma:Bij}(\eqref{lemma:UHP}), since the $(a,b)$ and $(\tau,\mu)$ pairs are in one-to-one correspondence, we cannot have different $(a, b)$ pairs mapped to the same $(\tau, \mu)$. Thus the only possible intersections occur when different-symmetry eigenvalues have the same $(a,b)$ values.

Suppose the odd and even eigenvalue curves intersect. At this point, $a$ and $b$ satisfy eigenvalue conditions \eqref{eqn:UHPevalCondOdd} and \eqref{eqn:UHPevalCondEven} simultaneously: 
\begin{align*}
a^3\tan a &= b^3\tanh b,\\
-a^3\cot a &= b^3\coth b.
\end{align*}
When $b>0$, we can conclude that both $\tan a$ and $\cot a$ are finite and so multiplication of equations gives $-a^6=b^6$, which has no solution. When $b=0$, the conditions $a^3\tan a=0$ and $-a^3\cot a=0$ are simultaneously satisfied only when $a=0$. Therefore, \eqref{eqn:UHPevalCondOdd} and \eqref{eqn:UHPevalCondEven} both hold only when $(a, b)=(0,0)$, which corresponds to $(\tau, \mu)=(0,0)$.
\end{proof}

\begin{proposition}[Direction of parameterization] 
For a fixed $l\geq 0$, the parameterizations of $\mu_l^{\mathrm{odd}}(\tau)$ and $\mu_l^{\mathrm{even}}(\tau)$ in Theorem~\ref{thm:paramUHP} travel to the right and upwards. That is, $\tau$ and $\mu$ are strictly increasing functions of $a$.
\end{proposition}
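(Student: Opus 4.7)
My plan is to prove monotonicity of $\mu$ first using the explicit product formula $\mu = a^2 b^2$, and then deduce monotonicity of $\tau$ by means of a Feynman--Hellmann identity that couples $\mu'(a)$ and $\tau'(a)$.

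First, I would verify that $b(a)$ is itself a smooth, strictly increasing function of $a$ on the interior of each branch. In the odd case, $\bodd = \goddinv \circ \fodd$ by Definition~\ref{defn:paramUHP}, and direct differentiation gives $\fodd'(a) = 3a^2\tan a + a^3 \sec^2 a > 0$ on $(l\pi,(l+1/2)\pi)$ (the second summand is strictly positive, the first nonnegative) and $\godd'(b) = 3b^2 \tanh b + b^3\,\text{sech}^2 b > 0$ for $b > 0$, so $\bodd$ is strictly increasing. The even case is analogous, with $\geven'(b) > 0$ following from the inequality $\sinh(2b) > 2b$. Since $a$ and $b(a)$ are both nonnegative and strictly increasing in $a$, the product $\mu(a) = a^2 b(a)^2$ is strictly increasing as well.

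The strict monotonicity of $\tau(a) = b(a)^2 - a^2$ is the step I expect to be the main obstacle. A direct computation, using $\tau'(a) = 2(bb' - a)$ and the implicit relation $b' = \fodd'(a)/\godd'(b)$ together with $a^3 \tan a = b^3 \tanh b$, reduces the claim $\tau'(a) > 0$ to positivity of a certain quadratic in $\tau$ whose coefficients depend on $(a,b)$; this positivity is not manifest, since the quadratic's discriminant can change sign along a branch. Instead I would use a Feynman--Hellmann argument. Normalize the eigenfunction $u(x;a)$ along the branch so that $\int_{-1}^1 u^2\,dx = 1$. One integration by parts using the boundary conditions $u''(\pm 1) = 0$ and $u'''(\pm 1) = \tau u'(\pm 1)$ yields
\[
\mu(a) = \int_{-1}^1 (u'')^2\,dx + \tau(a) \int_{-1}^1 (u')^2\,dx.
\]
Differentiating in $a$ and performing two further integrations by parts (boundary contributions again vanish thanks to the same conditions on $u$), the sum of mixed $u_a$-terms $2\int u''u''_a + 2\tau\int u'u'_a$ collapses via the eigenvalue equation $u^{(4)} - \tau u'' = \mu u$ into $2\mu \int u u_a\,dx$, which vanishes because $\tfrac{d}{da}\int u^2 = 2\int u u_a = 0$. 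What remains is
\[
\mu'(a) = \tau'(a) \int_{-1}^1 (u')^2\,dx.
\]
Every eigenfunction on a nontrivial upper-half-plane branch is non-constant (being of the form $A\sin(ax) + B\sinh(bx)$ or $C\cos(ax) + D\cosh(bx)$ with $a$ or $b$ positive), so the integral on the right is strictly positive; combined with $\mu'(a) > 0$ established in the previous step, this forces $\tau'(a) > 0$, completing the proof.
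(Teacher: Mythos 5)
Your proof is correct, but it reaches the monotonicity of $\tau$ by a genuinely different route than the paper. The first step --- showing $b(a)$ is strictly increasing by differentiating the eigenvalue condition, hence $\mu=a^2b(a)^2$ is strictly increasing --- is essentially identical to the paper's computation of $db/da$. The divergence is in the harder step. The paper handles $\tau$ globally: it inverts the roles of the variables, identifies the parameterized curves with the min--max eigenvalue branches $\tau_j(\mu)$ of the buckling problem $u^{(4)}-\mu u=-(-\tau)u''$ (this identification relies on the nonintersection and ordering result of Proposition~\ref{prop:UHPintersection}), and then reads off from the buckling Rayleigh quotient that each $-\tau_j(\mu)$ is decreasing, hence $\tau$ is increasing in $\mu$ and therefore in $a$. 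You instead prove the local Hadamard/Feynman--Hellmann identity $\mu'(a)=\tau'(a)\int_{-1}^1(u')^2\,dx$ along a single branch; your integration by parts is correct (the boundary terms pair up exactly as $[(u'''-\tau u')u_a]$ and $[u''u'_a]$, both of which vanish by the natural boundary conditions, and the remaining integrand collapses to $2\mu\int uu_a=\mu\,\tfrac{d}{da}\int u^2=0$), and since the eigenfunctions on these branches are non-constant, $\int(u')^2>0$ and the sign of $\tau'$ follows from that of $\mu'$. Your argument is more self-contained --- it needs no ordering of branches and no identification with a variational characterization --- while the paper's argument buys the extra fact that the parameterized curves \emph{are} the Poincar\'e branches of the buckling problem, which is of independent interest. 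The only point you should make explicit is the smooth dependence of the normalized eigenfunction on $a$: this is harmless here because the coefficients $(A,B)$ (resp. $(C,D)$) can be chosen as explicit smooth nonvanishing functions of $(a,b(a))$, e.g. $(A,B)\propto(b^2\sinh b,\,a^2\sin a)$ with $b^2\sinh b>0$ on the interior of the branch, and $b(a)$ is smooth by your implicit-function step, so the normalized $u(\cdot\,;a)$ is differentiable in $a$ and the differentiation under the integral sign is justified.
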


\begin{proof} For the purpose of this proof, we will show that the eigenvalue curves parameterized according to Theorem~\ref{thm:paramUHP} are exactly those given by Poincar\'e's min-max characterization
\begin{equation*}
\mu_j = \min_{S_j}\max_{u\in S_j} \frac{\int_{-1}^{1} |u''|^2 + \tau |u'|^2 \, dx}{\int_{-1}^{1} |u|^2 \, dx},
\end{equation*} 
where $S$ ranges over all $j$-dimensional subspaces of $H^2\left((-1, 1)\right)$. Observe that the eigenvalue $\mu_j$ is increasing as a function of $\tau\in\RR$.

Until now, we have used the terms ``eigenvalue curve'' and ``eigenvalue branch'' interchangeably. For the duration of this proof, we will use the former to refer to the parameterized curves, and denote them by $\mu_l^{\text{odd}} (\tau)$ and $\mu_l^{\mathrm{even}}(\tau)$ for $l\geq0$. The $j$th eigenvalue branch will mean the set of pairs $(\tau,\mu_j)$ obtained from the Poincar\'e principle. Our goal is then to show that each eigenvalue branch corresponds with one of our parameterized curves.

We will prove this invertly, that is, we consider $\tau$ as a function of $\mu$. Then the eigenvalue curves can be thought of as graphs of functions $\tau_l^{\mathrm{odd}}(\mu)$ and $\tau_l^{\mathrm{even}}(\mu)$ for $l\geq0$.

For any free rod eigenvalue $\mu$ associated with a nonconstant eigenfunction $u$, the eigenvalue equation \eqref{eqn:1dimDE} with free boundary conditions is satisfied for $(\tau,\mu)$ if and only if $-\tau$ is an eigenvalue for the equation
\begin{equation} \label{eqn:DEbuckling}
u^{(4)} - \mu u = -(-\tau) u^{\prime\prime}
\end{equation} 
with boundary conditions arising naturally from the minimizers of the associated Rayleigh quotient (for $-\tau$), which has the form 
\begin{equation*}
R[v] = \frac{\int_{-1}^{1} |v''|^2 - \mu |v|^2 \, dx}{\int_{-1}^{1} |v'|^2 \, dx}.
\end{equation*} 
We take  $v\in H^2\left((-1, 1)\right)$ such that $\int_{-1}^{1} v\, dx=0$, so that $v$ is not a constant function. Note that \eqref{eqn:DEbuckling} is the well-known ``buckling eigenvalue'' problem.

Fix an index $l$ and consider the odd eigenvalue curve parameterized by $a\in\left[l\pi, (l+1/2)\pi\right)$. Recall that our parameterization allows us to consider $b$ as a function of $a$. We know from the parameterization that $\mu=a^2b^2$ depends smoothly on $a$ with derivative
\begin{equation*}
\frac{d\mu}{da} = 2ab^2 + 2a^2b\frac{db}{da}.
\end{equation*}
We also know $\mu=0$ when $a=l\pi$ and that $\mu\to\infty$ as $a\to(1+1/2)\pi$. 

From implicit differentiation of the odd eigenvalue condition \eqref{eqn:UHPevalCondOdd}, we obtain
\[
\frac{db}{da} = \frac{3a^2\tan a + a^3\sec^2a}{3b^2\tanh b + b^3(1-\tanh^2b)}.
\]
The denominator is always positive since $\tanh b <1$ for all $b>0$. The numerator is positive for $a\in[l\pi,(l+1/2)\pi)$, and therefore, $db/da$ is positive for all $a$ under consideration. Thus $d\mu/da >0$, and so we may regard the curve as being parameterized by $\mu\in[0, \infty)$. The argument is similar for the even eigenvalue curves. 

Now we write the eigenvalue curves as graphs in the $(\mu,\tau)$-plane of the functions
\[
\tau = \tau_l^{\mathrm{odd}}(\mu)\qquad\text{or}\qquad \tau = \tau_l^{\mathrm{even}}(\mu),
\]
with $\mu\in[0, \infty)$. From Proposition~\ref{prop:UHPintersection}, we know the ordering of $\tau_l^{\mathrm{odd}}(0)$'s and $\tau_l^{\mathrm{even}}(0)$'s for each $l \geq 0$. In other words,
\begin{align*}
\tau_l^{\mathrm{odd}}(0) > \tau_l^{\mathrm{even}}(0) > \tau_{l+1}^{\mathrm{odd}}(0) > \tau_{l+1}^{\mathrm{even}}(0) > \dots
\end{align*}
Again from Proposition~\ref{prop:UHPintersection}, the eigenvalue curves for nonconstant eigenfunctions do not intersect and the ordering at $\mu=0$ is maintained. That is,
\[
\tau_l^{\mathrm{odd}}(\mu) > \tau_l^{\mathrm{even}}(\mu) > \tau_{l+1}^{\mathrm{odd}}(\mu) > \tau_{l+1}^{\mathrm{even}}(\mu) >  \dots  \quad \forall \mu\in[0, \infty).
\]
Therefore, these curves are in fact the eigenvalue branches for the buckling eigenvalues $\tau_1(\mu), \tau_2(\mu), \dots$ of \eqref{eqn:DEbuckling}. 
So we have 
\begin{align*}
\tau_0^{\mathrm{odd}} &= \tau_1,\\
\tau_0^{\mathrm{even}} &= \tau_2,\\
 &\text{etc.}
\end{align*} 
The Rayleigh quotient for \eqref{eqn:DEbuckling} tells us that the $j$th eigenvalue $-\tau_j(\mu)$ is decreasing as a function of $\mu$ for each $j$, and so $\tau_j(\mu)$ is increasing as a function of $\mu$. Hence $\tau_l^{\mathrm{odd}}(\mu)$ and $\tau_l^{\mathrm{even}}(\mu)$ are increasing as functions of $\mu$. They are in fact strictly increasing; otherwise, a single $\tau$-value would correspond to a whole interval of $\mu$-values solving the free rod boundary value problem, which is impossible since the spectrum is discrete. Therefore, each eigenvalue curve is strictly increasing for $\tau$ as a function of $\mu$. Hence by inverting to get $\mu$ as a function of $\tau$, we see that each eigenvalue curve $\mu_l^{\mathrm{odd}}(\tau)$ and $\mu_l^{\mathrm{even}}(\tau)$ are strictly increasing as functions of $\tau$.
\end{proof}

We also list a number of properties of the eigenvalue branches which can be observed numerically, but which we do not prove rigorously. We state only the odd case, since the arguments are similar for the even case.

\begin{enumerate}
\item \textbf{As $l$ increases, the corresponding eigenvalue branch lie further to the left.} 

This can easily be seen in Figures~\ref{fig:UHPodd} and \ref{fig:UHPeven}, or by Proposition~\ref{prop:UHPintersection}.

\item \textbf{Vertical intercepts of parameterized eigenvalue curves.} The vertical intercept of the $l$th eigenvalue branch occurs at $(0, a^4)$, for some $a$. As $l\to\infty$, we have $a = l\pi + \pi/4 + o(1)$.

This can be justified algebraically as follows. Vertical intercepts occur when $\tau=0$, and hence $a = b$. We want values of $a$ such that $a^3\tan a = a^3\tanh a$, and so we must have $a=0$ or $\tan a = \tanh a$. The above asymptotic for $a$ can be improved with better control on how quickly $\tanh(a)\to1$ as $a\to\infty$.

\item \textbf{The eigenvalue curves become linear in limiting cases.}
\begin{enumerate}
\item  As $l$ increases, the eigenvalue curve looks more like to a straight line. This tendency is more pronounced as $l$ increases.
\begin{proof} (Sketch.) For large values of $l$, the graph of $y=a^3\tan a$ becomes quite steep at $a=l\pi$. If $a$ is increased some small amount $\varepsilon$, from $l\pi$ to $l\pi + \varepsilon$, the corresponding $b$ value (see Figure~\ref{fig:UHPoddcondition}) increases by a large amount and thus $\tau = -a^2 + b^2$ also increases greatly. Recall $\mu$ can be expressed in terms of $a$ and $\tau$ as
\[
\mu = a^2\tau + a^4 \qquad\text{ where $\tau = -a^2 + b^2$}.
\]
Note that $\tau$ is growing quickly while $a^2=(l\pi)^2+O(\varepsilon)$ and $a^4=(l\pi)^4+O(\varepsilon)$ remain relatively constant, and hence $\mu$ is approximately linear as a function of $\tau$ for this small range of $a$ values, until $a$ is close to $(l+1/2)\pi$. 
\end{proof}

\item As $\tau$ tends to $\infty$, the eigenvalue curves converge to straight lines with slopes corresponding to the eigenvalues of a free vibrating string.

The connection between the free rod and free string can most intuitively be seen by considering the Rayleigh quotients. The Rayleigh quotient for the free string is given by 
\[
 Q_s[u]=\frac{\int_{-1}^1|u'|^2\,dx}{\int_{-1}^1u^2\,dx}.
\]
If we divide the rod Rayleigh quotient $Q$ (from \eqref{1dimRQ}) by $\tau$, the result can be written as
\[
 \frac{Q[u]}{\tau}=\frac{1}{\tau}\frac{\int_{-1}^1|u''|^2\,dx}{\int_{-1}^1u^2\,dx}+ Q_s[u].
\]
Note that for a fixed function $u$ and large values of $\tau$, the string Rayleigh quotient dominates. We thus expect the slopes of the eigenvalue curves to approach the eigenvalues of the free string. Although there are complications with the spaces over which we take the infima, this almost-linear relationship between the eigenvalues of the string and the rod can be made rigorous in the case of the first nonzero eigenvalue in all dimensions (see \cite{Chasman11}).

In just one dimension, we can make a stronger case by investigating the boundary value problem directly: 
\begin{proof}
Recall $\tau = -a^2 + b^2$ and $\mu = a^2b^2=a^2\tau + a^4$. Fix $l$, so that we consider only the $l$th branch of the eigenvalue curve and of the $a^3\tan a$ graph. As $a$ increases from $l\pi$ to $(l+1/2)\pi$, the corresponding $b$ also increases, and hence $\tau$ does as well. 

Changing our perspective, as $\tau\to\infty$ (and hence $a^3\tan(a)\to\infty$ along the $l$th branch), the value of $a$ satisfying the eigenvalue condition approaches $(l+1/2)\pi$. Thus we can consider $a$ to be nearly constant for large $\tau$, and so we see that $\mu= a^2\tau+a^4$ is nearly linear in $\tau$, with approximate slope $a^2\approx (l+1/2)^2\pi^2$.
\end{proof}
\end{enumerate}
\end{enumerate}

\section{\bf The lower half-plane: super-parabolic region}\label{sec:LHP}
In this section, we address the case of negative eigenvalues whose curves lie above the critical parabola $\mu = -\tau^2 /4$ for all $\tau <0$. We identify the eigenfunctions and derive the eigenvalue conditions from the natural boundary conditions. We also provide a complete description of the eigenvalues as functions of $\tau$ via parameterization, and identify some key properties of the eigenvalue curves.

\subsection{\bf Eigenvalue conditions in the super-parabolic region}
 As we will see, the region $\{(\tau, \mu) : \tau<0, -\tau^2/4\leq \mu <0\}$ corresponds to the characteristic equation $r^4-\tau r^2 -\mu=0$ having purely imaginary roots. When $\mu$ and $\tau$ are both negative and satisfy $\mu \geq -\tau^2 /4$, we may factor the eigenvalue equation as
\begin{equation}\label{eqn:factorLHP}
\left(\frac{d^2}{dx^2}+a^2\right)\left(\frac{d^2}{dx^2}+b^2\right)u=0,
\end{equation}
where $\mu = -a^2b^2$ and $\tau = -a^2-b^2$ by Lemma~\ref{lemma:Bij}(\ref{lemma:LHP}). We may take $a$ and $b$ to be positive since $\mu < 0$.

\begin{lemma}[Eigenfunctions and eigenvalue conditions] \label{lemma:eigenfunctionsLHP}
For all $\tau<0$ and eigenvalues $\mu$ satisfying $-\tau^2 /4 \leq \mu <0$, at least one of the following must hold: 
\begin{enumerate}
\item The eigenvalue $\mu$ is associated with an odd eigenfunction $u_o$ of the form $u_o(x) = A \sin(ax) + B \sin(bx)$, where $A$ and $B$ are nonzero constants, and $a$ and $b$ are positive numbers such that $\mu=-a^2b^2$, $\tau = -a^2-b^2$, and 
\begin{equation}\label{eqn:LHPevalCondOdd}
a^3\tan a = b^3\tan b.
\end{equation}
\item The eigenvalue $\mu$ is associated with an even eigenfunction $u_e$ of the form $u_e(x) = C \cos(ax) + D \cos(bx)$, where $C$ and $D$ are nonzero constants, and $a$ and $b$ are positive numbers such that $\mu=-a^2b^2$, $\tau = -a^2-b^2$, and 
\begin{equation}\label{eqn:LHPevalCondEven}
 a^3\cot a = b^3\cot b.
\end{equation}
\item The eigenvalue $\mu$ is associated with an even eigenfunction $u_e(x) = C \cos(ax) + Dx \sin(ax)$, where $C$ and $D$ are nonzero constants, and $a\approx 1.13943$ satisfies
\[
 \sin(2a) = \frac{2a}{3}.
\]
This is the only eigenvalue satisfying $\mu=-\tau^2/4$ and occurs when $\tau\approx-2.5966$ and $\mu\approx-1.6856$. 
\end{enumerate}
\end{lemma}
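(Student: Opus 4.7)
The plan is to mirror the structure of the proof of Lemma~\ref{lemma:eigenfunctionsUHP} but with the factorization \eqref{eqn:factorLHP} in place of \eqref{eqn:factorUHP}. I first invoke Lemma~\ref{lemma:Bij}(\ref{lemma:LHP}) to write $\tau=-a^2-b^2$, $\mu=-a^2b^2$ with $a\geq b>0$, and split into two cases based on whether the four roots $\pm ia,\pm ib$ of the characteristic polynomial are distinct ($a>b$, corresponding to the strict super-parabolic region $\mu>-\tau^2/4$) or a repeated pair of imaginary roots ($a=b$, on the critical parabola). The remaining case $b=0$ is ruled out since it would force $\mu=0$, placing the pair outside the stated region.

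In the case $a>b>0$, the general solution is a linear combination of $\sin(ax),\cos(ax),\sin(bx),\cos(bx)$. Reducing to the odd and even subspaces gives the ansätze in parts (1) and (2), and rewriting the natural boundary condition $u'''-\tau u'=0$ at $x=\pm 1$ as $u'''+(a^2+b^2)u'=0$ yields a homogeneous $2\times 2$ linear system in $(A,B)$ (respectively $(C,D)$). Setting the determinant to zero and simplifying (using $a,b>0$ to cancel a factor of $ab$) gives
\[
ab\bigl(b^3\sin b\cos a - a^3\sin a\cos b\bigr)=0
\]
in the odd case and the analogous relation with sines and cosines swapped in the even case. Provided $\cos a\cos b\neq 0$ (resp.\ $\sin a\sin b\neq 0$), these reduce to \eqref{eqn:LHPevalCondOdd} and \eqref{eqn:LHPevalCondEven}; the cases in which one of these trigonometric factors vanishes are handled separately by returning to the $2\times 2$ system directly and checking when nontrivial $(A,B)$ or $(C,D)$ exists.

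When $a=b>0$, the characteristic polynomial has a double root at each of $\pm ia$, so the general solution now incorporates the secular terms $x\sin(ax)$ and $x\cos(ax)$. The odd ansatz becomes $u_o=A\sin(ax)+Bx\cos(ax)$ and the even ansatz $u_e=C\cos(ax)+Dx\sin(ax)$. Applying the boundary conditions with $\tau=-2a^2$ at $x=1$ (even symmetry takes care of $x=-1$) produces another $2\times 2$ system; after routine differentiation and collecting trigonometric terms, the determinant condition simplifies via $\sin^2+\cos^2=1$ and $2\sin a\cos a=\sin(2a)$ to $\pm a^2+\tfrac{3a}{2}\sin(2a)=0$. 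For the even case this is $\sin(2a)=2a/3$, yielding the transcendental equation claimed in (3); for the odd case the sign flips to $\sin(2a)=-2a/3$, and I will observe that this has no root in $(0,\infty)$ because $|\sin(2a)|\leq 1$ forces $a\leq 3/2<\pi/2$, on which interval $\sin(2a)>0$.

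The main obstacle I anticipate is bookkeeping: the boundary condition $u'''-\tau u'=0$ must be rewritten in terms of $a,b$ before the algebra becomes tractable, and in the $a=b$ case the derivatives of $x\sin(ax)$ and $x\cos(ax)$ produce several like terms whose cancellations must be tracked carefully to reach the clean identity $\sin(2a)=2a/3$. The final quantitative claim (that the unique positive solution is $a\approx 1.13943$ with corresponding $\tau\approx -2.5966$ and $\mu\approx -1.6856$) will follow from a short numerical computation together with monotonicity of $\sin(2a)-2a/3$ on the relevant interval, using that the left side is bounded while $2a/3$ is unbounded.
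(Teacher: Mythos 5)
Your proposal is correct and follows essentially the same route as the paper: factor via Lemma~\ref{lemma:Bij}(\ref{lemma:LHP}), split into the distinct-roots case $a>b$ and the repeated-roots case $a=b$ on the critical parabola, reduce to odd/even ansätze, and set the $2\times2$ determinants to zero to obtain \eqref{eqn:LHPevalCondOdd}, \eqref{eqn:LHPevalCondEven}, and $\sin(2a)=\pm 2a/3$. Your extra care with the degenerate subcases ($\cos a\cos b=0$, etc.) and your explicit argument that $\sin(2a)=-2a/3$ has no positive root are refinements of details the paper states without proof, not a different method.
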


\begin{proof} Since we are considering those $(\tau,\mu)$ pairs satisfying $-\tau^2/4\leq\mu<0$ and $\tau<0$, we may factor the eigenvalue equation as \eqref{eqn:factorLHP}. By Lemma ~\ref{lemma:Bij}(\ref{lemma:LHP}), the boundary conditions can be expressed in terms of $a$ and $b$ as follows:
\[
\begin{cases}
u'' = 0 &\text{when $x=\pm 1$,}\\
u''' + (a^2+b^2)u' = 0 &\text{when $x = \pm 1$.}
\end{cases}
\] 
From the factorization~\eqref{eqn:factorLHP}, the characteristic equation is 
\[
r^4 + (a^2+b^2)r^2 + a^2b^2 = 0. 
\]
Since $a, b > 0$, the quartic equation has solutions $r = \pm ia, \pm ib$. We must consider two cases, depending on the multiplicities of the roots.

\emph{Case 1: $a\neq b$.}
In this case, the differential equation has four linearly independent solutions: $e^{\pm iax}$ and $e^{\pm ibx}$. Because we have shown that we need consider only odd and even solutions, we express these solutions instead as linear combinations of trigonometric functions, chosen according to symmetry.

Writing $u_o$ for the odd eigenfunction and $u_e$ for the even eigenfunction, we have
\begin{align*}
u_o(x) &= A \sin(ax) + B \sin(bx),\\
u_e(x) &= C \cos(ax) + D \cos(bx).
\end{align*}
We consider the odd eigenfunction first. We wish to determine which choices of $a$ and $b$ (and hence $\tau$ and $\mu$) yield a solution to the boundary value problem. Applying the two boundary conditions yields
\[
\begin{cases}
-Aa^2\sin a - Bb^2\sin b = 0,\\
-Aab^2\cos a - Ba^2b\cos b = 0.
\end{cases}
\]
We require that our linear combination coefficients $(A, B)$ be nontrivial, so the system's determinant must vanish. Since $a$ and $b$ are nonzero, this is equivalent to
\[
a^3\tan a = b^3 \tan b.
\]
For the even eigenfunction $u_e$, applying the boundary conditions gives us
\[
\begin{cases}
-Ca^2\cos a - Db^2\cos b = 0,\\
Cab^2\sin a + Da^2b\sin b = 0,
\end{cases}
\]
and reasoning as before, we conclude that $a, b$ must satisfy
\[
a^3\cot a = b^3\cot b.
\]

\emph{Case 2: $a=b$.} In this case, we now have $\tau=-2a^2$ and $\mu=-a^4$, and thus $\mu=-\tau^2/4$. Therefore, this case corresponds to points $(\tau,\mu)$ on the critical parabola.

Our characteristic equation becomes $r^4 + 2a^2r^2 + a^4 = 0$, which has two double purely-imaginary roots $r=\pm ia$. Expressing the linearly independent solutions in terms of trigonometric functions, we have $\sin(ax)$, $\cos(ax)$, $x\sin(ax)$, and $x\cos(ax)$.  We now have the odd and even eigenfunctions
\begin{align*}
u_o(x) &= A \sin(ax) + Bx \cos(ax),\\
u_e(x) &= C \cos(ax) + Dx \sin(ax).
\end{align*}
Applying our boundary conditions to the odd eigenfunction, we obtain
\[
\begin{cases}
-Aa^2\sin a - 2Ba\sin a - Ba^2\cos a = 0,\\
-Aa^3\cos a + Ba^2\cos a + Ba^3\sin a = 0.
\end{cases}
\]
Again we require that our linear combination coefficients be nontrivial, so the system's determinant must vanish. Thus $-3a^3\sin a\cos b - a^4 = 0$, and since $a$ is nonzero, 
\[
\sin(2a) = -\frac{2a}{3}.
\]
This equation has no real solutions for $a>0$, and so there is no odd solution in this case.

For the even eigenfunction $u_e$, applying the boundary conditions gives us
\[
\begin{cases}
-Ca^2\cos a + 2Da\cos a - Da^2\sin a = 0,\\
Ca^3\sin a + Da^2\sin a - Da^3\cos a = 0.
\end{cases}
\]
This time, the requirement that $(C,D)$ be nontrivial yields $-3a^3\sin a\cos b + a^4 = 0$, or equivalently, since $a$ is nonzero,
\[
\sin(2a) = \frac{2a}{3}.
\]
This equation has only one positive solution, $a\approx 1.13943$. We find our approximate $\tau$ and $\mu$ values according to $\tau=-2a^2$ and $\mu=-a^4$. Thus we have only one eigenvalue on the critical parabola, corresponding to the above even eigenfunction.
\end{proof}

\subsection{\bf Parameterization of eigenvalue curves in the super-parabolic region}

Lemma~\ref{lemma:eigenfunctionsLHP} allows us to smoothly parameterize the eigenvalue branches lying in the super-parabolic region, mirroring our approach for the upper half-plane. We treat odd and even branches separately. 

\subsubsection{\bf Parameterization of odd eigenvalue branches}\label{sec:paramSPOdd}
Recall that our odd eigenvalue condition \eqref{eqn:LHPevalCondOdd} can be written as $a^3\tan a=b^3\tan b$. With this in mind, we define the functions
\begin{equation*}
\fodd(a) = a^3\tan a,\qquad \godd(b) = b^3\tan b, \qquad a, b>0
\end{equation*}
Our odd eigenvalues are thus determined by the equation $\fodd(a)=\godd(b)$, and we wish to express this condition explicitly in the form $a=\foddinv(\godd(b))$. However, the function $\fodd$ is not one-to-one on $a>0$, and we have infinitely many choices of restricted domain. This is how our parameterization produces the infinitely many eigenvalue branches that we observe numerically (see, e.g., Figure~\ref{fig:zoomspec}).

\begin{figure}[t]
    \begin{center}
\includegraphics[scale=0.55]{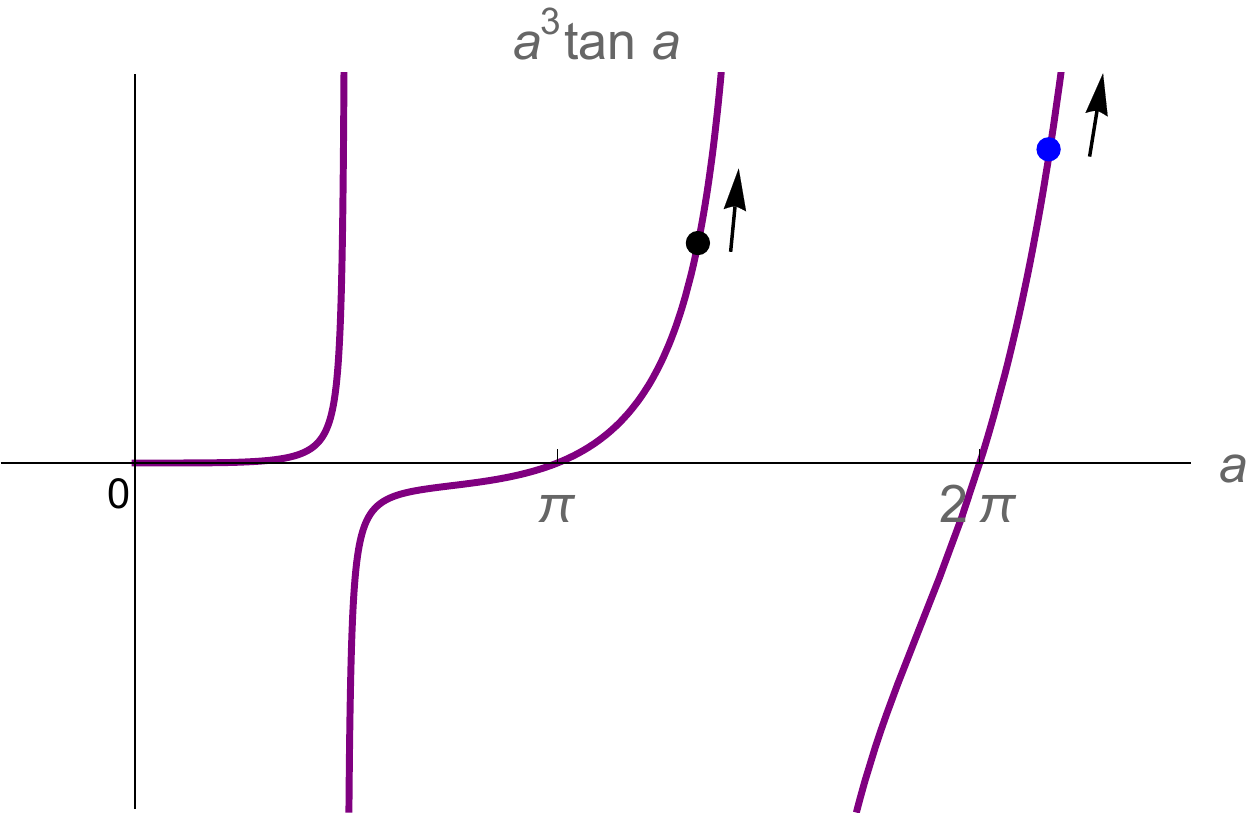}
\qquad
\includegraphics[scale=0.55]{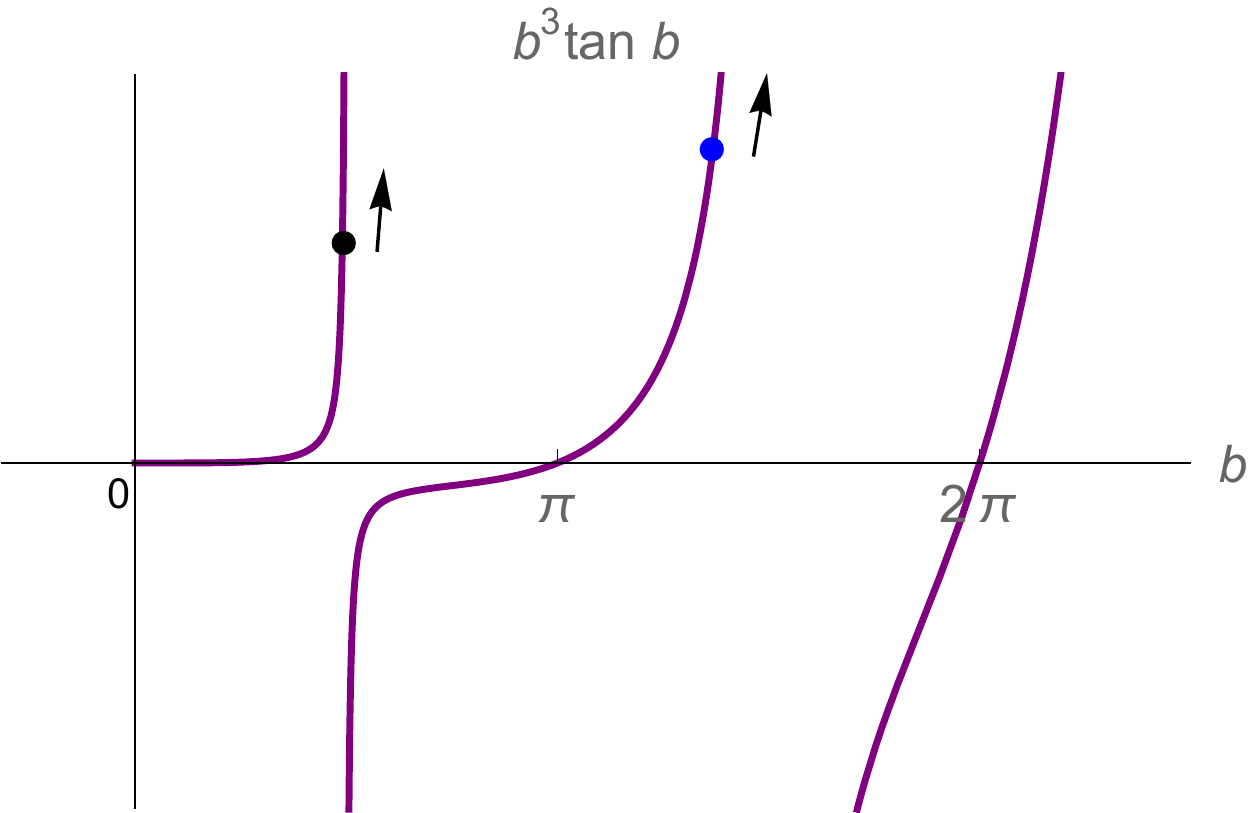}
\caption{\label{fig:LHPoddcondition} By Lemma~\ref{lemma:eigenfunctionsLHP} and the definitions of $\fodd$ and $\godd$, the odd eigenvalue condition for the super-parabolic region can be written as $\fodd(a)=\godd(b)$. The points on the above curves illustrate the dependence of $a$ on $b$ for a given branch, that is, $a=a_{l, \mathrm{odd}}(b)$. The points shown here are on the $l=1$ branch, and the arrows indicate motion of points as $a$ increases.}  
     \end{center}
    \end{figure}
    
\begin{definition}[Parameterization for the super-parabolic region, odd case]\label{defn:paramLHPOdd}
Observe that $\fodd$ is one-to-one when restricted to intervals $((j-1/2)\pi, (j+1/2)\pi)]$ for integers $j\geq1$. Write $R_j\fodd(a)$ for the restriction of $\fodd$ to its $j$th branch, that is,
\[
R_j\fodd(a) = \fodd\Big|_{\big((j-1/2)\pi, (j+1/2)\pi\big]}, \qquad j\geq 1.
\]
These functions are bijections from their domains to all of $\RR$. Therefore, for each integer $l\geq 1$, we may define a map $\foddinv(b;l)$ which maps from $b\in(0, \infty)$ to the interval $a\in(l\pi - \pi/2, \infty)$ according to
\[
\foddinv(b; l) = (R_{k+l}\fodd)^{-1}(b^3\tan b), \quad \max\left\{\left(k-\frac{1}{2}\right)\pi, 0\right\} < b\leq \left(k + \frac{1}{2}\right)\pi,\ k\geq 0.
\]
To understand this map, notice that $\foddinv(b;l)$ maps restricted domains of the $k$th branch of $\fodd$ to the $k+l$th branch, i.e.,
\begin{align*}
\left(k\pi-\frac{\pi}{2}, k\pi + \frac{\pi}{2}\right] \mapsto \left((k+l)\pi - \frac{\pi}{2}, (k+l)\pi + \frac{\pi}{2}\right].
\end{align*}
Finally, we define the composition of the restrictions of $\foddinv$ with $\godd(b)$:
\[
a_{l, \mathrm{odd}}(b) = \foddinv(b^3\tan b; l) \quad\text{for $b>0$}. 
\]
We now have a family of parameterizations for the branches given by the odd eigenvalue condition~\eqref{eqn:LHPevalCondOdd}, indexed by integers $l\geq 1$.
\end{definition}

\subsubsection{\bf Parameterization of even eigenvalue branches}
As in the odd case, our goal is to rewrite the even eigenvalue condition \eqref{eqn:LHPevalCondEven} as $-a^3\cot a=-b^3 \cot b$ and in an explicit form, providing a parameterization of the eigenvalue branches.
\begin{definition}[Parameterization for the super-parabolic region, even case]\label{defn:paramLHPEven}
Define 
\[
\feven(a) = -a^3\cot a,\quad \geven(b) = -b^3\cot b, \qquad a, b>0.
\]
We also define the numbers $a^{\ast}$ (resp. $b^\ast$) to be the point where the first branch of $\feven$ (resp. $\geven$) obtains its minimum (see Figure~\ref{fig:LHPevenfirstcondition}). Observe that $\feven$ is one-to-one when restricted to intervals $(j\pi, (j+1)\pi]$ for integers $j\geq1$, and to $[a^{\ast}, \pi]$.

Write $R_j\feven(a)$ for the restriction of $\feven$ to its $j$th branch, that is,
\begin{align*}
R_j\feven(a) &= \feven\Big|_{\big(j\pi, (j+1)\pi\big]},\qquad j\geq 1,\\
R_0\feven(a) &= \feven\Big|_{\big[a^{\ast}, \pi\big]}.
\end{align*}
Since these restrictions are invertible, we define the maps $\feveninv(b;l)$ for $l\geq 1$, which map $b\in(0, \infty)$ to $a\in(l\pi, \infty)$ according to
\[
\feveninv(b; l) = (R_{k+l}\feven)^{-1}(-b^3\cot b) \quad k\pi < b\leq (k+1)\pi,\ k\geq 0.
\]
To understand this family of maps, notice that $\feveninv(b;l)$ maps
\begin{align*}
\Big(k\pi, (k+1)\pi\Big] \mapsto \Big((k+l)\pi, (k+l+1)\pi\Big].
\end{align*}
Likewise, define the map $\feveninv(b;0)$ which maps from $b\in(0, b^{\ast}]$ to $a\in[a^{\ast}, \pi/2)$ by
\[
\feveninv(b; 0) = (R_{0}\feven)^{-1}(-b^3\cot b) \quad\text{when $b\in(0, b^{\ast}]$}.
\]
For any integer $l\geq0$, we may now express the $l$th branch of the eigenvalues according to $a=a_{l, \mathrm{even}}(b)$, where
\[
a_{l, \mathrm{even}}(b) = \feveninv(-b^3\cot b; l) \text{\quad for $b>0$}. 
\]
\end{definition}

  \begin{figure}[t]
    \begin{center}
\includegraphics[scale=0.55]{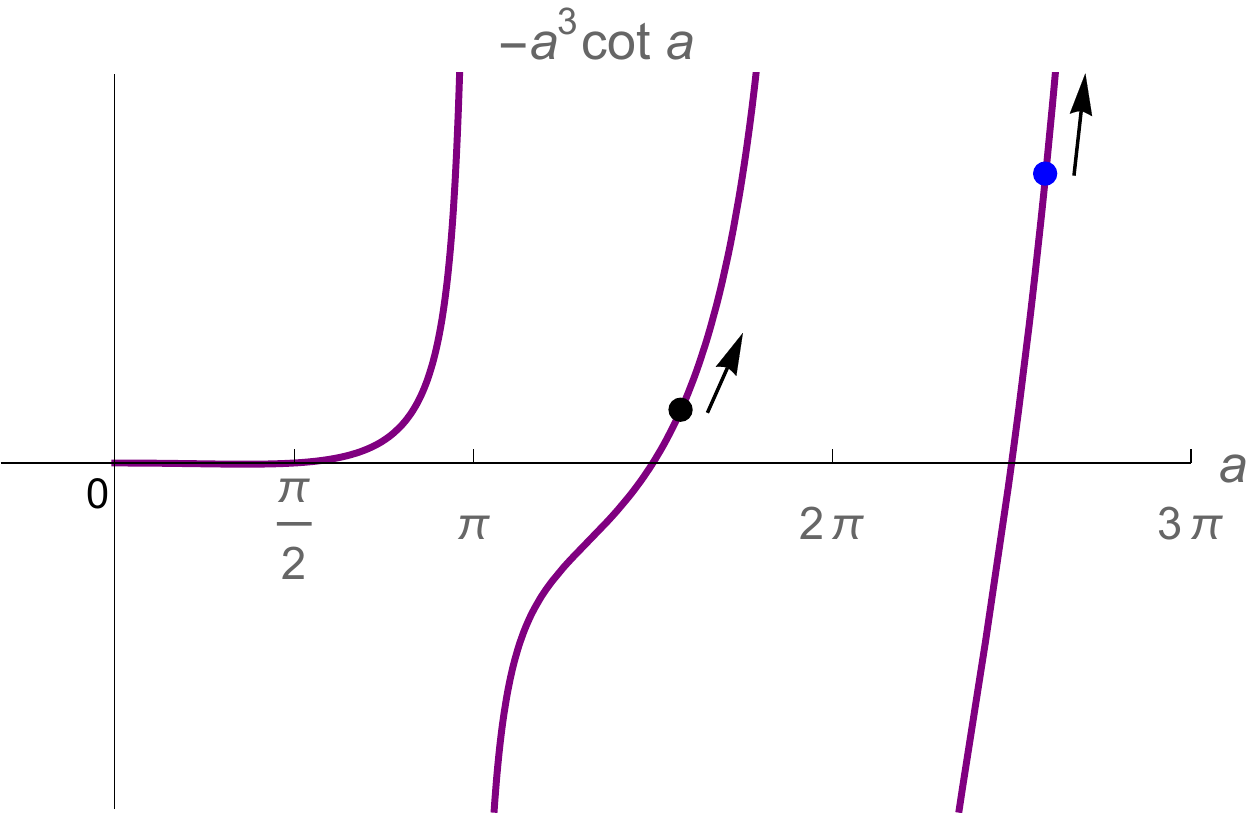}
\qquad
\includegraphics[scale=0.55]{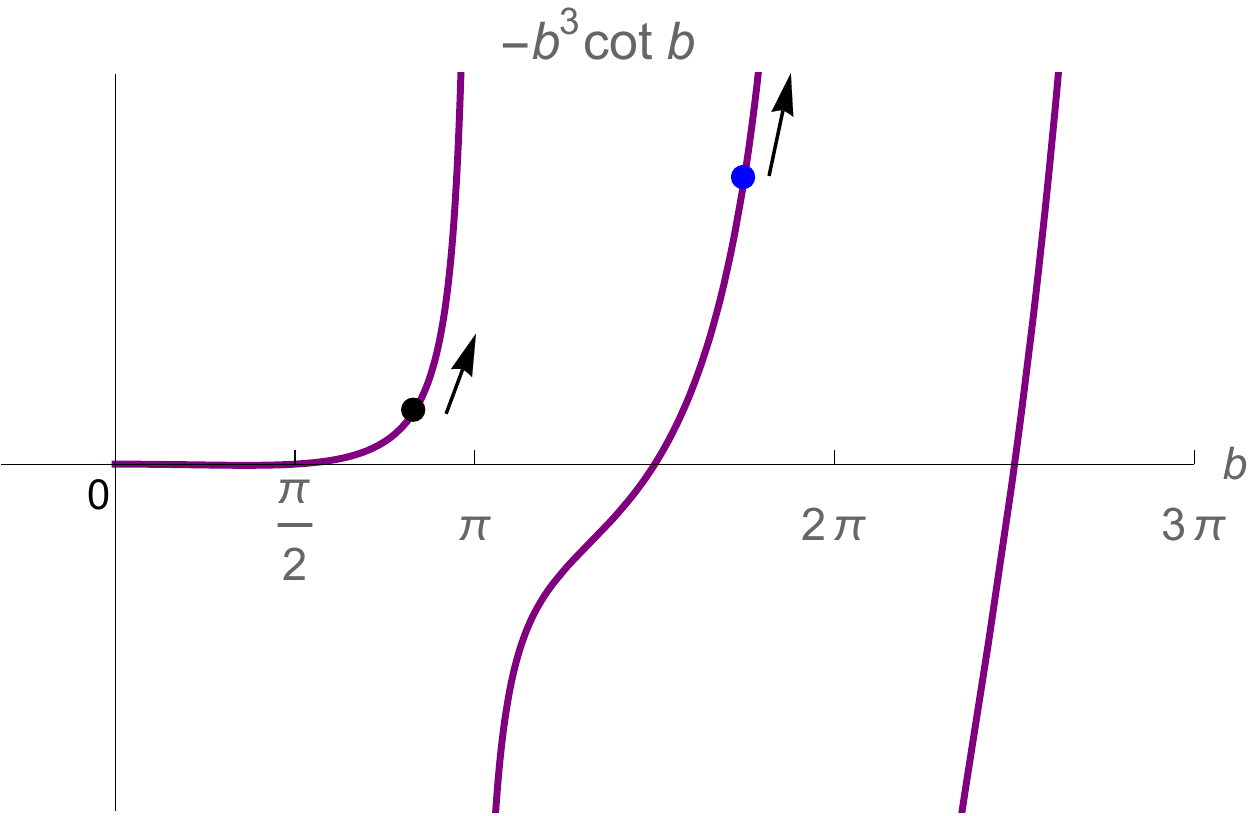}
\caption{\label{fig:LHPevencondition} By Lemma~\ref{lemma:eigenfunctionsLHP}, the even eigenvalue condition is $\feven(a)=\geven(b)$ in the super-parabolic region. As in the odd case, the points on the curves indicate how to interpret $a$ in terms of $b$ and the branch $l$, that is, $a_{l, \mathrm{even}}(b)$. The points shown correspond to the case $l=1$, and the arrows indicate motion of points as $a$ increases.}
    \end{center}
    \end{figure}

\begin{remark}\label{rmk:sameab} The eigenvalue conditions we have derived so far take the form $f(a)=g(b)$. In the upper half-plane, we parameterized the eigenvalue curves in terms of $a$, inverting $g(b)$. By contrast, we chose $b$ as our parameter for eigenvalue curves in the super-parabolic region, inverting $f(a)$. In this case, the function $f(a)=-a^3\cot a$ is not one-to-one unless we restrict our domain. 

The first branch of $\feven$ requires further restriction, taking either the portion to the left or the right of its minimum at $a=a^{\ast}$ (see Figure~\ref{fig:LHPevenfirstcondition}). Since $a$ is greater than $b$, we discard the portion of the first branch of $\feven$ on $[0, a^{\ast})$. 
\end{remark}

\begin{figure}[t]
    \begin{center}
\includegraphics[scale=0.55]{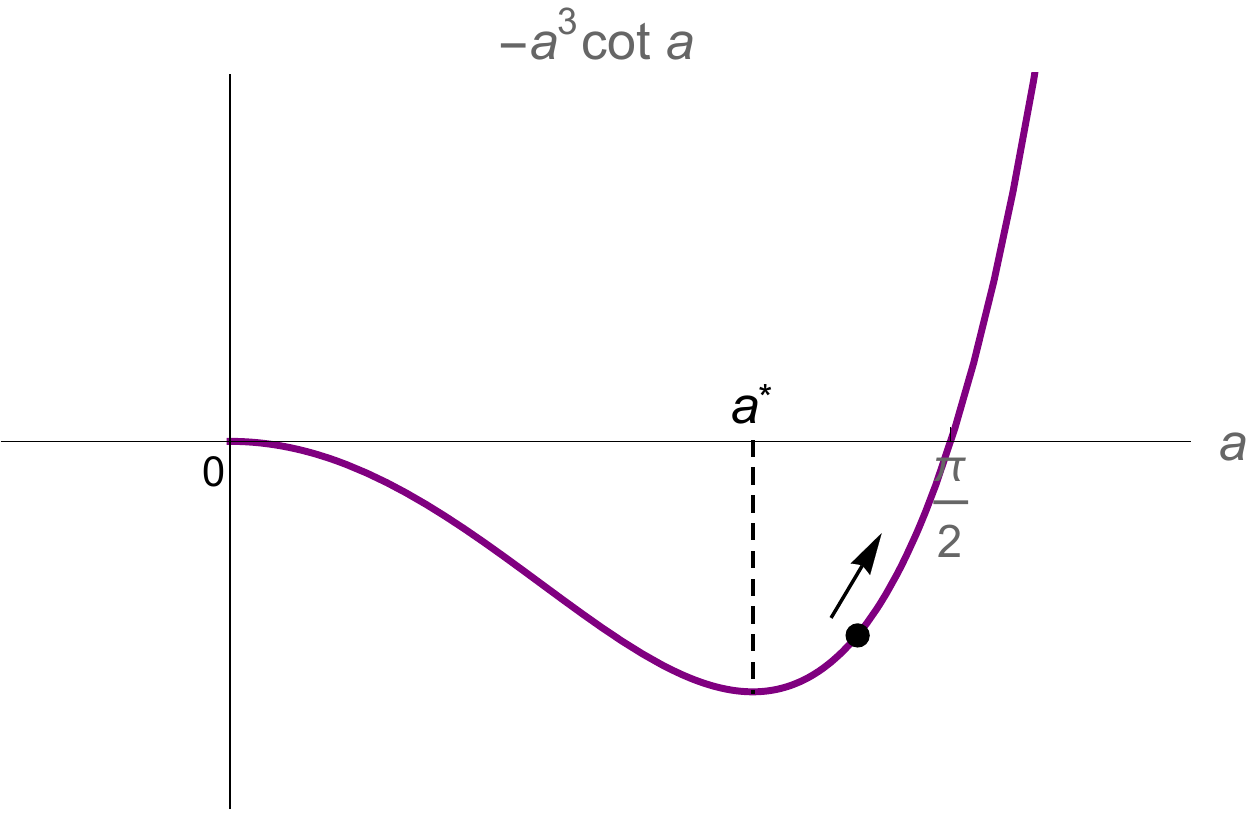}
\includegraphics[scale=0.55]{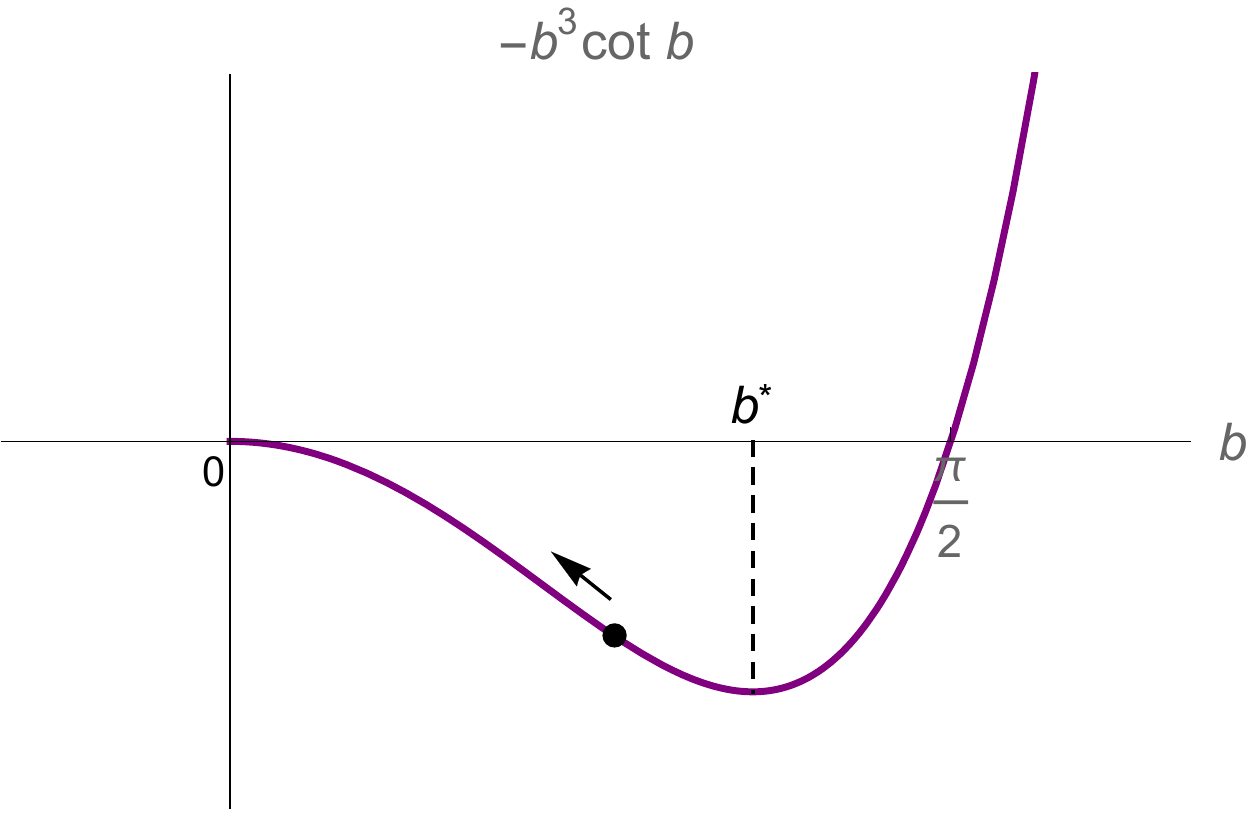}
\caption{\label{fig:LHPevenfirstcondition} Super-parabolic region: the first branch of the even eigenvalue condition $f_{\mathrm{even}}(a) = g_{\mathrm{even}}(b)$ gives $a_{0, \mathrm{even}}(b)$ (see Remark~\ref{rmk:sameab}). The arrows indicate the motion of points as $a$ increases. Here $a^\ast \leq a < \pi/2$ and $0< b\leq b^\ast$, where $a^\ast = b^\ast$ are the locations of the minima. }
    \end{center}
    \end{figure}
The case when $a$, $b$ values on the same branch gives us a little piece of the even eigenvalue curve that connects to a point on the critical parabola (Figure~\ref{fig:LHPeven(0)}).
 \begin{figure}[H]
    \begin{center}
\includegraphics[scale=0.6]{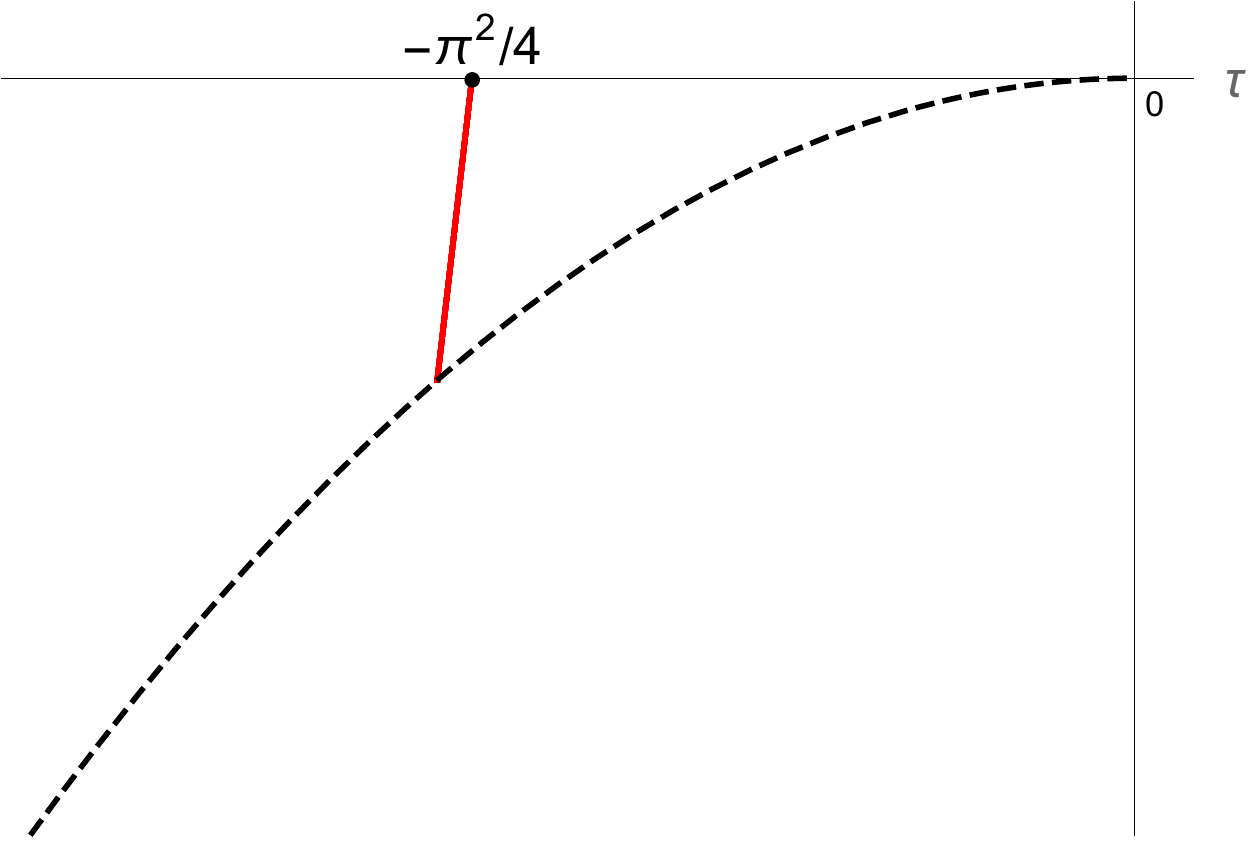}
\caption{\label{fig:LHPeven(0)} The even eigenvalue branch $\mu_0^{\mathrm{even}}(\tau)$ (Remark~\ref{rmk:sameab}), which connects $(-\pi^2/4, 0)$ on the $\tau$-axis with the critical parabola (Theorem~\ref{thm:paramLHP}(\ref{thm:paramLHPEvenA})).}
    \end{center}
    \end{figure}

\begin{theorem}[Eigenvalues of the super-parabolic region]\label{thm:paramLHP}
The eigenvalue curves in the super-parabolic region are indexed by integers $l$. For each index, there is one odd eigenvalue curve and one even eigenvalue curve.
\begin{enumerate}
\item Odd (Figure~\ref{fig:LHPodd}): For each $l\geq 1$, the eigenvalue curve $(\tau, \mu_l^{\mathrm{odd}}) =  F_2(a_{l, \mathrm{odd}}(b), b)$ is parameterized by $b > 0$. 
  \begin{figure}[t]
    \begin{center}
\includegraphics[width=2.5in]{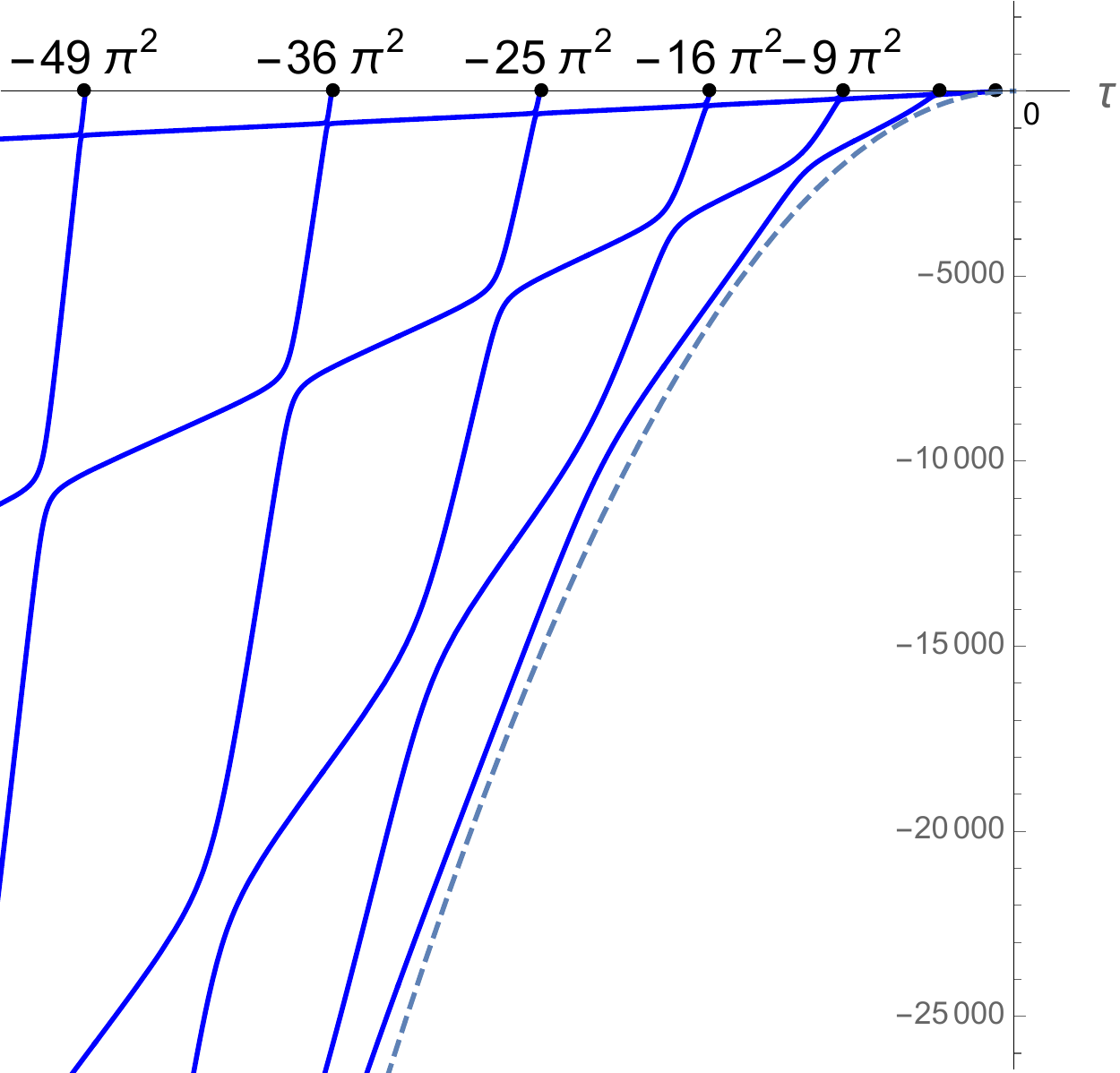}
\qquad
\includegraphics[width=2.5in]{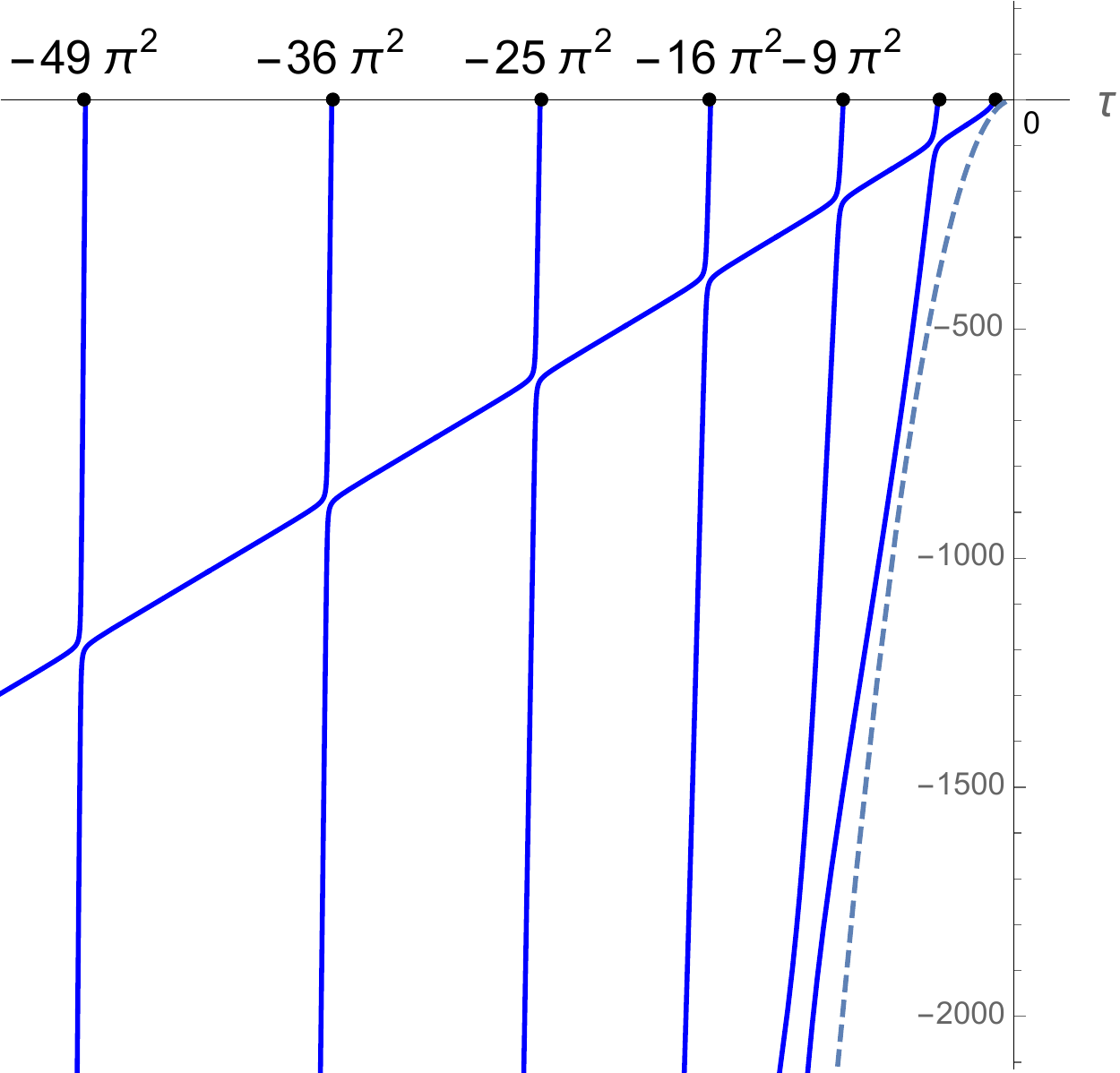}
\caption{\label{fig:LHPodd} The odd eigenvalue branches $\mu_l^{\mathrm{odd}}(\tau)$ in the lower half-plane above the critical parabola $\mu = -\tau^2 / 4$ (dashed) for $l=1,\dots,7$, and zoomed-in vertically near the $\tau$-axis (Theorem ~\ref{thm:paramLHP}). Horizontal intercepts are at $-l^2 \pi^2$.}
    \end{center}
    \end{figure}
\item Even (Figure~\ref{fig:LHPeven}): For each $l\geq 0$, the eigenvalue curve $(\tau, \mu_l^{\mathrm{even}}) = F_2(a_{l, \mathrm{even}}(b), b)$ is parameterized by $b > 0$. \label{thm:paramLHPEven}
  \begin{figure}[b]
    \begin{center}
\includegraphics[width=2.5in]{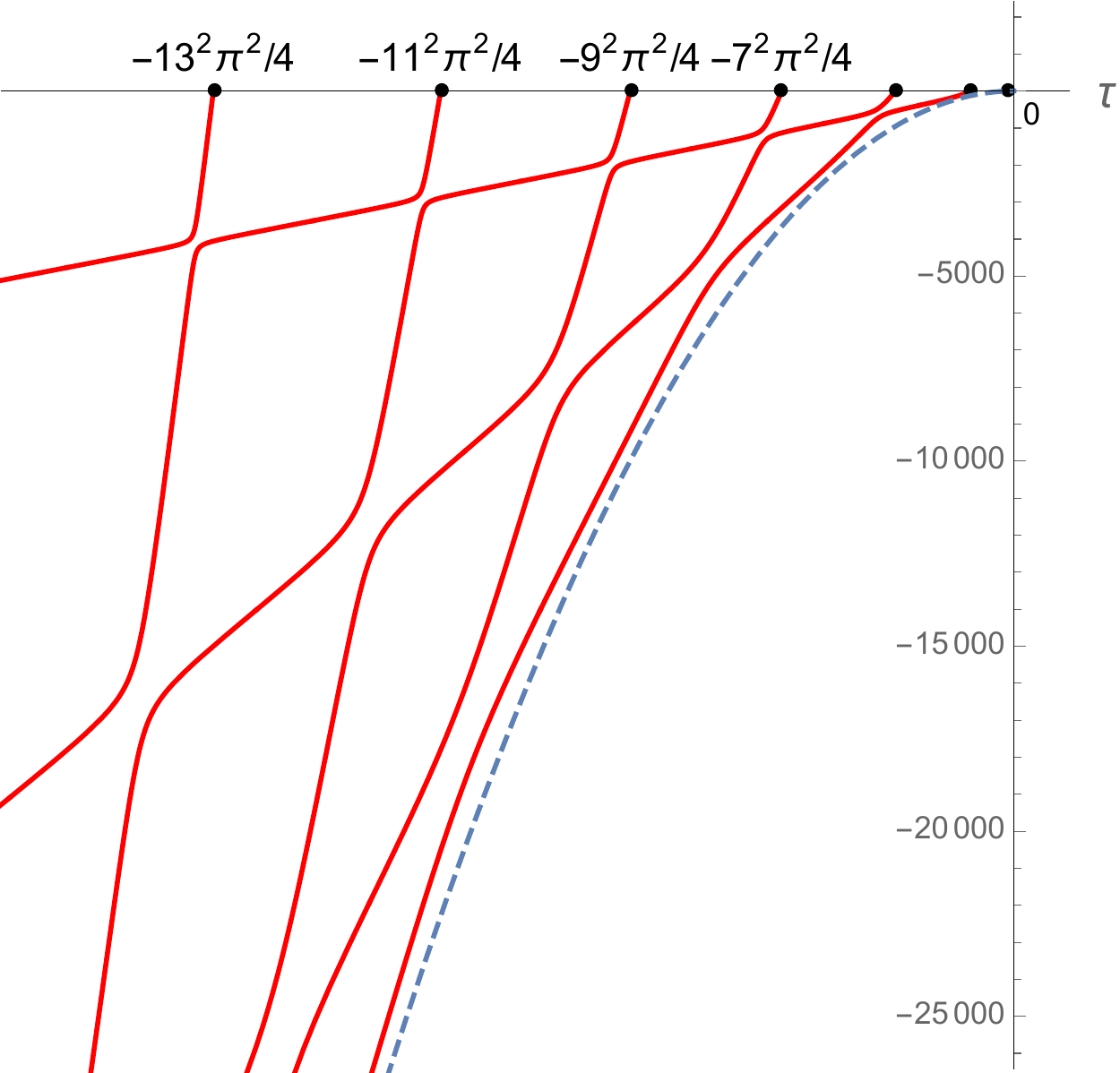}
\qquad
\includegraphics[width=2.5in]{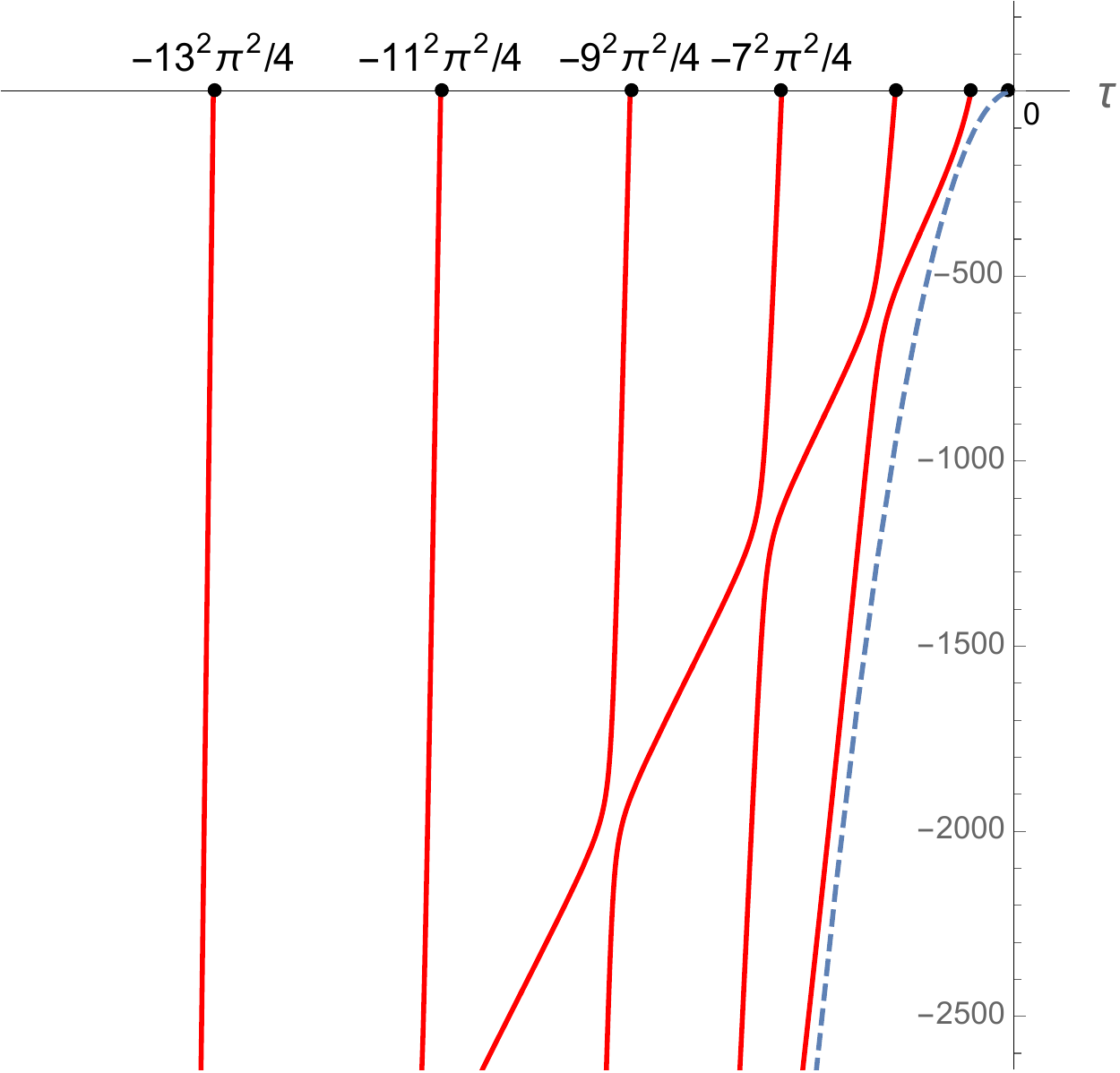}
\caption{\label{fig:LHPeven}The even eigenvalue branches $\mu_l^{\mathrm{even}}(\tau)$ in the lower half-plane above the critical parabola $\mu = -\tau^2 / 4$ (dashed) for $l=0,\dots,6$, and zoomed-in vertically near the $\tau$-axis (Theorem ~\ref{thm:paramLHP}). Horizontal intercepts are at $-(l+1/2)^2\pi^2$.}
    \end{center}
    \end{figure}
\item Critical parabola: There exists a unique $(\tau,\mu)$ pair on the critical parabola $\mu=-\tau^2/4$. This eigenvalue is $\mu = -a^4$ with corresponding even eigenfunction $u = C \cos(ax) + D x\sin(ax)$, where $a\approx 1.13943$. \label{thm:paramLHPEvenA}
\end{enumerate}
\end{theorem}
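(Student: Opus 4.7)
The plan is to combine three ingredients already in place: the bijection $F_2$ from Lemma~\ref{lemma:Bij}(\ref{lemma:LHP}), which identifies the super-parabolic region with the parameter wedge $\{(a,b) : a \geq b > 0\}$; Lemma~\ref{lemma:eigenfunctionsLHP}, which says every eigenvalue pair in this region arises from the odd condition \eqref{eqn:LHPevalCondOdd}, the even condition \eqref{eqn:LHPevalCondEven}, or (in the coalesced case $a=b$) the isolated even solution on the critical parabola; and the branch-indexed maps $a_{l,\mathrm{odd}}$, $a_{l,\mathrm{even}}$ built in Definitions~\ref{defn:paramLHPOdd}--\ref{defn:paramLHPEven} precisely to enumerate the solutions of these conditions with $a > b > 0$. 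Part (3) is then immediate from Lemma~\ref{lemma:eigenfunctionsLHP}(3), which produces the unique $a \approx 1.13943$ satisfying $\sin(2a) = 2a/3$ and hence the unique $(\tau,\mu)=(-2a^2,-a^4)$ on the critical parabola together with its eigenfunction.

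For parts (1) and (2) I would argue two directions. \emph{Forward:} given an eigenvalue pair $(\tau,\mu)$ strictly above the critical parabola with $\tau<0$ and $\mu<0$, apply $F_2^{-1}$ to obtain the unique $(a,b)$ with $a>b>0$, and then Lemma~\ref{lemma:eigenfunctionsLHP} to conclude that $(a,b)$ satisfies either \eqref{eqn:LHPevalCondOdd} or \eqref{eqn:LHPevalCondEven}. Locate $b$ in its unique branch interval of $\godd$ or $\geven$, indexed by some $k\geq 0$. Because $a>b$ and the two sides of the eigenvalue condition agree, $a$ must sit in a strictly higher branch of $\fodd$ or $\feven$, say branch $k+l$ with $l\geq 1$---except in the even case, where $a$ and $b$ may both lie on the first branch, which forces $a\in[a^*,\pi/2)$ and $b\in(0,b^*]$ as in Remark~\ref{rmk:sameab} and corresponds to $l=0$. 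By construction this says $a = a_{l,\mathrm{odd}}(b)$ or $a = a_{l,\mathrm{even}}(b)$, so $(\tau,\mu)$ lies on the listed curve. \emph{Reverse:} for each $l$ the image of the parameterization is contained in the wedge $\{a>b>0\}$ and satisfies the corresponding eigenvalue condition by the definition of $a_{l,\mathrm{odd}}$ or $a_{l,\mathrm{even}}$, so applying $F_2$ and Lemma~\ref{lemma:eigenfunctionsLHP} produces a genuine eigenvalue pair in the super-parabolic region.

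The main obstacle will be the $l=0$ even branch and its interaction with part (3): I must check that the short arc produced by $a_{0,\mathrm{even}}$ terminates exactly at the critical-parabola point from part (3), so that parts (2) and (3) fit together without duplication or gap. This reduces to verifying that the common minimizer $a^*=b^*$ of $-t^3\cot t$ solves $\sin(2t)=2t/3$; differentiating $-t^3\cot t$ and simplifying yields $3\sin t\cos t = t$, i.e.\ $\sin(2t)=2t/3$, which is precisely the equation in Lemma~\ref{lemma:eigenfunctionsLHP}(3). Hence as $b\nearrow b^*$ the pair $F_2(a_{0,\mathrm{even}}(b),b)$ limits to the critical-parabola eigenvalue $(-2a^{*2},-a^{*4})$ of part (3), and the complete spectrum in the super-parabolic region is recovered exactly as claimed.
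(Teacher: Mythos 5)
Your proposal is correct and follows essentially the same route as the paper, whose proof simply assembles Lemma~\ref{lemma:eigenfunctionsLHP} with the bijection $F_2$ and the branch parameterizations of Definitions~\ref{defn:paramLHPOdd}--\ref{defn:paramLHPEven}; you merely spell out the forward and reverse inclusions that the paper leaves implicit. Your closing check that the minimizer $a^\ast$ of $-t^3\cot t$ satisfies $\sin(2t)=2t/3$, so that the $l=0$ even arc terminates exactly at the critical-parabola eigenvalue of part (3), is a worthwhile detail the paper states only implicitly (via \eqref{eqn:aast} and Figure~\ref{fig:LHPeven(0)}).
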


\begin{proof} This theorem follows immediately from our work for Lemma~\ref{lemma:eigenfunctionsLHP} and our Definition~\ref{defn:paramLHPOdd} and \ref{defn:paramLHPEven}.
\end{proof}

\begin{remark} The parameterization of the eigenvalue curves is quite different in the upper half-plane compared to the super-parabolic region. In the upper half-plane, each eigenvalue branch consists of a single parameterization. In the super-parabolic region, each eigenvalue branch consists of infinitely many pieces. The difference is that for the upper half-plane (Figure ~\ref{fig:UHPoddcondition} and ~\ref{fig:UHPevencondition}), the functions $\godd$ and $\geven$ are monotonic and can be inverted,  whereas in the super-parabolic region (Figure ~\ref{fig:LHPoddcondition} and ~\ref{fig:LHPevencondition}), neither $\fodd$ and $\feven$ nor $\godd$ and $\geven$ are globally invertible. Thus, we must restrict the branches to achieve local invertibility.
\end{remark}

\subsection{\bf Some properties of the eigenvalue curves in the super-parabolic region}\label{sec:propertyLHP}
In this section, we state and prove several properties of the eigenvalue branches lying in the super-parabolic region. We also state some properties which are clear from numerical investigation but for which we do not have rigorous proof.

We begin with a proposition identifying the intersections of eigenvalue branches.

\begin{proposition} \label{prop:LHPintersections}
\begin{enumerate}
\item Same-symmetry eigenvalue branches in the super-parabolic do not intersect. That is,\begin{align*}
\text{if $l_2>l_1> 0$, then}\quad \mu_{l_2}^{\mathrm{odd}}(\tau) &>\mu_{l_1}^{\mathrm{odd}}(\tau) \quad\text{for all $\tau\leq -(l_2 \pi)^2$,}\\
\text{if $l_2>l_1\geq 0$, then}\quad\mu_{l_2}^{\mathrm{even}}(\tau) &>\mu_{l_1}^{\mathrm{even}}(\tau) \quad\text{for all $\tau\leq-\left((l_2+\frac{1}{2}) \pi\right)^2$.}
\end{align*}
\item Different-symmetry eigenvalue branches in the super-parabolic region with the same index $l$ intersect infinitely often. These intersections occur at the points
\begin{equation} \label{SP:intersections}
(\tau, \mu) = \left(-\frac{(m^2 + l^2)\pi^2}{2}, -\frac{(m^2-l^2)^2\pi^4}{16}\right), \quad\text{for $m\in \NN$}.
\end{equation}
\end{enumerate}
\end{proposition}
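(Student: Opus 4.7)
My plan is to handle the two parts separately, leveraging the Bijection Lemma and the super-parabolic eigenvalue conditions from Lemma~\ref{lemma:eigenfunctionsLHP}.

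For Part~1, the key is that distinct branches in the parameterization correspond to disjoint $(a,b)$-subsets of $\{a \geq b > 0\}$, and hence under the bijection $F_2$ to disjoint subsets of the $(\tau,\mu)$-plane. Concretely, for any fixed $b > 0$, the values $a_{l_1,\mathrm{odd}}(b)$ and $a_{l_2,\mathrm{odd}}(b)$ lie in different restricted branches of $\fodd$ by Definition~\ref{defn:paramLHPOdd}, so the resulting $(\tau,\mu)$ points are distinct and the two curves cannot cross. To fix the ordering, I would evaluate at $\tau = -(l_2\pi)^2$, the horizontal intercept of the $l_2$-branch, where $\mu_{l_2}^{\mathrm{odd}} = 0$ while $\mu_{l_1}^{\mathrm{odd}} < 0$ (since $-(l_2\pi)^2$ lies past the $l_1$-intercept $-(l_1\pi)^2$). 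Continuity together with non-crossing then propagates the strict inequality to all $\tau \leq -(l_2\pi)^2$. The even case is entirely analogous, with horizontal intercepts at $-((l+1/2)\pi)^2$.

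For Part~2, suppose the $l$-th odd and $l$-th even branches meet at $(\tau_0, \mu_0)$ strictly above the critical parabola. The Bijection Lemma supplies a unique $(a,b)$ with $a > b > 0$ mapping to $(\tau_0, \mu_0)$, and this single pair must simultaneously satisfy both eigenvalue conditions. Writing them in their pre-divided (determinant) form,
\begin{align*}
a^3 \sin a \cos b &= b^3 \sin b \cos a,\\
a^3 \cos a \sin b &= b^3 \cos b \sin a,
\end{align*}
I would add and subtract to obtain
\begin{align*}
(a^3 - b^3)\sin(a+b) &= 0,\\
(a^3 + b^3)\sin(a-b) &= 0.
\end{align*}
Since $a > b > 0$, both sine factors must vanish, forcing $a + b = m\pi$ and $a - b = l\pi$ for positive integers with $m > l$. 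Hence $a = (m+l)\pi/2$ and $b = (m-l)\pi/2$, and applying $F_2$ produces the claimed formula~(\ref{SP:intersections}) directly.

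The main obstacle is verifying that each candidate pair $(a,b)$ actually lands on the $l$-th branches of \emph{both} parameterizations for the \emph{same} index $l$, rather than on branches with mismatched indices. I would trace carefully through Definitions~\ref{defn:paramLHPOdd} and~\ref{defn:paramLHPEven}, splitting into the cases $m-l$ even and $m-l$ odd, and confirm that the shift-by-$l$ built into each parameterization carries $b = (m-l)\pi/2$ to $a = (m+l)\pi/2$ in both the odd and even families. At the degenerate values where $\tan$ or $\cot$ blows up, the matching is recovered by returning to the undivided determinant system above, whose trigonometric factors remain bounded and whose vanishing is the genuine eigenvalue condition. Since $m$ may be taken to be any integer greater than $l$, this yields infinitely many intersection points on each same-index pair of branches, as claimed.
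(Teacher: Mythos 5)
Your proposal is correct and follows essentially the same route as the paper: reduce to the $(a,b)$-plane via the bijection $F_2$, observe that same-symmetry branches occupy disjoint $a$-ranges so their images cannot meet, and add/subtract the determinant conditions to force $\sin(a+b)=\sin(a-b)=0$ (with $a\neq b$ ruled out because the only critical-parabola eigenvalue is even). The branch-matching step you flag as the main obstacle is resolved in the paper without a parity case-split: the odd and even parameterizations confine $a-b$ to the interval $\left((l-1)\pi,(l+1)\pi\right)$, so $a-b=n\pi$ forces $n=l$ directly.
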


\begin{proof}
By the bijection $F_2$ from Lemma~\ref{lemma:Bij}(\eqref{lemma:LHP}), any intersections of eigenvalue branches in the super-parabolic region in the $(\tau,\mu)$ plane are in one-to-one correspondence with intersections in the $(a,b)$ plane. For a given symmetry type, the $(a,b)$ pairs uniquely determine the eigenfunction (up to multiplication by a constant), so the only possible intersections of eigenvalue curves occur when odd and even eigenvalue curves with same the same $(a,b)$ meet.
 
Since we are looking at the same index $l$ for both even and odd branches, we must consider branches $j$ and $k$ of $\fodd$ and $\feven$, with $j-k=l$ (Section~\ref{sec:paramSPOdd}). The odd and even eigenvalue curves meet when $a$ and $b$ satisfy the eigenvalue conditions \eqref{eqn:LHPevalCondOdd} and \eqref{eqn:LHPevalCondEven} simultaneously, which means
\begin{align*}
a^3\sin a \cos b &= b^3\cos a \sin b,\\
a^3\cos a \sin b &= b^3\sin a \cos b.
\end{align*} 
First, we show that the points \eqref{SP:intersections} satisfy the two conditions. In the $(a,b)$ plane, these points correspond to the following two families of points:
\begin{align*}
C^{\mathrm{odd}}_{j,k}(a, b) &= \left(\left(j+\frac{1}{2}\right)\pi, \left(k+\frac{1}{2}\right)\pi\right),\\
C^{\mathrm{even}}_{j, k}(a, b) &= \left(j\pi, k\pi\right),
\end{align*} 
where $j-k=l$. It is obvious that two pairs satisfy both eigenvalue conditions simultaneously. 

Next we show that these are the only solutions of the two conditions. By adding the two equations and using a trigonometric identity, we obtain
\begin{align*}
a^3\sin(a+b) &= b^3\sin(a+b).
\end{align*}
Thus we must have either $a=b$ or $a+b = m\pi, m\in \mathbb{N}$. However, in this case we know that $a\neq b$, since we are not on the critical parabola (see the proof of Lemma~\ref{lemma:eigenfunctionsLHP}). Similarly, by subtracting eigenvalue conditions, we obtain the condition $a-b=n\pi, n\in\mathbb{N}$. 

Additionally, we know from the odd condition that $a\in\left(\left(j-1/2\right)\pi, \left(j+1/2\right)\pi\right]$ and $b\in\left(\left(k-1/2\right)\pi, \left(k+1/2\right)\pi\right]$ for $j>k$. Hence $a-b\in\left(\left(j - k -1\right)\pi, \left(j - k +1\right)\pi\right)$, or equivalently $\left(\left(l-1\right)\pi, \left(l+1\right)\pi\right)$. Similarly, we obtain $a-b\in
=\left(\left(l-1\right)\pi, \left(l+1\right)\pi\right)$ for the even condition. Hence $a-b=l\pi$. Therefore, $a = (m+l)\pi/2$ and $b = (m-l)\pi/2$. 

We have thus identified all points where the odd and even eigenvalue curves meet. We then recover the $(\tau,\mu)$ values of the points of intersection from the bijection $F_2$.
\end{proof}

We now list, as observations, some properties of the eigenvalue branches. 
\begin{enumerate}
\item \label{prop:LHPconnection}\textbf{Observed connection between the upper half-plane and super-parabolic regions:} In the upper half-plane, eigenvalue branches (save for the constant zero branch) are indexed by nonnegative integers $l$; for each $l$, there is one odd branch and one even branch. In the super-parabolic region, the parameterization of eigenvalue branches depends upon indices $j$ and $k$. If we have $l=j-k$, then then the $l$th odd eigenvalue branch in the upper half-plane and the $j,k$th odd eigenvalue branch in the super-parabolic region have the same horizontal intercept, $\tau=-l^2\pi^2$. The same is true for even branches.

\item\textbf{Asymptotic behavior of the eigenvalue branches.}

Along each branch, as $\tau$ tends to $-\infty$, we have
\[
\mu = -\frac{\tau^2}{4} + \mathcal{O}(|\tau|),
\] 
where the constant in the ``$\mathcal{O}$" term depends on the eigenvalue branch.
\begin{proof}(Sketch)

From Property~(\ref{prop:LHPconnection}), we may write $a=b+l\pi+(s-t)$, where $s, t\in(-\pi/2, \pi/2]$. Set $\gamma=l\pi+(s-t)$. Note that while $\gamma$ is variable, it takes values in $\left(\left(l-1\right)\pi, \left(l+1\right)\pi\right)$ and so is bounded. We express $\tau$ and $\mu$ in terms of $b$ and $\gamma$ as follows:
\begin{align*}
\mu &= -a^2b^2=-b^4-2\gamma b^3-\gamma^2b^2,\\
\tau &= -a^2-b^2=-2b^2-2\gamma b-\gamma^2.
\end{align*}
From this, we obtain a relation between $\tau$ and $\mu$:
\[
\mu = -\frac{\tau^2}{4} + \frac{(-\gamma^2\pm 2\gamma\sqrt{-\gamma^2-2\tau})^2}{4}
\]
Therefore, $\mu=-\tau^2/4-2\gamma^2\tau+\mathcal{O}\left(|\tau|^{1/2}\right)$, where $\gamma$ depends on the branch $l$ as well as $s$ and $t$. 
\end{proof}

\item\label{prop:LHPphantom} \textbf{Observation of shallow straight lines (Cascading phenomenon)}
We observe that the eigenvalue curves in the super-parabolic region consist of a pattern of nearly-linear segments and barely-avoided crossings. If we draw lines through the even-odd points of intersection as in Figure~\ref{fig:LHPphantom}, we see they are very close to the nearly-linear segments of different branches of the same symmetry type.  We will call these lines `phantom spectral lines'.

From Proposition~\ref{prop:LHPintersections}, we know that the two families of intersection points indexed by the same $l$ can be expressed as
\begin{align*}
C_{j, k}^{\mathrm{odd}}(\tau, \mu) &= \left(\frac{-\left(2j+1\right)^2-\left(2k+1\right)^2}{4}\pi^2, \frac{-\left(2j+1\right)^2\left(2k+1\right)^2}{16}\pi^4\right),\\
C_{j, k}^{\mathrm{even}}(\tau, \mu) &= \left(\left(-j^2-k^2\right)\pi^2, -j^2k^2\pi^4\right) \quad\text{where $j-k=l$.}
\end{align*}
The $k$th phantom spectral line for the odd branches is the line that connects the points $C^{odd}_{j,k}$ with $j>k$. Similarly, the $k$th phantom spectral line for the even branch connects the points $C^{even}_{j,k}$ with $j\geq k$. The equations of these phantom spectral lines are:
\begin{align*}
P_{k}^{\mathrm{odd}}: \mu &= \left(\frac{(2k+1)\pi}{2}\right)^2\tau +\left(\frac{(2k+1)\pi}{2}\right)^4,\quad\text{for $k\geq 0$},\\
P_{k}^{\mathrm{even}}: \mu &= (k\pi)^2\tau + (k\pi)^4,\quad\text{for $k\geq 0$}.
\end{align*}

  \begin{figure}[H]
    \begin{center}
\includegraphics[width=2.5in]{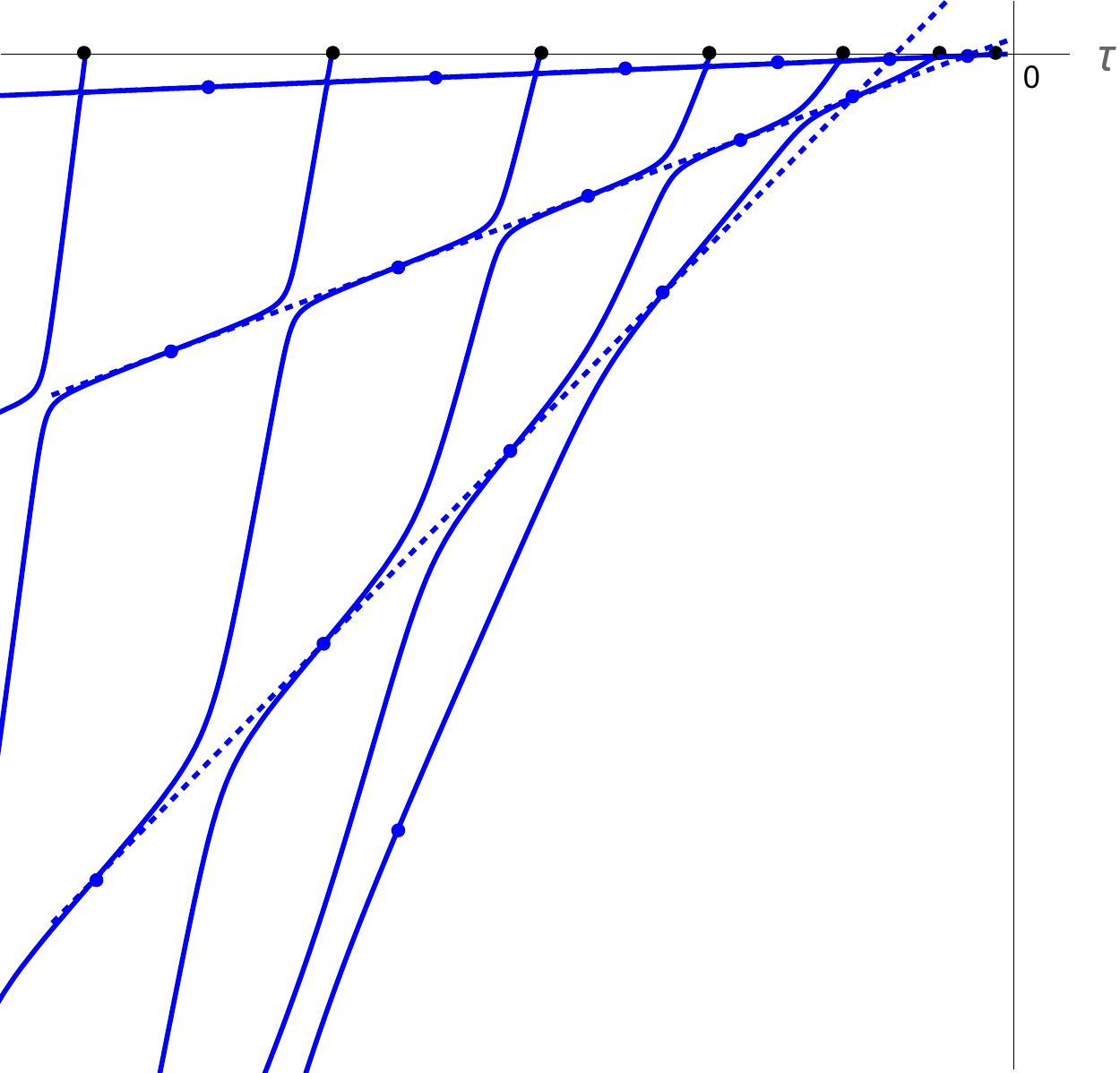}
\qquad
\includegraphics[width=2.5in]{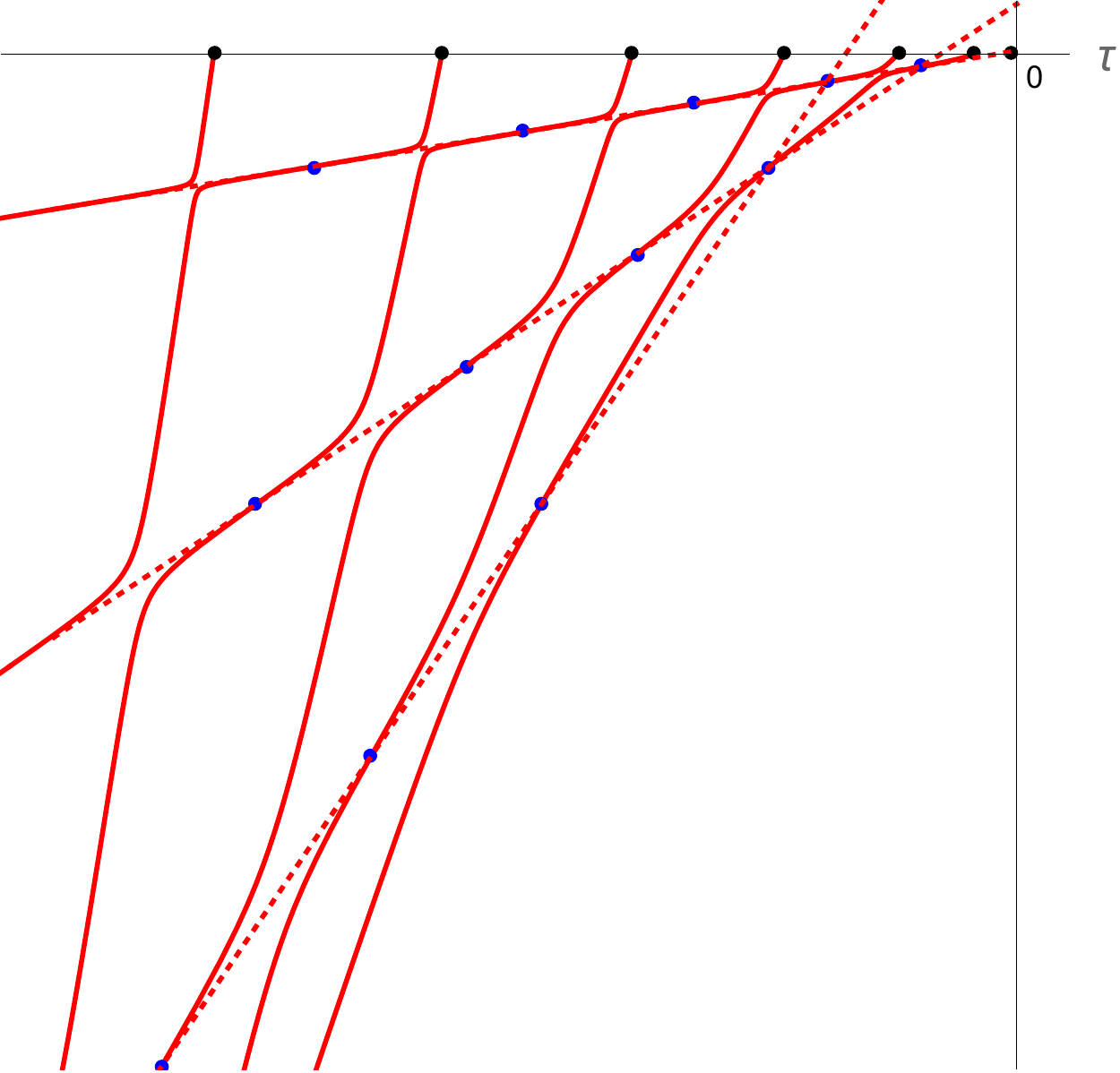}
\caption{\label{fig:LHPphantom} Lower half-plane above the critical parabola: dotted ``phantom" lines for the odd and even eigenvalue branches (Property~\ref{prop:LHPphantom}).}
    \end{center}
    \end{figure}

\begin{remark}
Figure ~\ref{fig:LHPphantom} suggests that the phantom spectral lines are tangent to the eigenvalue curves at each points $C_{j, k}^{\mathrm{odd}}$'s and $C_{j, k}^{\mathrm{even}}$'s. This is not the case, although as $\mu\to-\infty$, the slopes of the phantom spectral lines approach those of the eigenvalue curves. The details will be in the second author's thesis, to be published later.
\end{remark}

\end{enumerate}

\section{\bf The lower half-plane: sub-parabolic region}\label{sec:crit}
In this section, which is the most technically difficult of the paper, we treat the case of eigenvalue branches lying in the lower half-plane below the critical parabola, that is, $\{(\tau,\mu) : \tau < 0,\mu < -\tau^2 /4\}$. We will see that this region corresponds to the case where the characteristic equation $r^4-\tau r^2 -\mu=0$ has non-real, non-purely-imaginary complex roots. Note that since a nonnegative $\tau$ guarantees nonnegative eigenvalues, we need only consider $\tau<0$. When $\mu$ and $\tau$ are both negative and satisfy $\mu < -\tau^2 /4$, we may factor the eigenvalue equation as
\begin{equation}\label{eqn:factorcrit}
\left(\frac{d^2}{dx^2}+(a-ib)^2\right)\left(\frac{d^2}{dx^2}+(a+ib)^2\right)u=0,
\end{equation}
for $a>b>0$, where $\mu = -(a^2+b^2)^2$ and $\tau = 2(b^2-a^2)$, as in the bijection Lemma~\ref{lemma:Bij}(\ref{lemma:crit}).

The behavior of the eigenvalue branches is quite different in this region from that in the upper half-plane or the super-parabolic region. We cannot find an explicit parameterization, so instead we develop an ``implicit parameterization" to describe the odd and even eigenvalue branches. In particular, we will show that there are only two eigenvalue branches (one each of odd and even), and they cross infinitely many times. The eigenfunctions have a more complicated form than in previous cases. 

We will also discuss in Section~\ref{sec:intersectionsEP} the question of intersections of the family of parabolas $\mu=-c\tau^2$ with the first two eigenvalue branches, which has applications to the the second author's thesis (to be published later).

\begin{figure}[H]
    \begin{center}
\includegraphics[scale=0.6]{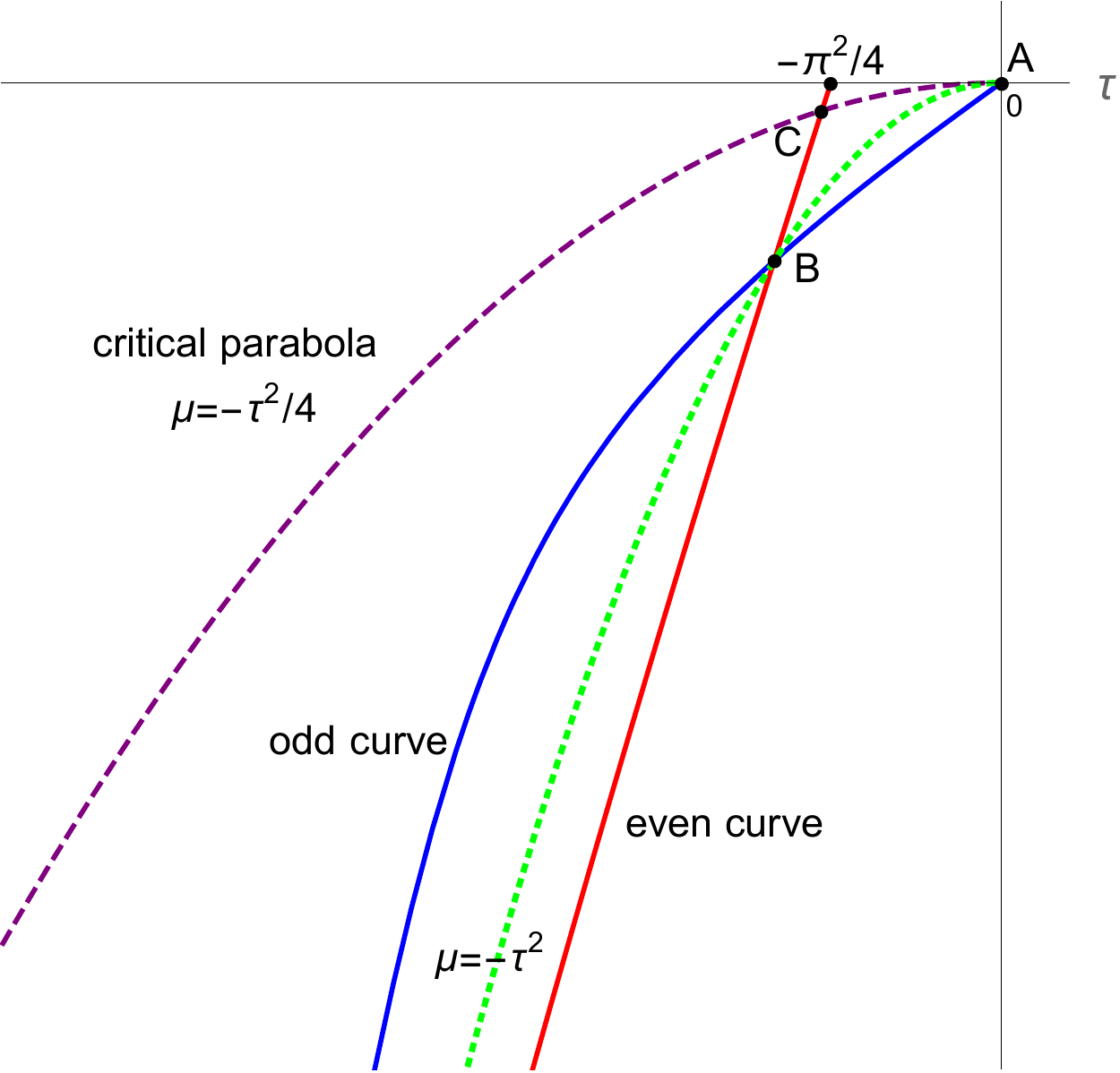}
\caption{\label{fig:crit} Only two spectral curves penetrate below the critical parabola. They intersect infinitely often along the dotted parabola $\mu=-\tau^2$. The points $A$, $B$, $C$, and $(-\pi^2/4, 0)$ are significant for discussion in Section~\ref{sec:crit}.}     \end{center}
    \end{figure}

\subsection{\bf Eigenfunctions and Eigenvalue conditions}

\begin{lemma}[Sub-parabolic region]  \label{lemma:eigenfunctionscrit}
For all $\tau<0$ and all eigenvalues $\mu$ satisfying $\mu<-\tau^2 /4$, at least one of the following must hold: 
\begin{enumerate}
\item The eigenvalue $\mu$ is associated with an odd eigenfunction $u_o$ of the form
\[
u_o(x) = A \cos(ax) \sinh(bx) + B \sin(ax) \cosh(bx),
\]
where $A$ and $B$ are nonzero constants, and $a>b>0$ are positive numbers such that $\mu = -(a^2 + b^2)^2$, $\tau = 2(b^2-a^2)$, and satisfy the additional condition
\begin{equation}\label{eqn:critevalCondOdd}
(3a^2 - b^2)\frac{\sin (2a)}{2a} = (3b^2 - a^2)\frac{\sinh(2b)}{2b}.
\end{equation}
\item The eigenvalue $\mu$ is associated with an even eigenfunction $u_e$ of the form 
\[
u_e(x) = C \sin(ax) \sinh(bx) + D \cos(ax)\cosh(bx),
\]
where $C$ and $D$ are nonzero constants, and $a>b>0$ are positive numbers such that $\mu = -(a^2 + b^2)^2$, $\tau = 2(b^2-a^2)$, and satisfy the additional condition
\begin{equation}\label{eqn:critevalCondEven}
(3a^2 - b^2)\frac{\sin (2a)}{2a} = (a^2 - 3b^2)\frac{\sinh(2b)}{2b}.
\end{equation}
\end{enumerate}
  \begin{figure}[t]
    \begin{center}
\includegraphics[scale=0.55]{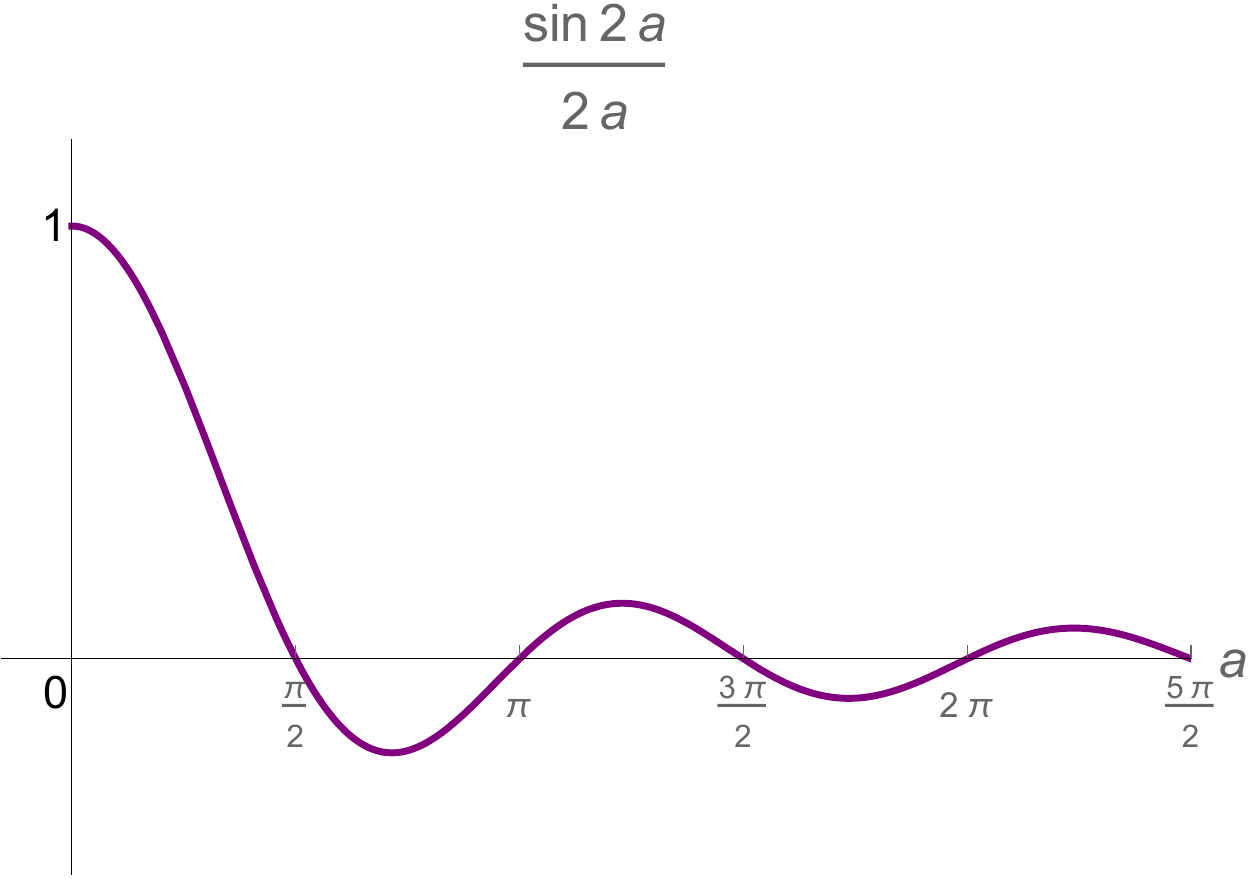}
\qquad
\includegraphics[scale=0.55]{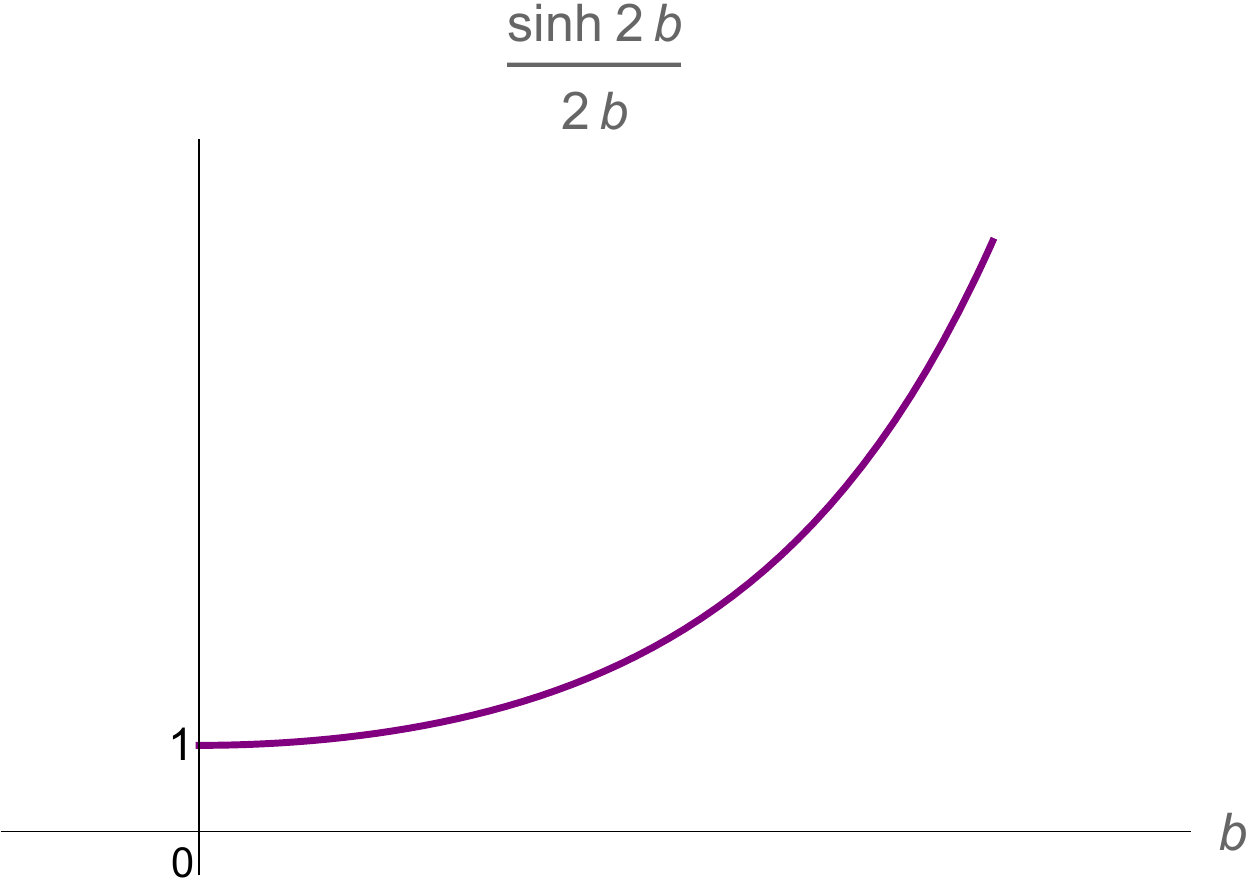}
\caption{\label{fig:critEvalcondition}
Sub-parabolic region: graphs involved in eigenvalue conditions \eqref{eqn:critevalCondOdd}-\eqref{eqn:critevalCondEven} in Lemma~\ref{lemma:eigenfunctionscrit}.}
    \end{center}
    \end{figure}
\end{lemma}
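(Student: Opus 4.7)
The plan is to mirror the strategy used for Lemma~\ref{lemma:eigenfunctionsUHP} and Lemma~\ref{lemma:eigenfunctionsLHP}, but now starting from the factorization \eqref{eqn:factorcrit} dictated by Lemma~\ref{lemma:Bij}(\ref{lemma:crit}). Fix $a>b>0$ so that $\tau = 2(b^2-a^2)$ and $\mu = -(a^2+b^2)^2$. The characteristic polynomial $r^4 - \tau r^2 - \mu$ factors as $(r^2+(a-ib)^2)(r^2+(a+ib)^2)$, whose four simple roots are $\pm b \pm ia$. A real fundamental set is therefore
\[
\bigl\{\cos(ax)\cosh(bx),\ \cos(ax)\sinh(bx),\ \sin(ax)\cosh(bx),\ \sin(ax)\sinh(bx)\bigr\},
\]
and partitioning this set by parity immediately produces the two candidate general solutions $u_o = A\cos(ax)\sinh(bx) + B\sin(ax)\cosh(bx)$ and $u_e = C\sin(ax)\sinh(bx) + D\cos(ax)\cosh(bx)$ in the statement.

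Next, I would impose the natural boundary conditions. By the parity reduction in Section~\ref{sec:Prelim}, it suffices to check them at $x=1$. Writing the second condition as $u''' + 2(a^2-b^2)u' = 0$ (using $\tau = 2(b^2-a^2)$) and substituting $u_o$ or $u_e$ produces a homogeneous $2\times 2$ linear system in the coefficients $(A,B)$ or $(C,D)$ respectively. A nontrivial eigenfunction exists precisely when the determinant vanishes. At this stage, every entry of the matrix is a linear combination of terms of the form $\sin(a)\cosh(b)$, $\cos(a)\sinh(b)$, etc., with polynomial coefficients in $a,b$.

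The remaining step, which is the main obstacle, is an algebraic simplification: expanding each $2\times 2$ determinant yields a sum of products that must collapse into the clean transcendental forms \eqref{eqn:critevalCondOdd} and \eqref{eqn:critevalCondEven}. I would organize the computation by systematically collecting the coefficient of $\sin(a)\cos(a)$ (which is $\tfrac12\sin(2a)$) and the coefficient of $\sinh(b)\cosh(b)$ (which is $\tfrac12\sinh(2b)$); with the double-angle identities $\sin(2a)=2\sin a\cos a$ and $\sinh(2b)=2\sinh b\cosh b$, the cross-term cancellations must force the polynomial factors to simplify to $(3a^2-b^2)$ and $(3b^2-a^2)$ in the odd case, and to $(3a^2-b^2)$ and $(a^2-3b^2)$ in the even case, the sign-flip on the second factor being exactly the difference between a $\sinh\cdot\sinh$ and $\cosh\cdot\cosh$ contribution. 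The division by $2a$ and $2b$ in the stated form just amounts to pulling those factors out cleanly. Once this simplification is carried out, the lemma follows.
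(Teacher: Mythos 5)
Your proposal follows the paper's proof essentially verbatim: the same factorization from Lemma~\ref{lemma:Bij}(\ref{lemma:crit}), the same real fundamental set split by parity, and the same vanishing-determinant criterion for the $2\times 2$ systems in $(A,B)$ and $(C,D)$. The only part you leave unexecuted is the explicit expansion of the determinants, which the paper carries out and which does collapse, exactly as you describe via the double-angle identities, to the conditions \eqref{eqn:critevalCondOdd} and \eqref{eqn:critevalCondEven}.
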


\begin{proof} Since we are considering the case of $\mu<-\tau^2/4$ and $\tau<0$, by Lemma~\ref{lemma:Bij}(\ref{lemma:crit}), we may express the eigenvalue equation in terms of $a$ and $b$ as \eqref{eqn:factorcrit}, and so the characteristic equation has four distinct non-real, non-purely-imaginary complex roots $r=\pm ia \pm b$. As usual, we consider only even and odd solutions, so we express the solutions $e^{\pm iax \pm bx}$ of the differential equation as $\cos(ax) \sinh(bx)$, $\sin(ax) \sinh(bx)$, $\cos(ax) \cosh(bx)$, and $\sin(ax) \cosh(bx)$. Our possible solutions are then linear combinations of these, chosen according to symmetry. We then see the odd and even eigenfunctions have the forms
\begin{align*}
u_o(x) &= A \cos(ax) \sinh(bx) + B \sin(ax) \cosh(bx),\\
u_e(x) &= C \sin(ax) \sinh(bx) + D \cos(ax)\cosh(bx).
\end{align*}
The boundary conditions \eqref{eqn:1dbcm} and \eqref{eqn:1dbcv} in terms of $a$ and $b$ say that:
\[
\begin{cases}
u'' = 0 &\text{when $x=\pm 1$,}\\
u''' - 2(b^2-a^2)u' = 0 &\text{when $x = \pm 1$.}
\end{cases}
\] 

We consider the odd eigenfunction first. We wish to determine which values of $a$ and $b$ (and hence $\tau$ and $\mu$) yield a solution to the boundary value problem. Applying the two boundary conditions yields
\[
\begin{cases}
&A[(b^2-a^2)\cos a \sinh b - 2ab\sin a\cosh b] \\
 &\qquad= -B[(b^2 - a^2)\sin a \cosh b + 2ab \cos a\sinh b],\\
&A[(b^3 + a^2b)\cos a \cosh b + (a^3 + ab^2) \sin a \sinh b] \\
&\qquad= -B[(b^3 + a^2b)\sin a\sinh b -(a^3+ab^2)\cos a\cosh b].
\end{cases}
\] 
We require that our linear combination coefficients $A,B$ be nontrivial, so the system's determinant must vanish. We can express this determinant condition as
\[
(4a^2b-2b^3+2a^2b)\sin a \cos a = (2ab^2 - 2a^3 + 4ab^2)\sinh b\cosh b.
\]
Since $a, b$ are nonzero, this formula is equivalent to
\[
(3a^2 - b^2)\frac{\sin (2a)}{2a} = (3b^2 - a^2)\frac{\sinh(2b)}{2b},
\]
which gives a condition on $(a,b)$ that guarantees existence of an odd solution to the eigenvalue problem.

For the even eigenfunction $u_e$, applying the boundary conditions gives us
\[
\begin{cases}
&C[(b^2-a^2)\sin a \sinh b + 2ab\cos a\cosh b] \\
&\qquad= -D[(b^2 - a^2)\cos a \cosh b -2ab \sin a\sinh b],\\
&C[(a^2b + b^3) \sin a \cosh b - (a^3 + ab^2)\cos a \sinh b] \\
&\qquad= -D[(a^2b+b^3)\cos a\sinh b + (a^3 + ab^2)\sin a\cosh b].
\end{cases}
\]
Once again, to have nontrivial linear combination coefficients $C,D$, we require the system's determinant be zero, which leads to
\[
(3a^2 - b^2)\frac{\sin (2a)}{2a} = (a^2 - 3b^2)\frac{\sinh(2b)}{2b}.\qedhere
\]
\end{proof}

Unlike the prior cases, we cannot solve the eigenvalue conditions \eqref{eqn:critevalCondOdd} or \eqref{eqn:critevalCondEven} for $a$ in terms of $b$ (or $b$ in terms of $a$) explicitly, and so we have not found a smooth parameterization of the eigenvalue curves in this region. Instead, we seek to understand the eigenvalue branches by looking at solutions of the eigenvalue conditions in the $(a, b)$-plane. We call this method ``implicit parameterization'' since it relies on implicit functions and does not give us a complete parameterization as we found for the other regions. 

In the sections that follow, we describe the behavior of the eigenvalue curves in the $(a, b)$-plane. We then consider the problem of intersections of parabolas and the eigenvalue curves by considering their images in the $(a,b)$-plane.

\subsection{\bf Image of eigenvalue branches in the $(a, b)$-plane}
In this section, we consider the images of the eigenvalue branches of the sub-parabolic region in the $(a, b)$-plane. From Lemma~\ref{lemma:eigenfunctionscrit}, we know that the odd and even eigenvalue curves in the $(a, b)$-plane are given by the following equations:
\begin{align*}
\text{ odd}:\quad (3a^2 - b^2)\frac{\sin (2a)}{2a} &= (3b^2 - a^2)\frac{\sinh(2b)}{2b} ,\\
\text{ even}:\quad (3a^2 - b^2)\frac{\sin (2a)}{2a} &= -(3b^2 - a^2)\frac{\sinh(2b)}{2b} .
\end{align*}
The shapes of the eigenvalue curves are not clear from the equations themselves. We analyze their properties near the origin in the next two lemmas.

Define the functions
\begin{align*}
F_o(a, b) = (3a^2 - b^2)\frac{\sin (2a)}{2a} - (3b^2 - a^2)\frac{\sinh(2b)}{2b},\\
F_e(a, b) = (3a^2 - b^2)\frac{\sin(2a)}{2a} + (3b^2 - a^2)\frac{\sinh(2b)}{2b}.
\end{align*}
We restrict our domain to be the triangle 
\[
\{(a, b) : 0\leq a\leq \pi/2, 0\leq b\leq \pi/\sqrt{12}, b<a\},
\]
continuously extending $\sin(2a)/2a$ and $\sinh(2b)/2b$ to $a=0$ and $b=0$ respectively. 

First, we will show that $F_o(a, b)=0$ has a unique solution $b(a)$ for each $a\in(0, \pi/2)$.  We then use the Implicit Function Theorem to conclude that these solutions form a continuous function of $a$, and so the graph is a single connected curve in the $(a,b)$-plane. We argue the analogous result is also true for the even case: that $F_e(a, b)=0$ has a unique solution $a(b)$ for each $b\in(0, \pi/\sqrt{12})$, which can be considered a continuous function.

We first prove existence and uniqueness of the solutions.
 
\begin{lemma}[Existence and uniqueness of the solutions]\label{lemma:euoddeven}
\begin{enumerate}
 \item \label{lemma:euoddevenOdd}
 For each $a\in (0, \pi/2)$, there exists a unique $b\in(a/\sqrt{3}, a)$ such that $F_o(a, b)=0$, and furthermore, the equation has no solutions when $b\in(0,a/\sqrt{3}]$. 
 
 \item 
 For each $b\in (0, \pi/\sqrt{12})$, there exists a unique $a\in(\sqrt{3}b, \pi/2)$ such that $F_e(a, b)=0$, and furthermore, the equation has no solutions when $a\in(b, \sqrt{3}b]$.
\end{enumerate}
\end{lemma}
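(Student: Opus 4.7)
The plan is to prove both parts by combining a sign analysis at the endpoints of the claimed intervals with a monotonicity argument inside. For part (1), I would first rule out solutions on $b\in(0,a/\sqrt{3}]$ by observing that there $3a^2-b^2>0$, $\sin(2a)/(2a)>0$, and $3b^2-a^2\le 0$, so both summands of $F_o$ have the same sign (with the first strictly positive), giving $F_o(a,b)>0$. Next, direct evaluation yields $F_o(a,a)=a\bigl(\sin(2a)-\sinh(2a)\bigr)<0$ since $\sinh x>\sin x$ for $x>0$, while at $b=a/\sqrt{3}$ the second term vanishes and the first is positive. The Intermediate Value Theorem then produces at least one zero in $(a/\sqrt{3},a)$.

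For uniqueness in the odd case, I would compute
\[
\frac{\partial F_o}{\partial b}=-2b\,\frac{\sin(2a)}{2a}-6b\,\frac{\sinh(2b)}{2b}-(3b^2-a^2)\,\frac{d}{db}\!\left[\frac{\sinh(2b)}{2b}\right]
\]
and show every summand is negative on $(a/\sqrt{3},a)$: the first because $a\in(0,\pi/2)$ forces $\sin(2a)/(2a)>0$; the second because $b>0$; and the third because $3b^2-a^2>0$ and $\sinh(2b)/(2b)$ is strictly increasing (equivalent to $x\cosh x>\sinh x$ for $x>0$, which follows from $\bigl(x\cosh x-\sinh x\bigr)'=x\sinh x>0$ with value $0$ at $x=0$). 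The strict monotonicity pins down a unique zero.

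For part (2), the same sort of sign analysis handles the no-solutions region and existence. On $a\in(b,\sqrt{3}b]$ we have $3b^2-a^2\ge 0$ and, since $b<\pi/\sqrt{12}$ gives $2a\le 2\sqrt{3}b<\pi$, also $\sin(2a)/(2a)>0$ together with $3a^2-b^2>0$, so $F_e>0$. At $a=\pi/2$ the factor $\sin(2a)$ vanishes and $F_e(\pi/2,b)=(3b^2-\pi^2/4)\sinh(2b)/(2b)<0$ because $b<\pi/\sqrt{12}$. The Intermediate Value Theorem then yields a zero in $(\sqrt{3}b,\pi/2)$.

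The main obstacle is uniqueness for the even case, because $\partial F_e/\partial a$ is \emph{not} single-signed on $(\sqrt{3}b,\pi/2)$ (for instance, near $a=\sqrt{3}b$ with small $b$ it is positive). To sidestep this I would rewrite $F_e=0$ as the ratio equation $\alpha(a)/\beta(a)=1$ with
\[
\alpha(a)=(3a^2-b^2)\,\frac{\sin(2a)}{2a},\qquad \beta(a)=(a^2-3b^2)\,\frac{\sinh(2b)}{2b},
\]
both strictly positive on the open interval. The plan is to prove $\alpha/\beta$ is strictly decreasing via
\[
\frac{d}{da}\log\frac{\alpha}{\beta}=\frac{-16ab^2}{(3a^2-b^2)(a^2-3b^2)}+\frac{2a\cos(2a)-\sin(2a)}{a\sin(2a)},
\]
where the first term is negative because both factors in the denominator are positive for $a>\sqrt{3}b$, and the second is negative on $(0,\pi/2)$ by the elementary bound $x\cos x<\sin x$ on $(0,\pi)$ (since $\bigl(\sin x-x\cos x\bigr)'=x\sin x>0$ and the value at $0$ is $0$). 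Together with $\alpha/\beta\to+\infty$ as $a\to(\sqrt{3}b)^+$ (since $\beta\to 0^+$ while $\alpha$ stays positive) and $\alpha/\beta\to 0^+$ as $a\to(\pi/2)^-$ (since $\alpha\to 0^+$ while $\beta$ stays positive), strict monotonicity forces exactly one crossing of $\alpha/\beta=1$, completing uniqueness.
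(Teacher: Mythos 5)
Your proposal is correct. The existence and no-solution parts coincide with the paper's: the same endpoint evaluations $F_o(a,a/\sqrt{3})=\tfrac{8}{3}a^2\sin(2a)/(2a)>0$, $F_o(a,a)=a(\sin(2a)-\sinh(2a))<0$, $F_e(\sqrt{3}b,b)>0$, $F_e(\pi/2,b)=(3b^2-\pi^2/4)\sinh(2b)/(2b)<0$, followed by the Intermediate Value Theorem, and the same sign argument on $(0,a/\sqrt{3}]$ and $(b,\sqrt{3}b]$. For uniqueness the paper instead rearranges each condition into the form ``increasing function of the unknown $=$ decreasing function of the unknown'': for the odd case $g(b)=f(a)h(b)$ with $g(b)=\sinh(2b)/(2b)$ increasing and $h(b)=(3a^2-b^2)/(3b^2-a^2)$ decreasing, and for the even case $f(a)=g(b)h(a)$ with $f(a)=\sin(2a)/(2a)$ decreasing and $h(a)=(a^2-3b^2)/(3a^2-b^2)$ increasing. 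Your odd-case computation of $\partial F_o/\partial b<0$ is the same fact the paper proves later (in its smoothness lemma) and is a perfectly valid substitute; your even-case log-derivative of $\alpha/\beta$ is, term by term, exactly the monotonicity of $h(a)$ (the $-16ab^2$ term) plus the monotonicity of $\sin(2a)/(2a)$ (the $2a\cos(2a)-\sin(2a)$ term), so it is the paper's argument in differentiated form, at the cost of a slightly heavier computation but with the bonus that the boundary limits $\alpha/\beta\to\infty$ and $\alpha/\beta\to0^+$ give existence for free. Your observation that $\partial F_e/\partial a$ is not single-signed on all of $(\sqrt{3}b,\pi/2)$ is accurate and well taken: the paper likewise avoids that route for uniqueness and only establishes $\partial F_e/\partial a<0$ on the restricted range $a\in[a^\ast,\pi/2]$ when it later needs the Implicit Function Theorem.
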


\begin{proof} Let $f(a)=\sin(2a)/2a$ and $g(b)=\sinh(2b)/2b$. Note that $f(a)>0$ for $a\in(0,\pi/2)$ and $f(a) <1<g(a)$ for all $a>0$; see Figure~\ref{fig:critEvalcondition}.

We begin with the odd case. We fix $a\in(0,\pi/2)$ and show existence of $b$ satisfying $F_o(a,b)=0$. 

For any such $a$, it is obvious that $F_{o}(a,\cdot)$ is a continuous function on $b\in[a/\sqrt{3}, a]$. At the left endpoint $b=a/\sqrt{3}$, we have $F_{o}(a, a/\sqrt{3}) = 8a^2f(a)/3 > 0$, since $a\in(0,\pi/2)$. At the right endpoint $b=a$, we have $F_{o}(a, a) = 2a^2(f(a) - g(a)) <0$, since $a>0$. Then by the Intermediate Value Theorem, there exists a $b\in(a/\sqrt{3}, a)$ such that $F_{o}(a, b) = 0$. 

To show uniqueness of the $b$ for the odd case, we fix the value $a$ and write the equation $F_{o}(a, b) = 0$ in the form
\begin{align*}
g(b) = f(a)h(b), \qquad \text{where $h(b)= \frac{3a^2-b^2}{3b^2-a^2}$,}
\end{align*}
and consider the behavior of both sides. The left-hand side $g(b)$ is positive and increasing in $b$. On the other hand, $f(a)$ is a fixed positive number and $h(b)$ is positive and decreasing when $b\in(a/\sqrt{3}, a)$. Therefore, there is only one $b\in(a/\sqrt{3}, a)$ that satisfies the equation. 

Note also that if $0<b\leq a/\sqrt{3}$, there is no solution for $F_{o}(a, b) = 0$. On this interval, the function $h(b)$ is negative, but $g(b)$ remains positive.

The even case proceeds similarly. In this case, we fix $b\in(0, \pi/\sqrt{12})$ and consider $F_{e}(\cdot,b)$ as the continuous function on $a\in[\sqrt{3}b, \pi/2]$. It is easy to show $F_{e}(\sqrt{3}b, b)>0$ and $F_{e}(\pi/2,b)<0$.

To show uniqueness of that $a$-value for the even case, we fix the value $b$ and write the equation $F_{e}(a, b) = 0$ in the form
\begin{align*}
f(a) = g(b)h(a), \qquad \text{where $h(a)=-\frac{3b^2-a^2}{3a^2-b^2}$,}
\end{align*}
and consider behavior of both sides. The left hand side $f(a)$ is positive and decreasing. On the other hand, $g(b)$ is a fixed positive number and $h(a)$ is positive and increasing when $a\in(\sqrt{3}b, \pi/2)$. Therefore, there is only one $a\in(\sqrt{3}b, \pi/2)$ that satisfies the equation. 

Similarly, if $b<a\leq \sqrt{3}b$, there is no solution for $F_{e}(a, b) = 0$. On this interval, the quantity $h(a)$ is now negative, but $f(a)$ remains positive.
\end{proof}

\begin{remark} This method of proof can be extended to show existence and uniqueness of $b(a)$ for all $a>0$ in the odd case, with $b\in(0,a/\sqrt{3}]$ whenever $a\in(k\pi,(2k+1)\pi/2)$ and $b\in [a/\sqrt{3},a]$ when $a\in[(2k+1)\pi/2,(k+1)\pi]$, where $k$ is an integer. Existence of $a(b)$ for all $b\geq 0$ the even case can likewise be obtained, but establishing uniqueness requires another method. 
\end{remark}

We next show that eigenvalue conditions $F_o(a, b)=0$ and $F_e(a, b)=0$ actually have continuous solutions $b(a)$ and $a(b)$, respectively. For what follows, we find it useful to define the quantity $a^\ast$ such that $\sin(2a^\ast)/2a^\ast=1/3$, which corresponds to $F_e(a^\ast, 0)=0$. Numerically, we have 
\begin{equation} \label{eqn:aast}
a^\ast\approx 1.13943.  
\end{equation}

\begin{lemma}[Continuity and smoothness of the solutions]\label{lemma:smoothness}
\begin{enumerate}
 \item The solution $b(a)$ of $F_o(a, b)=0$ is continuous for $a\in[0, \pi/2]$, smooth for $a\in(0, \pi/2)$, and satisfies $b(0)=0$, $b'(0^{+})=1$, and $b(a)<a$ for $a\in(0, \pi/2)$.
 \item The solution $a(b)$ of $F_e(a, b)=0$ is continuous for $b\in[0, \pi/\sqrt{12}]$ and smooth for $b\in(0, \pi/\sqrt{12})$.
\end{enumerate}

\end{lemma}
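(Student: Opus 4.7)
The main tool is the Implicit Function Theorem. The plan is to verify that $\partial F_o/\partial b$ is nonvanishing along the odd solution curve and that $\partial F_e/\partial a$ is nonvanishing along the even solution curve; smoothness on the open intervals then follows directly, with continuity at the endpoints handled by the squeezes from Lemma~\ref{lemma:euoddeven} together with continuity of $F_o, F_e$. Writing $f(a) = \sin(2a)/(2a)$ and $g(b) = \sinh(2b)/(2b)$ as in the preceding lemma, differentiation gives
\[
\frac{\partial F_o}{\partial b}(a,b) = -2b\, f(a) - 6b\, g(b) - (3b^2 - a^2)\, g'(b).
\]
On the odd solution set we have $b \in (a/\sqrt{3}, a)$ with $a \in (0,\pi/2)$, so $f(a), g(b), g'(b) > 0$ and $3b^2 - a^2 > 0$, making every term negative. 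Thus IFT gives smoothness of $b(a)$ on $(0,\pi/2)$. Continuity at $a = 0$ is immediate from the squeeze $0 < b(a) < a$, and the inequality $b(a) < a$ on $(0, \pi/2)$ is itself part of Lemma~\ref{lemma:euoddeven}. At the right endpoint, $F_o(\pi/2, b) = -(3b^2 - \pi^2/4)\, g(b)$ forces $b(\pi/2) = \pi/\sqrt{12}$, and IFT extends continuity through this point.

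To establish $b'(0^+) = 1$, I would expand using $f(a) = 1 - \tfrac{2}{3} a^2 + O(a^4)$ and $g(b) = 1 + \tfrac{2}{3} b^2 + O(b^4)$ to obtain
\[
F_o(a,b) = 4(a^2 - b^2) - 2(a^4 + b^4) + \tfrac{4}{3} a^2 b^2 + O\bigl((a^2 + b^2)^3\bigr).
\]
Writing $b(a) = a(1 - \alpha a^2 + o(a^2))$ --- consistent with the squeeze $b \in (a/\sqrt{3}, a)$ --- and setting $F_o = 0$ yields $\alpha = 1/3$, so $b(a) = a - \tfrac{1}{3} a^3 + o(a^3)$ as $a \to 0^+$, hence $b'(0^+) = \lim_{a\to 0^+} b(a)/a = 1$.

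The even case proceeds similarly, and the sign analysis is cleanest via the factorization already used in the uniqueness proof of Lemma~\ref{lemma:euoddeven}:
\[
F_e(a,b) = (3a^2 - b^2)\bigl[\,f(a) - g(b)\, h(a)\,\bigr], \qquad h(a) = \frac{a^2 - 3b^2}{3a^2 - b^2}.
\]
At a zero of $F_e$ the bracket vanishes, so
\[
\frac{\partial F_e}{\partial a}(a,b) = (3a^2 - b^2)\bigl[\,f'(a) - g(b)\, h'(a)\,\bigr].
\]
On the relevant region $a \in (\sqrt{3}\,b, \pi/2)$ with $b > 0$, we have $3a^2 - b^2 > 0$, $f'(a) < 0$, and the direct calculation $h'(a) = 16 a b^2 / (3a^2 - b^2)^2 > 0$, so $\partial F_e/\partial a < 0$. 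IFT then yields smoothness of $a(b)$ on $(0, \pi/\sqrt{12})$. Continuity at $b = 0$ follows by applying IFT at $(a^*, 0)$: one has $F_e(a^*, 0) = (a^*)^2 (3 f(a^*) - 1) = 0$ by definition of $a^*$, and $\partial F_e/\partial a(a^*, 0) = 3(a^*)^2 f'(a^*) \ne 0$ since $a^* \in (0, \pi/2)$. At the right endpoint, $F_e(\pi/2, b) = (3b^2 - \pi^2/4)\, g(b)$ pins $a(\pi/\sqrt{12}) = \pi/2$, and IFT extends continuity through.

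The main obstacle is the derivative value $b'(0^+) = 1$: since the full gradient of $F_o$ vanishes at the origin, IFT does not apply there and the formula $b'(a) = -(\partial_a F_o)/(\partial_b F_o)$ is indeterminate in the limit until one first knows $b/a \to 1$. The Taylor expansion above bypasses this cleanly by exhibiting the leading-order behavior $b(a) = a - \tfrac{1}{3} a^3 + o(a^3)$ directly. All remaining verifications reduce to straightforward sign checks.
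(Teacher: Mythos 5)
Your overall strategy coincides with the paper's: apply the Implicit Function Theorem after checking that $\partial F_o/\partial b$ (resp.\ $\partial F_e/\partial a$) is strictly negative along the solution set, with the sign forced by the location $b\in(a/\sqrt3,a)$ (resp.\ $a>\sqrt3\,b$) supplied by Lemma~\ref{lemma:euoddeven}. Two of your steps are genuinely different and arguably cleaner. For the even case you exploit the factorization $F_e=(3a^2-b^2)\bigl[f(a)-g(b)h(a)\bigr]$, so that on the zero set $\partial F_e/\partial a=(3a^2-b^2)\bigl[f'(a)-g(b)h'(a)\bigr]$ with $f'<0$ on $(0,\pi/2)$ and $h'=16ab^2/(3a^2-b^2)^2>0$; this avoids the paper's substitution of the eigenvalue condition back into the derivative and does not need the restriction $a\geq a^\ast$ to obtain the sign. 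For the behavior at the origin in the odd case, where the full gradient of $F_o$ vanishes, the paper changes variables to $(\alpha,\beta)=(a^2,b^2)$, where $\partial F_o/\partial\beta=-4\neq0$, and applies the IFT in those coordinates; you instead Taylor-expand $F_o$ directly. Your endpoint checks at $a=\pi/2$ and the IFT application at $(a^\ast,0)$ for the even branch are correct and in fact slightly more explicit than the paper's treatment.

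The one step you should tighten is the ansatz $b(a)=a(1-\alpha a^2+o(a^2))$. The squeeze $b\in(a/\sqrt3,a)$ only gives $b/a\in(1/\sqrt3,1)$; it justifies neither $b/a\to1$ nor the existence of an expansion of that form, so as written the determination of $\alpha$ presupposes what is to be proved. The repair is short and uses exactly the expansion you already computed: on the solution curve,
\[
4(a^2-b^2)=2(a^4+b^4)-\tfrac{4}{3}a^2b^2+O(a^6)=O(a^4),
\]
and since $a+b>a$ this yields $a-b=O(a^3)$, hence $b(a)=a+O(a^3)$ and $b'(0^{+})=\lim_{a\to0^{+}}b(a)/a=1$, which is all the lemma requires (the coefficient $-1/3$ is a bonus, and can be recovered afterwards once the expansion is known to exist). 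With that adjustment your argument is complete.
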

\begin{proof} By Lemma~\ref{lemma:euoddeven}, there exist unique solutions of $F_o(a, b)=0$ for each $a\in(0, \pi/2)$ and of $F_e(a, b)=0$ for $b\in (0, \pi/\sqrt{12})$. We will use the Implicit Function Theorem to establish these solutions depend continuously on $a$ or $b$ as appropriate. In order to prove smoothness and properties of the solution, we break into two pieces: ``away from the origin" and ``near the origin".

\textbf{Claim 1:} The solution $b(a)$ of $F_o(a, b)=0$ is smooth for $a\in(0, \pi/2)$ (away from the origin) and $b(a)<a$ for all $a\in(0, \pi/2)$.

From Lemma~\ref{lemma:euoddeven}(\ref{lemma:euoddevenOdd}), it is obvious that the solution $b(a)<a$ for $a\in(0, \pi/2)$. We will use the Implicit Function Theorem to show the solution is smooth for $a\in(0, \pi/2)$.
\begin{enumerate}
\item First, we show the function $F_o$ is continuously differentiable for $(a, b)\in [0, \pi/2] \times [0, \pi/\sqrt{12}]$.

Note that $F_o$ is analytic as a function of $(a, b)\in\RR^2$, by using the standard power series expansions for $\sin(z)$ and $\sinh(z)$. That is, the function $F_o$ can be written as
\[
F_o(a, b) = (3a^2-b^2)\left(1-\mathcal{O}(a^2)\right) - (3b^2-a^2)\left(1+\mathcal{O}(b^2)\right).
\] 
Since $F_o$ is an analytic function on $(a, b)\in\RR^2$, it is easy to see that the partials $\partial F_o/\partial a$ and $\partial F_o/\partial b$ are continuous for $(a, b)\in [0, \pi/2] \times [0, \pi/\sqrt{12}]$.

\item Next, we show $\partial F_o/\partial b$ is nonzero for $(a, b)\in (0, \pi/2] \times (0, \pi/\sqrt{12}]$. Recall from \eqref{eqn:critevalCondOdd} that for the odd eigenvalue branch $F_o(a, b)=0$, we have
\begin{align*}
\frac{\partial F_o}{\partial b} = -\frac{b}{a}\sin(2a) -\left(\frac{a^2}{2b^2}+\frac{3}{2}\right)\sinh(2b) + \left(\frac{a^2}{b}-3b\right)\cosh(2b).
\end{align*} 
The first and second terms are obviously negative for $a\in(0, \pi/2]$ and $b\in(0, \pi/\sqrt{12}]$. The third term is also negative, since by Lemma~\ref{lemma:euoddeven} the solution curve $F_o(a, b)=0$ lies in the portion of the plane where $a<\sqrt{3} b$. Therefore, $\partial F_o/\partial b<0$.  
\end{enumerate}

Therefore, by the Implicit Function Theorem, there exist neighborhoods $U_{a}\subseteq\RR$  and $V_{b}\subseteq\RR$ such that $a\in U_{a}, b\in V_{b}$, and a unique function $g_{a, b}: U_a \to V_b$ exists such that $b=g_{a, b}$ and $F_o(x, g_{a, b}(x))=0$ for all $x\in U_{a}$. Furthermore, the function $g_{a, b}$ is infinitely differentiable on $U_a$.

We cannot apply the Implicit Function Theorem at the origin because $\partial F_o/\partial b=0$ there. To understand the behavior of the curve near the origin, we introduce a change of variable.

\textbf{Claim 2:}  The solution $b(a)$ of $F_o(a, b)=0$ is smooth on a neighborhood of the origin, and $b(0)=0$, $b'(0^{+})=1$, and $b(a)<a$ for $a$ near $0$.

We express $F_o(a,b)$ near the origin using the standard power series expansions for $\sin(2a)$ and $\sinh(2b)$, obtaining
\begin{align*}
F_o &= (3a^2-b^2)\left(1-\frac{2}{3}a^2 + \mathcal{O}(a^4)\right) - (3b^2-a^2)\left(1+ \frac{2}{3}b^2 + \mathcal{O}(b^4)\right).
\end{align*}
Expressing $F_o$ as a function of the new variables $\alpha=a^2$ and $\beta=b^2$, we obtain a series expansion:
\begin{align*}
F_o = (3\alpha-\beta)\left(1-\frac{2}{3}\alpha + \mathcal{O}(\alpha^2)\right) - (3\beta-\alpha)\left(1+ \frac{2}{3}\beta + \mathcal{O}(\beta^2)\right)
\end{align*}
Note that $\partial F_o/\partial \beta=-4\neq0$ at $(\alpha, \beta)=(0, 0)$. Hence by the Implicit Function Theorem applied to $F_o=0$ in the $(\alpha, \beta)$-plane, we obtain a unique smooth solution $\beta=\beta(\alpha)$. The condition $F_o=0$ implies that
\begin{align*}
4(\alpha-\beta)=\frac{2}{3}(\alpha+\beta)^2+\frac{4}{3}(\alpha-\beta)^2+\mathcal{O}(\rho^3),
\end{align*} 
where $\rho=\sqrt{\alpha^2+\beta^2}$. The right-hand side of the above equation is positive near $(\alpha, \beta)=(0, 0)$, so we conclude $\alpha>\beta$ near the origin on the curve defined by $F_o=0$.  Moreover, we know that at $(\alpha, \beta)=(0, 0)$,
\begin{align*}
\frac{\partial \beta}{\partial \alpha} = -\frac{\partial F_o/\partial \alpha}{\partial F_o/\partial \beta}=1.
\end{align*}
Therefore, we have a smooth curve $\beta(\alpha)=\alpha+\mathcal{O}(\alpha^2)$ that passes through the origin in the $(\alpha, \beta)$-plane with slope $1$, and $\alpha>\beta$. By reverting to the original variables, we obtain $b^2=a^2+\mathcal{O}(a^4)$. That is, $b=a+\mathcal{O}(a^3)$ and so $b(0)=0$, $b'(0^{+})=1$, and $b(a)<a$ for $a$ near the origin. 
 
We now repeat this argument for the even branch. We consider the equation $F_e(a, b)=0$ for all $a\in[a^\ast, \pi/2]$ and $b\in[0, \pi/\sqrt{12}]$, where $a^\ast$ is as defined previously (see \eqref{eqn:aast}). We need $\partial F_e/\partial a$ to be nonzero at each point $(a, b)$ along the solution curve.

Recall that for the even eigenvalue branch $F_e(a, b)=0$, we have
\[
\frac{\partial F_e}{\partial a} = \left(\frac{3}{2}+\frac{b^2}{2a^2}\right)\sin(2a)+\left(3a-\frac{b^2}{a}\right)\cos(2a)-\frac{a}{b}\sinh(2b).
\]
For convenience, we divide $\partial F_e/\partial a$ by $a>0$ and show that is always a negative quantity for the restricted range. After rewriting this expression, we have
\begin{equation}\label{eqn:dFeda}
\frac{1}{a}\frac{\partial F_e}{\partial a} = \left(3+\frac{b^2}{a^2}\right)\frac{\sin(2a)}{2a}+\left(3-\frac{b^2}{a^2}\right)\cos(2a)-\frac{\sinh(2b)}{b}.
\end{equation}
Solving $F_e(a, b)=0$ for the expression $\sin(2a)/2a$, we obtain
\begin{align*}
\frac{\sin(2a)}{2a} = -\left(\frac{3b^2-a^2}{3a^2-b^2}\right)\frac{\sinh(2b)}{2b}.
\end{align*} 
We substitute this expression into \eqref{eqn:dFeda}, obtaining
\[
\frac{1}{a}\frac{\partial F_e}{\partial a} = -\left(\frac{3(a^2+b^2)^2}{a^2(3a^2-b^2)}\right)\frac{\sinh(2b)}{2b} + \left(\frac{3a^2-b^2}{a^2}\right)\cos(2a).
\]
The first term is negative for $a\in[a^\ast, \pi/2]$ and $b\in[0, \pi/\sqrt{12}]$, since $\sinh(2b)/2b>0$ for all $b\in[0, \pi/\sqrt{12}]$ and the curve $F_e(a, b)=0$ lies in the region $a>\sqrt{3}b$ by Lemma~\ref{lemma:euoddeven}. 
The second term is also negative since we consider $a\in[a^\ast, \pi/2]$ and $a^\ast\approx 1.13943>\pi/4$. Therefore, we obtain that $\partial F_e/\partial a$ is negative (and so nonzero) at each point $(a, b)$ with $a\in[a^\ast, \pi/2]$ and $b\in[0, \pi/\sqrt{12}]$. 

By the Implicit Function Theorem, there exist neighborhoods $U_{a}\subseteq\RR$ and $V_{b}\subseteq\RR$ such that $a\in U_{a}, b\in V_{b}$, and a unique function $g_{a, b}: V_b \to U_a$ exists such that $a=g_{a, b}(b)$ and $F_e(g_{a, b}(y), y)=0$ for all $y\in V_{b}$. Furthermore, the function $g_{a, b}$ is infinitely differentiable on $V_{b}$.
\end{proof} 

Now that we have achieved smooth curves  $F_o(a, b)=0$ and $F_e(a, b)=0$ in the $(a,b)$ plane, the bijection $F_3$ gives us a homeomorphism of these curves onto smooth curves in the $(\tau, \mu)$-plane. These are the odd eigenvalue curve connecting the points $A(0, 0)$ and $B(-\pi^2/3, -\pi^4/9)$, and the even eigenvalue curve connecting the points $B(-\pi^2/3, -\pi^4/9)$ and $C(-2(a^\ast)^2, -(a^\ast)^4)$, the latter of which lies on the critical parabola (see Figure~\ref{fig:crit}). 

In the sections which follow, we will use the notation $\mu_o(\tau)$ and $\mu_e(\tau)$ to denote the odd and even eigenvalue branches in the sub-parabolic region.
 
\subsection{\bf Intersections of a family of parabolas and the lowest eigenvalues $\mu_o(\tau)$ and $\ 
\mu_e(\tau)$} \label{sec:intersectionsEP} 

Our goal is now to prove the following proposition:
\begin{proposition}[Intersection between parabola and the eigenvalues $\mu_o(\tau)$ or $\mu_e(\tau)$]\label{prop:intersectionsEP}
Each parabola $\mu = - c\tau^2$ with $c>0$ intersects at least one eigenvalue branch in the third quadrant.
\end{proposition}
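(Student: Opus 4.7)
The plan is to split into cases by comparing $c$ to $1/4$, then produce intersections via the Intermediate Value Theorem on specific eigenvalue branches. The guiding observation is that along any eigenvalue curve in the third quadrant, the ratio $\phi := \mu/\tau^2$ is continuous wherever $\tau \neq 0$, so the question reduces to determining which values $\phi$ attains along a branch we already control.

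For $c \geq 1/4$, the parabola $\mu = -c\tau^2$ sits on or below the critical parabola, and I would work with the two sub-parabolic pieces produced in this section: the odd arc $\mu_o$ joining $A = (0,0)$ to $B = (-\pi^2/3, -\pi^4/9)$ and the even arc $\mu_e$ joining $B$ to $C = (-2(a^\ast)^2, -(a^\ast)^4)$. Their union is a single continuous curve on which $\phi$ is continuous away from $A$. Direct substitution gives $\phi(B) = -1$ and $\phi(C) = -1/4$. For the behavior near $A$, I would reuse the expansion from the proof of Lemma~\ref{lemma:smoothness}, namely $\beta = \alpha - \tfrac{2}{3}\alpha^2 + \mathcal{O}(\alpha^3)$ in the $(\alpha,\beta) = (a^2, b^2)$ coordinates. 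Pushing this through the bijection $F_3$ of Lemma~\ref{lemma:Bij}(\ref{lemma:crit}) gives $\tau = -\tfrac{4}{3}a^4 + \mathcal{O}(a^6)$ and $\mu = -4a^4 + \mathcal{O}(a^6)$, so $\phi = -\tfrac{9}{4a^4} + \mathcal{O}(a^{-2}) \to -\infty$ along $\mu_o$ as $a \to 0^+$. Continuity of $\phi$ on the half-open arc, together with IVT, then provides a point with $\phi = -c$ for every $c \in [1/4, \infty)$.

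For $c \in (0, 1/4)$, the parabola lies strictly above the critical parabola, so intersections must come from the super-parabolic region. The cleanest choice is the finite even branch $\mu_0^{\mathrm{even}}$ of Remark~\ref{rmk:sameab} and Theorem~\ref{thm:paramLHP}(\ref{thm:paramLHPEven}) (see Figure~\ref{fig:LHPeven(0)}), which runs continuously from $(-\pi^2/4, 0)$ to $C$. Set $h(\tau) := \mu_0^{\mathrm{even}}(\tau) + c\tau^2$. At the horizontal intercept one has $h(-\pi^2/4) = c\pi^4/16 > 0$, while at $C$ the branch sits on the critical parabola, giving $h(-2(a^\ast)^2) = (4c - 1)(a^\ast)^4 < 0$. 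Since $\mu_0^{\mathrm{even}}$ is continuously parameterized by $b \in (0, b^\ast]$, $h$ is continuous on a connected interval, and IVT delivers the zero.

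The main obstacle is the $\phi \to -\infty$ calculation near $A$, since $\mu_o$ admits no closed-form parameterization; this is where one must lean on the implicit expansion of $F_o$ near the origin from Lemma~\ref{lemma:smoothness}. Once that expansion is in hand, the rest of the argument reduces to two standard IVT calls against branches whose existence and continuity are already established.
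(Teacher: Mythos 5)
Your proof is correct, and for the main case $c>1/4$ it takes a genuinely different route from the paper. For small $c$ the two arguments coincide: both run the Intermediate Value Theorem along the finite even branch $\mu_0^{\mathrm{even}}$ from $(-\pi^2/4,0)$ to the critical-parabola point $C$. For $c>1/4$, however, the paper leaves the $(\tau,\mu)$-plane entirely: it shows (Lemma~\ref{lemma:parabola}) that $F_3^{-1}$ sends the parabola $\mu=-c\tau^2$ to a line $b=m(c)a$ with $0<m(c)<1$, and then intersects that line with the implicit curves $F_o=0$ and $F_e=0$ (Lemma~\ref{lemma:intersections}), needing only the first-order facts $b'(0^+)=1$ and $b(a)<a$ near the origin together with the endpoint values at $a=\pi/2$ and $b=\pi/\sqrt{12}$. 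You instead stay in the $(\tau,\mu)$-plane, concatenate the odd arc $A\to B$ with the even arc $B\to C$ into one continuous path (they do meet at $B$, since both implicit curves pass through $(a,b)=(\pi/2,\pi/\sqrt{12})$), and track the ratio $\phi=\mu/\tau^2$, which is $-1$ at $B$, $-1/4$ at $C$, and tends to $-\infty$ near $A$. Your asymptotics check out: from $4(\alpha-\beta)=\tfrac23(\alpha+\beta)^2+\tfrac43(\alpha-\beta)^2+\mathcal{O}(\rho^3)$ one gets $\beta=\alpha-\tfrac23\alpha^2+\mathcal{O}(\alpha^3)$, hence $\tau=-\tfrac43 a^4+\mathcal{O}(a^6)$ and $\mu=-4a^4+\mathcal{O}(a^6)$, so $\phi\sim -9/(4a^4)$; note that the sign $\tau<0$ for $a>0$ (needed so $\phi$ is defined on the punctured arc) is exactly the $b(a)<a$ statement of Lemma~\ref{lemma:smoothness}. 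The trade-off is that your version needs this second-order expansion of the implicit curve (since $\tau$ degenerates to order $a^4$ at the origin), whereas the paper's version does not; on the other hand, the paper's line-intersection lemma yields the extra information of \emph{which} branch is hit (odd for $c>1$, even for $1/4<c<1$), which your single IVT on the concatenated path does not immediately recover. Two cosmetic points: the endpoint $(-\pi^2/4,0)$ is only a limit of the parameterization $b\in(0,b^\ast]$ rather than an attained value, which does not affect the IVT; and your case split $c\geq 1/4$ versus $c<1/4$ differs harmlessly from the paper's $c>1/4$ versus $c\leq 1/4$, with $c=1/4$ producing the point $C$ either way.
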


In other words, we wish to demonstrate that for each $c>0$, there exists a point $(\tau, \mu)$ on the parabola $\mu=-c\tau^2$ which is an eigenvalue pair. We build up the proof in several steps.

When $0<c\leq 1/4$, the parabola lies on or above the critical parabola, and we will be able to work directly with our parameterizations from the super-parabolic region. When $c>1/4$, this parabola lies in the sub-parabolic region, and we will find it more convenient to work in the $(a,b)$-plane. We will show the image of this parabola under our bijection $F_3$ is a line through the origin in the $(a,b)$-plane. We then show this line intersects at least one of the curves $F_o(a, b)=0$ and $F_e(a, b)=0$. Using these results, we can then prove Proposition~\ref{prop:intersectionsEP} later in the section.

We begin by establishing the images of the parabolas are lines under $F_3$:

\begin{lemma}[Transformation of quadratics from the $(\tau, \mu)$-plane into the $(a, b)$-plane]\label{lemma:parabola}
For each $c>1/4$, the parabola $\mu = -c\tau^2$ is mapped to the line $b=m(c)a$ in the $(a,b)$-plane, where the slope $m(c)$ satisfies
\[
0<m(c)=\frac{\sqrt{|1 - 4c|}}{\sqrt{1 + |1 - 4c|} + 1}<1,
\]
and $m(c)\to1$ as $c\to\infty$.
\end{lemma}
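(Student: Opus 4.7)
The plan is to substitute the parameterization $F_3(a,b) = (-2a^2+2b^2, -(a^2+b^2)^2)$ directly into the equation $\mu = -c\tau^2$ and solve for $b$ in terms of $a$. Since $F_3$ is a bijection onto the sub-parabolic region (by Lemma~\ref{lemma:Bij}(\ref{lemma:crit})) and its domain requires $a > b > 0$, I can exploit the sign $a^2 - b^2 > 0$ to extract square roots unambiguously.

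Concretely, I would first set $\mu = -(a^2+b^2)^2$ and $\tau = 2(b^2-a^2)$, so the equation $\mu = -c\tau^2$ becomes
\[
(a^2+b^2)^2 = 4c\,(a^2-b^2)^2.
\]
Since $a > b > 0$, both sides are positive and I can take positive square roots to get $a^2 + b^2 = 2\sqrt{c}\,(a^2-b^2)$. Rearranging yields
\[
(2\sqrt{c}+1)b^2 = (2\sqrt{c}-1)a^2,
\]
which (using $c > 1/4$, hence $2\sqrt{c} > 1$) gives $b = m(c)\,a$ with
\[
m(c)^2 = \frac{2\sqrt{c}-1}{2\sqrt{c}+1}.
\]

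The next step is to reconcile this expression with the form claimed in the statement. For $c > 1/4$ we have $|1-4c| = 4c-1 = (2\sqrt{c}-1)(2\sqrt{c}+1)$ and $\sqrt{1+|1-4c|}+1 = 2\sqrt{c}+1$, so
\[
\frac{\sqrt{|1-4c|}}{\sqrt{1+|1-4c|}+1} = \frac{\sqrt{(2\sqrt{c}-1)(2\sqrt{c}+1)}}{2\sqrt{c}+1} = \sqrt{\frac{2\sqrt{c}-1}{2\sqrt{c}+1}},
\]
which matches the $m(c)$ I obtained. The inequalities are immediate: $2\sqrt{c}-1 > 0$ yields $m(c) > 0$, while $2\sqrt{c}-1 < 2\sqrt{c}+1$ yields $m(c) < 1$, and the limit $m(c) \to 1$ as $c\to\infty$ follows from $\lim_{c\to\infty}(2\sqrt{c}-1)/(2\sqrt{c}+1) = 1$.

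I do not expect any serious obstacle here; the proof is essentially an algebraic computation plus a careful bookkeeping of signs to justify taking the positive square root. The only subtlety worth flagging is the requirement that the solutions lie in the domain of $F_3$, namely $a > b > 0$, which is precisely what is guaranteed by $0 < m(c) < 1$ and is the reason the parabolas $\mu = -c\tau^2$ with $c > 1/4$ genuinely live in the sub-parabolic region.
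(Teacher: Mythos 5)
Your proof is correct and is essentially the paper's argument run in the forward direction: the paper applies the explicit formula for $F_3^{-1}$ to the point $(\tau,-c\tau^2)$ and reads off the ratio $b/a$, whereas you substitute $F_3$ into the equation of the parabola and use $a>b>0$ to take the positive square root; both reduce to the same algebra, and your closed form $m(c)^2=(2\sqrt{c}-1)/(2\sqrt{c}+1)$ correctly matches the stated expression. No gaps.
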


\begin{proof}
Since $\tau<0$ we have $\sqrt{\tau^2} = |\tau|$, and so by the bijection Lemma~\ref{lemma:Bij}(\ref{lemma:crit}), 
\begin{align*}
(a, b) &= F_3^{-1}(\tau, -c\tau^2)\\
&= \left(\frac{\sqrt{|\tau|}\sqrt{1 + \sqrt{1 + |1 - 4c|}}}{2}, \frac{\sqrt{|\tau|}\sqrt{-1 + \sqrt{1 + |1 - 4c|}}}{2}\right).
\end{align*} 
From this, we obtain a linear relationship $b=m(c) a$ with $m(c)$ as above.

\end{proof}

Next, we examine the intersections of lines with the eigenvalue curves.

\begin{lemma}[Intersections of eigenvalue curves $F_o(a, b)=0$ and $F_e(a, b)=0$ with a line]\label{lemma:intersections}
For each slope $0<m<1$, the line $b=ma$ intersects at least one of the curves $F_o(a, b)=0$ or $F_e(a, b)=0$ when $a\in(0, \pi/2)$ and $b\in(0, \pi/\sqrt{12})$. 

In particular, when $1/\sqrt{3}\leq m<1$, the line intersects $F_o(a, b)=0$ at least once. If $0< m\leq 1/\sqrt{3}$, the line intersects $F_e(a, b)=0$ at least once.
\end{lemma}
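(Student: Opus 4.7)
The plan is to convert the question into an Intermediate Value Theorem argument applied to the slope ratio $b/a$ along each of the two curves, taking advantage of the smooth, non-self-intersecting descriptions of $\{F_o=0\}$ and $\{F_e=0\}$ provided by Lemmas~\ref{lemma:euoddeven} and \ref{lemma:smoothness}. The key point is that each of the two curves is a continuous arc joining a corner of the triangular domain to another corner, one arc stretching from $(0,0)$ to $(\pi/2,\pi/\sqrt{12})$ and the other from $(a^\ast,0)$ to $(\pi/2,\pi/\sqrt{12})$, so the set of slopes $b/a$ attained on the two curves together covers all of $(0,1)$.

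First I would parameterize the odd curve as $b=b_o(a)$ for $a\in[0,\pi/2]$ using Lemmas~\ref{lemma:euoddeven}(\ref{lemma:euoddevenOdd}) and \ref{lemma:smoothness}(1), which give a continuous solution with $b_o(0)=0$, $b_o'(0^+)=1$, and $a/\sqrt{3}<b_o(a)<a$ for $a\in(0,\pi/2)$. Define $\phi_o(a)=b_o(a)/a$ on $(0,\pi/2]$. By L'H\^opital (or by $b_o'(0^+)=1$) we obtain $\lim_{a\to 0^+}\phi_o(a)=1$, so $\phi_o$ extends continuously to $[0,\pi/2]$ with $\phi_o(0)=1$. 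For the right endpoint, evaluating $F_o(\pi/2,b)$ directly gives $F_o(\pi/2,b)=-(3b^2-\pi^2/4)\sinh(2b)/(2b)$, whose unique positive root is $b=\pi/\sqrt{12}$; hence $b_o(\pi/2)=\pi/\sqrt{12}$ and $\phi_o(\pi/2)=1/\sqrt{3}$. By continuity and the Intermediate Value Theorem, $\phi_o$ attains every value in $[1/\sqrt{3},1]$, so for every $m\in[1/\sqrt{3},1)$ there is some $a\in(0,\pi/2]$ with $b_o(a)=ma$, which is the required intersection of the line $b=ma$ with the odd curve.

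The even case is symmetric, with roles of the variables swapped. Parameterize $\{F_e=0\}$ as $a=a_e(b)$ for $b\in[0,\pi/\sqrt{12}]$ using Lemmas~\ref{lemma:euoddeven}(2) and \ref{lemma:smoothness}(2). Since $F_e(a,0)=a^2(3\sin(2a)/(2a)-1)$, the boundary condition $b=0$ forces $a_e(0)=a^\ast$, where $a^\ast\approx 1.13943$ is as in \eqref{eqn:aast}; in particular $a_e(0)>0$. At the other endpoint, the same direct substitution that located $b_o(\pi/2)$ shows $F_e(\pi/2,\pi/\sqrt{12})=0$, so continuity forces $a_e(\pi/\sqrt{12})=\pi/2$. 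Define $\phi_e(b)=b/a_e(b)$; then $\phi_e$ is continuous on $[0,\pi/\sqrt{12}]$, with $\phi_e(0)=0$ and $\phi_e(\pi/\sqrt{12})=1/\sqrt{3}$. The Intermediate Value Theorem now supplies, for every $m\in(0,1/\sqrt{3}]$, some $b\in(0,\pi/\sqrt{12}]$ with $b=m\,a_e(b)$, i.e.\ an intersection of the line $b=ma$ with the even curve.

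The main technical obstacle I anticipate is not the IVT itself but verifying cleanly that the two arcs really do terminate at the shared corner point $(\pi/2,\pi/\sqrt{12})$, since both Implicit Function Theorem arguments in Lemma~\ref{lemma:smoothness} were carried out on open intervals or near the origin. I would handle this by directly computing $F_o(\pi/2,\,\cdot\,)$ and $F_e(\,\cdot\,,\pi/\sqrt{12})$ as above, using the fact that $\sin(2\cdot\pi/2)=0$ causes one term to vanish so that the vanishing locus can be read off explicitly, and then invoking continuity of the parameterizations on the closed intervals (as stated in Lemma~\ref{lemma:smoothness}) to extend the boundary values. A minor bookkeeping point worth emphasizing is that the borderline slope $m=1/\sqrt{3}$ is covered by \emph{both} arcs simultaneously at their common endpoint, consistent with the non-strict inequalities in the statement.
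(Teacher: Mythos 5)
Your proposal is correct and follows essentially the same route as the paper: both arguments reduce the claim to the Intermediate Value Theorem applied along the continuous solution arcs $b_o(a)$ and $a_e(b)$ from Lemmas~\ref{lemma:euoddeven} and \ref{lemma:smoothness}, using $b'(0^+)=1$ at the origin and the corner $(\pi/2,\pi/\sqrt{12})$ at the other end. The only cosmetic difference is that you track the slope ratio $b/a$ and pin down the corner by directly computing $F_o(\pi/2,\cdot)$ and $F_e(\cdot,\pi/\sqrt{12})$, whereas the paper compares the line's position against the curve at the endpoints and handles $m=1/\sqrt{3}$ by citing Proposition~\ref{prop:critintersection}.
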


  \begin{figure}[H]
    \begin{center}
\includegraphics[scale=0.6]{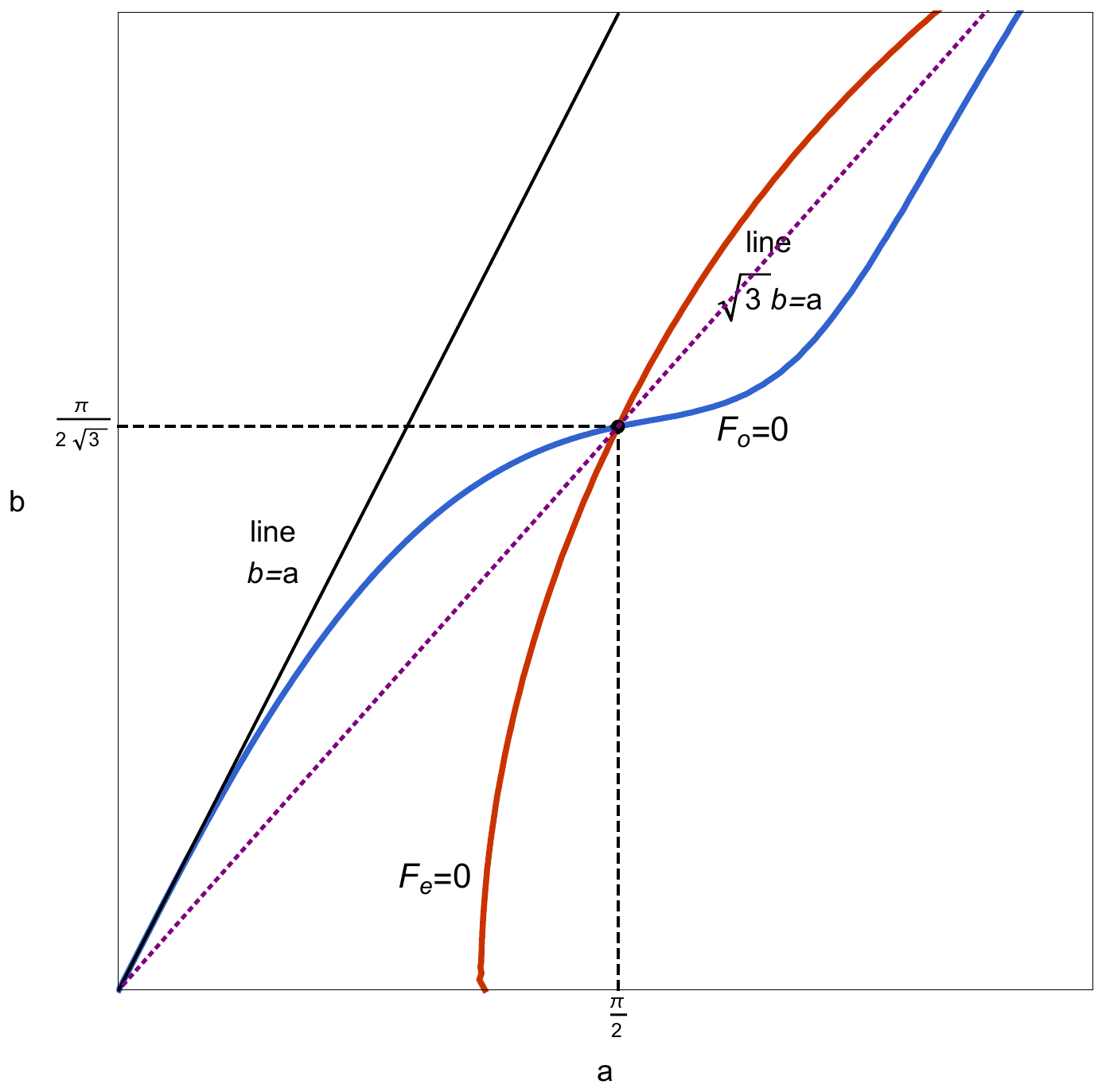}
\caption{\label{fig:(a, b)-plane} Two eigenvalue branches in the sub-parabolic region are transformed in the $(a, b)$-plane. The odd eigenvalue branch $F_o(a, b)=0$ (blue curve) and even eigenvalue branch $F_e(a, b)=0$ (red curve) intersect repeatedly along the dotted purple line $b=a/\sqrt{3}$. }
    \end{center}
    \end{figure}

\begin{proof}
Following Lemma~\ref{lemma:euoddeven}, we take $b(a)$ to be the solution of $F_o(a, b)=0$ for $a\in(0,\pi/2)$ and likewise take $a(b)$ to be the solution of $F_e(a, b)=0$ for $b\in(0,\pi/\sqrt{12})$. We shall use the Intermediate Value Theorem together with continuity of the solutions $b(a)$ and $a(b)$ (Lemma~\ref{lemma:smoothness}) to obtain existence of an intersection of the line and the curves $F_o(a, b)=0$ or $F_e(a, b)=0$.

We consider three cases on $m$. 

\begin{enumerate}
\item[(i)]\emph{For each $1/\sqrt{3}<m<1$, the line $b=ma$ intersects the curve $F_o(a, b)=0$.} 
We showed in Lemma~\ref{lemma:smoothness} that $b'(0^{+})=1$, so the line $b=ma$ lies below the graph of $b(a)$ near the origin. By direct computation, the line also lies above $b(a)$ at $a=\pi/2$ (see Figure~\ref{fig:(a, b)-plane}). By continuity, 
there exists some $a'\in(0, \pi/2)$ such that $b(a')=ma'$. 

\item[(ii)]\emph{For each $m= 1/\sqrt{3}$, the line $b=a/\sqrt{3}$ intersects the curves $F_e(a, b)=0$ and $F_o(a, b)=0$.} By Proposition~\ref{prop:critintersection}, the eigenvalue curves intersect at $(a, b)=(\pi/2, \pi/\sqrt{12})$, which lies on the desired line.  

\item[(iii)]\emph{For each $0< m< 1/\sqrt{3}$, the line $b=ma$ intersects the curve $F_e(a, b)=0$.} The line $a=b/m$ lies below the curve $a(b)$ at $b=0$ and lies above the curve at $b=\pi/\sqrt{12}$ (see Figure~\ref{fig:(a, b)-plane}). By continuity, 
there exists some $b'\in(0, \pi/\sqrt{12})$ such that $a(b')=b'/m$. 
\end{enumerate}

Finally, from Lemma~\ref{lemma:parabola}, we see $m_1=1/\sqrt{3}$; when $1<c<\infty$, we have $1/\sqrt{3}<m<1$; and when $1/4<c<1$, we have $0< m< 1/\sqrt{3}$.
\end{proof}

\begin{proof}[Proof of Proposition~\ref{prop:intersectionsEP}]
For every $0<c\leq 1/4$, the parabola $\mu=-c\tau^2$ lies on or above the critical parabola. In the super-parabolic region, by Theorem~\ref{thm:paramLHP}(\ref{thm:paramLHPEven}), there is a continuous eigenvalue curve 
that connects the point $(-\pi^2/4, 0)$ on the $\tau$-axis with the point $C(-2(a^\ast)^2, -(a^\ast)^4)$ on the critical parabola. This curve is in fact the first even eigenvalue curve in the super-parabolic region. At the upper point $(-\pi^2/4, 0)$, we have $-c\tau^2<\mu$. At the lower point $C(-2(a^\ast)^2, -(a^\ast)^4)$, we have $-c\tau^2>\mu$. Therefore, since $\mu$ and $\tau$ are continuous along the curve, there exists some point at which $-c\tau^2=\mu$.  

For the case of $c>1/4$, the parabola lies in the sub-parabolic region and we work in the $(a, b)$-plane. By Lemma~\ref{lemma:parabola}, we have that the parabola $\mu=-c\tau^2$ maps to the line $b=m(c)a$, with $1<c<\infty$ corresponding to $1/\sqrt{3}<m(c)<1$, with $1/4<c<1$ corresponding to $0<m(c)<1/\sqrt{3}$, and with $m(1)=1/\sqrt{3}$. By Lemma~\ref{lemma:intersections}, we showed that each line intersects at least one of the curves $F_o(a, b)=0$ and $F_e(a, b)=0$, which corresponds to eigenvalue curves in the sub-parabolic region by using the homeomorphism $F_3$. 
\end{proof}

\subsection{\bf Properties of the eigenvalue curves in the sub-parabolic region}\label{sec:critproperty}
We end with a result on the intersections of the eigenvalue curves with each other.
\begin{proposition}[Intersections of Eigenvalue Curves] \label{prop:critintersection}
The odd and even eigenvalue branches in the sub-parabolic region intersect infinitely often along the parabola $\mu=-\tau^2$.
\end{proposition}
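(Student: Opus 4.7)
The plan is to work entirely in the $(a,b)$-plane via the bijection $F_3$ from Lemma~\ref{lemma:Bij}(\ref{lemma:crit}), where the odd and even eigenvalue branches are the zero sets of $F_o$ and $F_e$, respectively. The key preliminary observation is that the parabola $\mu=-\tau^2$ pulls back under $F_3^{-1}$ to a single ray: substituting $\tau=2(b^2-a^2)$ and $\mu=-(a^2+b^2)^2$ into $\mu=-\tau^2$ gives $(a^2+b^2)^2=4(a^2-b^2)^2$, and since $a>b>0$, taking positive square roots yields $a^2=3b^2$. Thus intersections of the two branches along $\mu=-\tau^2$ correspond exactly to common zeros of $F_o$ and $F_e$ on the ray $a=\sqrt{3}\,b$.

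Next, I would determine all simultaneous solutions of $F_o(a,b)=0$ and $F_e(a,b)=0$ in the admissible region $a>b>0$. Subtracting the two eigenvalue conditions of Lemma~\ref{lemma:eigenfunctionscrit} yields $2(a^2-3b^2)\sinh(2b)/(2b)=0$, and since $\sinh(2b)/(2b)>0$ for $b>0$, every common zero must already lie on $a=\sqrt{3}\,b$. On this ray, $3b^2-a^2=a^2-3b^2=0$ while $3a^2-b^2=8b^2\neq 0$, so either condition collapses to the single equation $\sin(2a)=0$, whose positive solutions are $a=k\pi/2$ for $k\in\NN$. Thus the complete family of common zeros is
\[
(a_k,b_k)=\Big(\tfrac{k\pi}{2},\,\tfrac{k\pi}{2\sqrt{3}}\Big),\qquad k\geq 1.
\]
Applying $F_3$ transports these to $(\tau_k,\mu_k)=(-k^2\pi^2/3,\,-k^4\pi^4/9)$, which manifestly satisfy $\mu_k=-\tau_k^2$ and furnish infinitely many candidate intersection points on the desired parabola.

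The main obstacle is confirming that each $(a_k,b_k)$ actually lies on the extended odd curve $F_o=0$ and the extended even curve $F_e=0$, rather than being an isolated abstract solution of the algebraic conditions. For the odd side this is essentially immediate from the remark following Lemma~\ref{lemma:euoddeven}, which upgrades existence and uniqueness of $b(a)$ to all $a>0$ and forces the solution curve to cross the line $b=a/\sqrt{3}$ at precisely the values $a=k\pi/2$. For the even side, one runs the same Implicit Function Theorem argument used in Lemma~\ref{lemma:smoothness} on successive sub-intervals of $b$, checking that the relevant partial derivative remains nonzero along the curve (using that on the curve $3a^2-b^2$ and $3b^2-a^2$ cannot both vanish except at the discrete points $(a_k,b_k)$). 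Once this global extension is in hand, each $(a_k,b_k)$ is a genuine crossing of the two branches, and $F_3$ delivers infinitely many intersections of $\mu_o(\tau)$ and $\mu_e(\tau)$ lying along $\mu=-\tau^2$, as claimed.
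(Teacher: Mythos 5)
Your proposal is correct and follows essentially the same route as the paper: a linear combination of the two eigenvalue conditions forces $a^2=3b^2$ and $\sin(2a)=0$, and $F_3$ sends the resulting points $\big(k\pi/2,\,k\pi/(2\sqrt{3})\big)$ to $(-k^2\pi^2/3,\,-k^4\pi^4/9)$ on $\mu=-\tau^2$ (the paper adds the conditions first where you subtract first, an immaterial difference). Your final paragraph about extending the solution curves is unnecessary: by Lemma~\ref{lemma:eigenfunctionscrit}, any pair $a>b>0$ satisfying an eigenvalue condition already yields a genuine eigenfunction of that symmetry, so each common zero is automatically a point of both branches.
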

\begin{proof}
The odd and even eigenvalue curves intersect when $a$ and $b$ both satisfy eigenvalue conditions \eqref{eqn:critevalCondOdd} and \eqref{eqn:critevalCondEven} simultaneously. By adding the two equations, we obtain
\[
2(3a^2-b^2)\frac{\sin(2a)}{2a} = 0.
\]
Hence the equation gives us (i) $3a^2=b^2$ or (ii) $\sin(2a) = 0$. Case (i) is not possible since $a > b$ by properties of $F_3$. When (ii) is satisfied, we have $a = m\pi/2$ for some $m\in \mathbb{N}$ (see Figure~\ref{fig:critEvalcondition}), and furthermore,
\[
0 = (3b^2 - a^2)\frac{\sinh(2b)}{2b}. 
\]
But since $\sinh(2b)/2b >0$ for all $b>0$, the above reduce to $3b^2=a^2$. The $(a,b)$ values at the points of intersection correspond to
\[
\tau = -\frac{4}{3}a^2,\quad \mu = -\frac{16}{9}a^4, \qquad\text{where $a=\frac{m}{2}\pi$, $m\in \mathbb{N}$}.
\]
These $(\tau,\mu)$ all lie on the parabola $\mu=-\tau^2$, as desired (see Figure~\ref{fig:critintersection}).
\end{proof}

  \begin{figure}[H]
    \begin{center}
\includegraphics[scale=0.6]{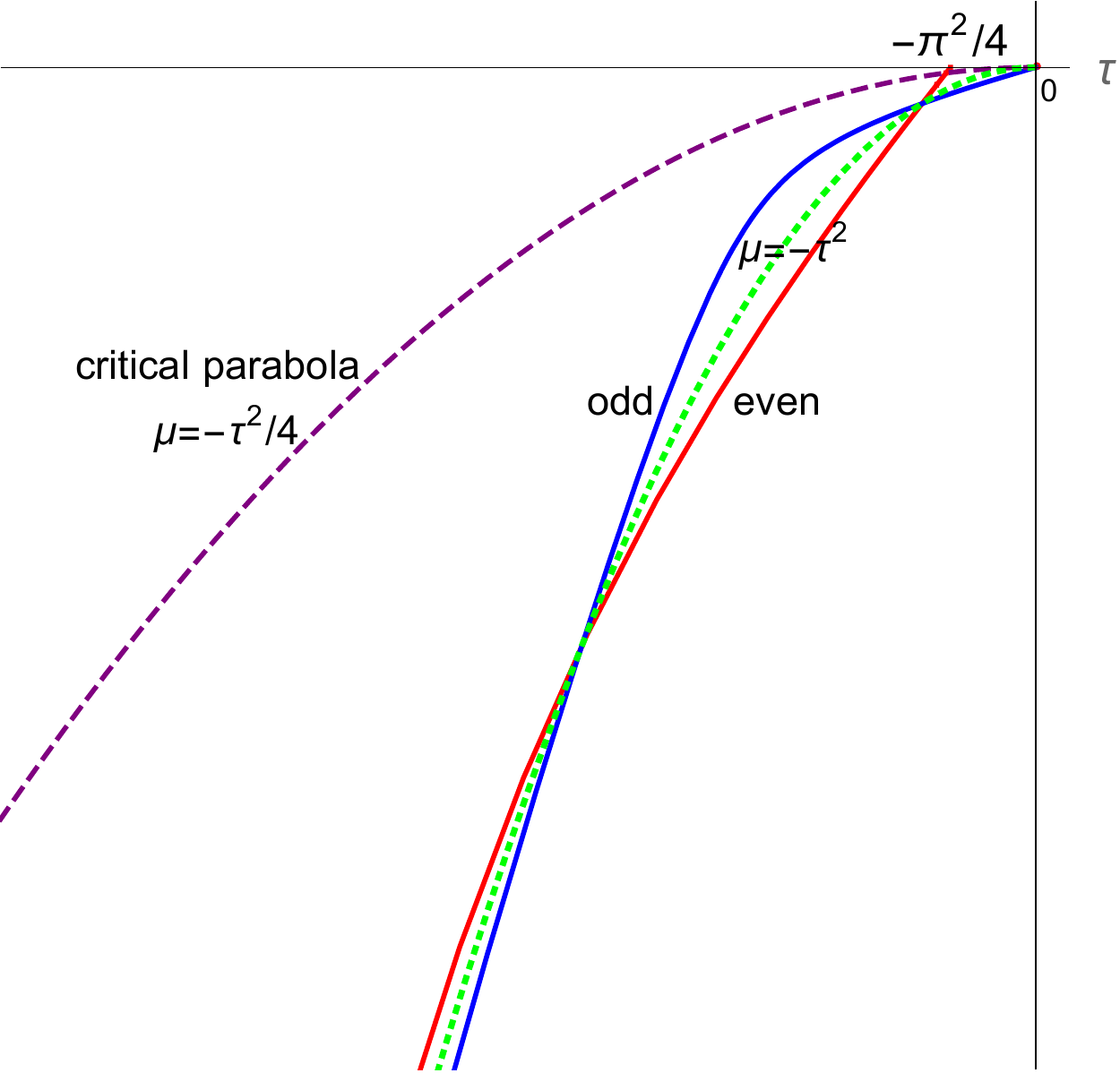}
\caption{\label{fig:critintersection} Sub-parabolic region: the odd and even eigenvalue curves intersect on the parabola $\mu = -\tau^2$ at the points $(\tau, \mu) = (-m^2\pi^2/3, -m^4\pi^4/9)$. See Proposition~\ref{prop:critintersection} in Section~\ref{sec:critproperty}.}
    \end{center}
    \end{figure}

\section*{Acknowledgments}
Chung's research was supported by the Department of Mathematics, University of Illinois.

\end{document}